\documentclass[10pt,a4paper]{article}
\usepackage{todonotes,cancel}
\usepackage[normalem]{ulem}
\usepackage[latin,english]{babel}
\usepackage[utf8]{inputenc}
\usepackage[T1]{fontenc}
\usepackage{eufrak}
\usepackage{geometry}
\usepackage{amsmath}
\usepackage{bbm}
\usepackage{amssymb}
\usepackage{amsbsy}
\usepackage{amsthm}
\usepackage{makeidx}
\usepackage{mathtools}
\usepackage{mathabx, mathrsfs, dsfont}
\usepackage{cases}
\usepackage{braket}
\usepackage[toc,page]{appendix}
\usepackage{dirtytalk}
\usepackage{pdfsync}
\usepackage{hyperref}
\usepackage{graphicx}
\usepackage{epstopdf}
\usepackage{todonotes}

\usepackage{graphicx}

\usepackage{fancyhdr}

\usepackage{math}
\usepackage{cite}
\usepackage{color}
\usepackage{subcaption}
\mathtoolsset{showonlyrefs}

\newcommand{\balpha}{{\boldsymbol{\alpha}}}
\newcommand{\bbeta}{{\boldsymbol{\beta}}}
\newcommand{\btheta}{\boldsymbol{\theta}}
\newcommand{\bgamma}{{\boldsymbol{\gamma}}}

\renewcommand{\T}{\mathbb{T}}
\renewcommand{\meanval}[1]{\bE\left[#1\right]}

\renewcommand{\mod}{{\,{\rm mod}\,}}

\newcommand{\bzeta}{\boldsymbol{\zeta}}

\newtheorem{lemma}{Lemma}[section]
\newtheorem{theorem}[lemma]{Theorem}
\newtheorem{proposition}[lemma]{Proposition}

\newtheorem{remark}[lemma]{Remark}

\newtheorem{definition}[lemma]{Definition}

\numberwithin{equation}{section}

\geometry{margin=1in}
\title{Generalized Gibbs ensemble of  the Ablowitz-Ladik lattice, Circular $\beta$-ensemble  and double confluent  Heun equation}
\author{
	T. Grava
	\footnote{
		International School for Advanced Studies (SISSA), Via Bonomea 265,   34136, Trieste, Italy, INFN sezione di Trieste and School of Mathematics,  University of Bristol,  Fry Building,  BS8 1UG,   UK\newline
		\textit{Email: } \texttt{grava@sissa.it} 
	}  ,
	G. Mazzuca
    \footnote{Department of Mathematics, The Royal Institute of Technology, Lindstedts\"agen 25, 114 28, Stockholm, Sweden, International School for Advanced Studies (SISSA), Via Bonomea 265,   34136, Trieste, Italy \newline
		\textit{Email:} \texttt{mazzuca@kth.se}}
    }

\date{\today}

\begin{document}
	\maketitle
	\begin{abstract}
	We consider the discrete    defocusing nonlinear Schr\"odinger equation in its integrable version, which is called defocusing Ablowitz-Ladik lattice.
		We   consider   periodic boundary conditions with period $N$ and initial data sample according to the Generalized Gibbs ensemble. 
		In this setting, the  Lax matrix  of the Ablowitz-Ladik  lattice is a   random CMV-periodic matrix  and it  is related to the  Killip-Nenciu Circular $\beta$-ensemble at high-temperature.
		We  obtain the  generalized free energy of the Ablowitz-Ladik lattice  and  the  density of states of the random Lax matrix  by  establishing  a mapping to the
		one-dimensional log-gas. For the Gibbs measure related to the Hamiltonian of the Ablowitz-Ladik flow, we obtain the density of states via a particular solution of the double-confluent Heun equation.
				 	\end{abstract}

\section{Introduction}
The defocusing Ablowitz--Ladik (AL) lattice  for the complex functions $\alpha_j(t)$, $j\in\Z$ and $t\in\R$, is   the system of nonlinear equations	
\begin{equation}
	\label{eq:AL}
	i	\dot{\alpha}_j =-(\alpha_{j+1}+\alpha_{j-1}-2\alpha_j)+\vert \alpha_j\vert ^2(\alpha_{j-1}+\alpha_{j+1})\,,
\end{equation}
where $\dot{\alpha}_j =\dfrac{d \alpha_j}{dt}$. We assume $N$-periodic boundary conditions $\alpha_{j+N}=\alpha_j$, for all $j\in \Z$. 
The AL lattice  was introduced by Ablowitz and Ladik  \cite{Ablowitz1974,Ablowitz1975} as the spatial  integrable discretization of the defocusing  cubic  nonlinear Schr\"odinger Equation (NLS) for the complex function 
$\psi(x,t)$, $x\in S^1$ and $t\in\R$:
\begin{equation}
	i \partial_t \psi(x,t) = -\partial^2_{x} \psi(x,t) +  2\lvert \psi(x,t) \rvert^2 \psi(x,t).
\end{equation}
The cubic NLS equation  was proved to be integrable by Zakharov and Shabat \cite{ZS}.

It is straightforward to verify that the quantity
\begin{equation}
	\label{K0}
	K^{(0)}:=\prod_{j=1}^{N}\left(1-\vert \alpha_j\vert ^2\right)\,, 
\end{equation}
is a constant of motion  for the AL lattice, namely $\dfrac{d}{dt}K^{(0)}=0$. This implies that if $\vert \alpha_j(0)\vert <1$ for all $j\in\Z$,  then $\vert \alpha_j(t)\vert <1$ for all $t>0$.
We  chose  the $N$-dimensional disc  $\D^N$  as  the phase space of the AL lattice,  here  $\D=\{z\in\C\,\vert \,\vert z\vert <1\}$. 
On $ \D^N$ we introduce the symplectic form  \cite{Ercolani,GekhtmanNenciu}
\begin{equation}
	\label{eq:symplectic_form}
	\omega = i\sum_{j=1}^{N}\frac{1}{\rho_j^2}\di \alpha_j\wedge\di \wo \alpha_j\,,\quad \rho_j=\sqrt{1-\vert \alpha_j\vert ^2}.	\end{equation} 
The corresponding Poisson bracket is defined for functions $f,g \in \cC^\infty(\D^N)$ as 
\begin{equation}
	\label{eq:poisson_bracket}
	\begin{split}
		\{f,g\} & = i \sum_{j=1}^{N}\rho_j ^2\left(\frac{\partial f}{\partial \wo \alpha_j}\frac{\partial g}{\partial \alpha_j} - \frac{\partial f}{\partial \alpha_j}\frac{\partial g}{\partial \wo \alpha_j}\right)\,.		\end{split}
\end{equation} 
The phase shift $\alpha_j(t)\to e^{-2 i t}\alpha_j(t)$ transforms the AL lattice
into the equation
\begin{equation}
	\label{AL2}
	i	\dot{\alpha}_j =-\rho_j^2(\alpha_{j+1}+\alpha_{j-1}),\quad  \rho_j=\sqrt{1-\vert \alpha_j\vert ^2},
\end{equation}
which we call the {\it reduced}  AL  equation.
We remark that  the quantity $ - 2\ln(K^{(0)})$ is the generator of the shift $\alpha_j(t)\to e^{-2 i t}\alpha_j(t)$, while $H_1 =  K^{(1)} + \wo{K^{(1)}}$ with 
\begin{equation}
	\label{K1}
	K^{(1)}:=- \sum_{j=1}^{N}\alpha_j\overline{\alpha}_{j+1},
\end{equation}
generates the  flow \eqref{AL2}.
The AL   equations \eqref{eq:AL}  have the Hamiltonian structure  
\begin{equation}
	\label{eq:hamiltonian}
	\dot{\alpha}_j =\{\al_j,H_{AL}\},\quad 	H_{AL}(\alpha_j,\overline{\al}_j) =  - 2\ln(K^{(0)}) + K^{(1)} + \overline{ K^{(1)}}.
\end{equation}

\paragraph{Integrability.}
As we have already said, the AL lattice was discovered  by Ablowitz and Ladik by discretizing the $2\times 2$ Zakharov-Shabat Lax pair \cite{Ablowitz1974} of the cubic  nonlinear Schr\"odinger equation.   For a comprehensive  review see \cite{Ablowitz2003}.  The integrability 
of the Ablowitz--Ladik system  has also been proved by constructing a bi-Hamiltonian structure \cite{Ercolani,KillipVisan}. A techniques to calculate the $\tau$-function correlators has been  introduced in \cite{Cafasso22}. 

Using the connection between orthogonal polynomials on the unit circle and the AL lattice,  Nenciu and Simon \cite{Nenciu2005,Simon2005}  constructed a new Lax pair for the AL lattice  { that sometimes is referred to as the big Lax pair and which put the AL equation on the same foot as the Toda lattice.}
The  link  between orthogonal and  biorthogonal polynomials on the unit circle and the  Ablowitz–Ladik hierarchy   (see also  \cite{AV}, \cite{KN2007})  is
the analogue of the celebrated link between the Toda hierarchy and orthogonal polynomials on
the real line  (see e.g. \cite{Deift}).  This link was also generalized  to the non-commutative case \cite{Cafasso} (see also \cite{Cafasso21}). 
Generalization of this construction to other integrable equations has been considered in \cite{Nijhoff2}.

{To construct the big Lax pair, }we  follow \cite{Nenciu2005,Simon2005}   and  we double the size of the chain according to the periodic boundary conditions, thus we consider a chain of $2N$ particles $\alpha_1, \ldots, \alpha_{2N}$ such that $\alpha_j = \alpha_{j+N}$ for $j=1,\ldots, N$.
Define the  $2\times2$ unitary matrix  $\Xi_j$ 		
\begin{equation}
	\Xi_j = \begin{pmatrix}
		\wo \alpha_j & \rho_j \\
		\rho_j & -\alpha_j
	\end{pmatrix}\, ,\quad j=1,\dots, 2N\, ,
\end{equation}
and the $2N\times 2N$ matrices
\begin{equation}
	\cM= \begin{pmatrix}
		-\alpha_{2N}&&&&& \rho_{2N} \\
		& \Xi_2 \\
		&& \Xi_4 \\
		&&& \ddots \\
		&&&&\Xi_{2N-2}\\
		\rho_{2N} &&&&& \wo \alpha_{2N}
	\end{pmatrix}\, ,\qquad 
	\cL = \begin{pmatrix}
		\Xi_{1} \\
		& \Xi_3 \\
		&& \ddots \\
		&&&\Xi_{2N-1}
	\end{pmatrix} \,.
\end{equation}
Now  let us define the  unitary  Lax matrix 
\begin{equation}
	\label{eq:Lax_matrix}
	\cE  = \cL \cM\,,
\end{equation}
that has the structure  of a $5$-band diagonal matrix 
\[
\begin{pmatrix}
	*&*&*&&&&&&&*\\
	*&*&*&&&&&&&*\\
	&*&*&*&*&&&&&\\
	&*&*&*&*&&&&&\\
	&&&&&\ddots&\ddots&&&\\
	&&&&&&*&*&*&*\\
	&&&&&&*&*&*&*\\
	*&&&&&&&*&*&*\\
	*&&&&&&&*&*&*\\
\end{pmatrix}\,.
\]
The matrix $\cE$ is a periodic  CMV  matrix 
(after Cantero Morales and Velsquez  \cite{Cantero2005}).
The $N$-periodic reduced AL equation \eqref{AL2}   is equivalent to  the following Lax equation for the matrix $\cE$:	
\begin{equation}
	\label{eq:Lax_pair}
	\dot \cE = i\left[\cE, \cE^+ + (\cE^+)^\dagger\right]\,,
\end{equation}	
where $^\dagger$ stands for hermitian conjugate and 
\begin{equation}
	\cE^+_{j,k} = \begin{cases}
		\frac{1}{2} \cE_{j,j} \quad j = k \\
		\cE_{j,k} \quad k = j + 1 \, \mod \, 2N \, \mbox{or} \, k = j + 2 \, \mod \, 2N  \\
		0 \quad \mbox{otherwise}.
	\end{cases}
\end{equation}	
We observe that $(\cE^+)^{\dagger}+(\cE^{\dagger})^+=\cE^{\dagger}$ and $[\cE,\cE^{\dagger}]=0$  because $\cE$  is unitary. Therefore, one can write the equation \eqref{eq:Lax_pair}
in the equivalent form
\begin{equation}
	\label{eq:Lax_pair1}
	\dot \cE = i\left[\cE, \cE^+ -(\cE^\dagger)^+\right]\,.
\end{equation}
The formulation \eqref{eq:Lax_pair} implies that the quantities 
\begin{equation}
	\label{eq:constant_motion}
	K^{(\ell)}=\frac{ \mbox{Tr}\left(\cE^\ell\right)}{2},\quad \ell=1,\dots, N-1,
\end{equation}
are constants of motion for the defocusing AL system. For example
\[
K^{(1)}=-\sum_{j=1}^N\alpha_j\overline{\alpha}_{j+1},\quad 	K^{(2)}=\sum_{j=1}^N[(\alpha_j\overline{\alpha}_{j+1})^2-2\alpha_j\overline{\alpha}_{j+2}\rho_{j+1}^2]\,.
\]
Furthermore,  $K^{(0)},	K^{(1)},\dots,	K^{(N-1)}$   are functionally independent and in  involution, showing that the $N$-periodic  AL system is 	integrable  \cite{Nenciu2005,Ablowitz1974,Ablowitz2003}.

\begin{remark}
	The quantity $2\Re(K^{(1)})$  generates the reduces AL equation \eqref{AL2}, while the quantity $-2\Im(K^{(1)}) $
	generates the flow
	\[
	\dot{\alpha}_j =(1-\vert \alpha_j\vert ^2)(\alpha_{j+1}-\alpha_j),
	\]
	which is called Schur flow. The Schur flow emerges in \cite{Ablowitz1975} as a spatial discretization of the 
	defocusing  modified   Korteweg--de Vries equation
	\[
	\partial_t f-6f\partial_xf+\partial_x^3 f=0.
	\]
	For the integration of the Schur flow and its relation to orthogonal polynomial on the unit circle see \cite{Golinskii,Simon11}.
\end{remark}
\paragraph{Generalized Gibbs Ensemble for the Ablowitz--Ladik Lattice.}
The symplectic form $\omega$ in \eqref{eq:symplectic_form} induces  on $\D^N$ the  volume form $\di\mbox{vol}=\dfrac{1}{K^{(0)}}\di^2 \balpha$,  with  $\di^2 \balpha=\prod_{j=1}^{N}(i\di \alpha_j\wedge\di \wo \alpha_j\,) $.  We observe that $\int_{\overline{\D}^N}\di\mbox{vol}=\infty$, however,  we can define the Gibbs measure with respect to the  Hamiltonian $H_{AL}$ in \eqref{eq:hamiltonian}:
\begin{equation}
	\label{Gibbs}
	\frac{1}{Z_{\beta}}e^{-\frac{\beta}{2} H_{AL}}\di\mbox{vol} = \frac{1}{Z_{\beta}}e^{\beta \Re(K^{(1)})}\prod_{j=1}^{N}(1-\vert \alpha_j\vert ^2)^{\beta-1}\di^2 \balpha,\quad\beta>0,
\end{equation}
where $Z_{\beta}=\int_{\overline{\D}^N}e^{\beta \Re(K^{(1)})}\prod_{j=1}^{N}(1-\vert \alpha_j\vert ^2)^{\beta-1}\di^2 \balpha<\infty$ is the normalizing constant.
The above probability measure is clearly invariant under the Hamiltonian flow  $\al_j(0)\to\al_j(t)$  associated to the Ablowitz--Ladik equation  \eqref{eq:AL}.

Since  the Ablowitz--Ladik lattice posses several conserved quantities \eqref{eq:constant_motion},   one  can introduce a  Generalized Gibbs Ensemble  on the phase space $\D^N$ in the following way.  Fix $\N \ni \kappa \leq N-1$ and let us define
\begin{equation}
	\label{eq:potential}
	V(z) =\sum_{m=1}^{\kappa}\eta_m \Re(z^m)\, ,
\end{equation}
where $\eta_m \in \R$ are called chemical potentials.   Then 
$$\mbox{Tr}\left(V\left(\cE\right)\right)=\sum_{m=1}^{\kappa} \eta_m (K^{(m)}+\overline{K^{(m)}} ),$$
where $K^{(m)}$ are 	the AL conserved quantities \eqref{eq:constant_motion}.
The  finite volume Generalized Gibbs measure can be written as:	
\begin{equation}
	\label{eq:GGE}
	\di \mathbb{P}_{AL}(\al_1, \ldots, \al_{N})= \frac{1}{Z^{AL}_N(V,\beta)}\prod_{j=1}^{N} \left(1-\vert \alpha_j\vert ^2\right)^{\beta-1}\exp\left(-\mbox{Tr}\left(V\left(\cE\right)\right)\right)  \di^2 \balpha\, ,
\end{equation} 
where  $Z^{AL}_N(V,\beta)$ is the partition function of the system:

\begin{equation}
	\label{eq:partition_AL}
	Z^{AL}_N(V,\beta)= \int_{\D^N} \prod_{j=1}^{N} \left(1-\vert \alpha_j\vert ^2\right)^{\beta-1}\exp\left(-\mbox{Tr}\left(V\left(\cE\right)\right)\right)  \di^2 \balpha\, .
\end{equation}
Choosing the initial   data of the Ablowitz--Ladik lattice according to the Generalized Gibbs measure \eqref{eq:GGE}, the Lax matrix $\cE$  turns into a random matrix.  
In \cite{SpohnMendl} Mendl and Sphon study the  dynamic of the Ablowitz--Ladik lattice at non-zero temperature. They study numerically   correlation functions and in particular,
introducing the density  $\delta_j=\Re(\alpha_{j+1}\overline{\alpha}_j)$, they study  the density-density correlation function
\[
\meanval{\delta_j(t)\delta_1(0)}-\meanval{\delta_j(t)}\meanval{\delta_1(0)}\,,
\]
where $\meanval{\cdot}$ is the expectation with respect to Gibbs measure \eqref{Gibbs}. They showed numerically that  density-density time correlations in thermal equilibrium have  symmetrically  located    peaks,  which  travel  in opposite directions at constant speed,  broaden ballistically  and decay as $1/t^{\gamma}$ when  $t\to\infty$, where  the scaling exponent $\gamma$ is approximately equal to one. This  behaviour is believed to be typical of integrable nonlinear systems.

More quantitative results have been obtained for  linear (integrable) systems and for the Toda lattice. It was shown in  \cite{GKMM} that the fastest peaks of the correlation functions of  harmonic oscillators with short range interactions  have  a Airy type scaling. 
Regarding nonlinear integrable systems in \cite{Spohn2019a}  Spohn  was able to connect   the  Gibbs ensemble of the Toda lattice to the  Dumitriu-Edelman $\beta$-ensemble \cite{Dumitriu2002}. In this way, the generalized Gibbs free energy of the Toda chain turns out to be related to the $\beta$-ensembles of random matrix theory in the mean-field regime \cite{Duy2018,Allez2012}.  The behaviour of the correlation functions  of the Toda chains has been derived by applying the theory of generalized  hydrodynamic \cite{Spohn_2021, Doyon_notes}. We mention also the recent work \cite{guionnet2021large}, where the authors derive a large deviation principle for the mean density of states for the Generalized Gibbs measure of the Toda lattice.

\section{Statement of the results}
In this manuscript we derive the mean density of states $\mu^\beta_{AL}$ of the random Lax matrix $\mathcal{E}$  sampled according to  generalized Gibbs measure \eqref{eq:GGE} and we determine the free energy of the AL generalized Gibbs ensemble $$F_{AL}(V,\beta)=\lim_{N\to\infty}\frac{1}{N} \log	Z^{AL}_N(V,\beta). $$   This is achieved 
by connecting the generalized Gibbs ensemble of the Ablowitz--Ladik lattice  to  the  Killip-Nenciu \cite{Killip2004} matrix Circular $\beta$-ensemble at high-temperature investigated  by Hardy and Lambert \cite{Hardy2020}. 
Further connections between discrete integrable systems with  Gibbs measure initial data and classes of random matrices has been explored in \cite{GGGM}.  For connections between integrable PDEs and random objects see \cite{Bothner}.

Let  $\mathcal{M}(\T)$  be the space of probability measures on the torus $\T=[-\pi,\pi]$ and for $\mu\in\mathcal{M}(\T)$  let us consider the functional
\begin{equation}
	\label{functional_intro}
	\begin{split}
		\cF^{(V,\beta)}(\mu) & = 2\int\limits_{\T} V(\theta) \mu(\theta) \di \theta - \beta \int\int\limits_{\hskip-0.2cm\T\times \T} \ln\sin\left(\frac{\vert \theta-\phi\vert }{2}\right)\mu(\theta)\mu(\phi)\di \theta\di \phi  \\ & + \int\limits_{\T}\ln\left(\mu(\theta)\right) \mu(\theta)\di \theta+\ln(2\pi)\,.
	\end{split}
\end{equation}		
\begin{remark}
	Here and below, we make an abuse of notation by    denoting the potential $V(z)=V(e^{i\theta})$   simply by $V(\theta)$.
\end{remark}

For sufficiently regular potential $V(\theta)$, the functional  \eqref{functional_intro} has a unique minimizer  $\mu_{HT}^{\beta}(\di\theta)=\mu_{HT}^{\beta}(\theta)\di\theta$, \cite{SaffBook}, that describes the density of states of the Circular $\beta$-ensemble at high-temperature \cite{Hardy2020}.   For finite $\beta$ and smooth potentials $V(\theta)$,  it has been shown 
by Hardy and Lambert in \cite{Hardy2020} that  the minimizer $\mu_{HT}^{\beta}(\di\theta)$ has a smooth density and its support is the whole torus $\T$. 

\begin{theorem}[First Main theorem]
	\label{THM:RELATION}
	Consider $\beta >0$ and a  smooth  potential $V$ as in \eqref{eq:potential}  on the unit circle $\T$. The mean density of states $\mu_{AL}^\beta(\di \theta):=\mu_{AL}^\beta(\theta)\di \theta$ of the Ablowitz--Ladik  Lax  matrix  $\cE$ in  \eqref{eq:Lax_matrix}   endowed with the probability \eqref{eq:GGE}  is   absolutely continuous with respect to the Lebesgue measure and takes the form  
	\begin{equation}
		\label{eq:ass_cont}
		\mu_{AL}^\beta(\theta) = \partial_\beta \left(\beta \mu_{HT}^\beta(\theta)\right) \; a.e.,
	\end{equation}
	where $\mu_{HT}^\beta$ is the  unique  minimizer of the functional \eqref{functional_intro}.
\end{theorem}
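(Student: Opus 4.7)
The plan is to reduce the claimed measure identity to a free energy identity, and then establish the latter via the Killip--Nenciu parametrization of the circular $\beta$-ensemble together with the Hardy--Lambert variational principle \cite{Hardy2020}. Both $\mu_{AL}^\beta$ and $\mu_{HT}^\beta$ are symmetric under $\theta\mapsto-\theta$ (the former because the GGE \eqref{eq:GGE} is invariant under $\alpha_j\mapsto\bar\alpha_j$ and $\cE$ is unitary; the latter by the symmetry of \eqref{functional_intro} when $V$ is even), so it suffices to match the cosine Fourier moments. Using $\partial_{\eta_m}\mbox{Tr}(V(\cE))=\sum_j\cos(m\theta_j)$ together with the definition of the mean density of states yields
\[
-\partial_{\eta_m}F_{AL}(V,\beta)=2\int_{\T}\cos(m\theta)\,\mu_{AL}^\beta(\di\theta),\qquad F_{AL}(V,\beta):=\lim_{N\to\infty}\tfrac{1}{N}\log Z_N^{AL}(V,\beta),
\]
while by Hardy--Lambert and the envelope theorem applied at the minimizer $\mu_{HT}^\beta$,
\[
-\partial_{\eta_m}F_{HT}(V,\beta)=2\int_{\T}\cos(m\theta)\,\mu_{HT}^\beta(\di\theta),\qquad F_{HT}(V,\beta):=-\min_{\mu}\cF^{(V,\beta)}(\mu)+\mbox{const}.
\]
Hence the theorem is equivalent, after integrating in the chemical potentials, to the free energy identity $F_{AL}(V,\beta)=\partial_\beta\bigl(\beta F_{HT}(V,\beta)\bigr)+C(\beta)$ with $C$ independent of $V$.

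To establish this free energy identity, I would employ the Killip--Nenciu representation of $Z_N^{HT}(V,\beta)$ at inverse temperature $\beta_N=2\beta/N$: the eigenvalue measure of the associated non-periodic $N\times N$ CMV matrix is the pushforward, via the CMV map, of the product measure $\prod_{j=0}^{N-2}(1-|\alpha_j|^2)^{\beta(N-j-1)/N-1}\,\di^2\alpha_j$ on $\D^{N-1}$, tensored with the uniform distribution of $\alpha_{N-1}$ on $\partial\D$. Contrasting with the AL weight $\prod_{j=1}^N(1-|\alpha_j|^2)^{\beta-1}\,\di^2\alpha_j$, which is \emph{constant in $j$}, and using that the density of states of a random CMV matrix is asymptotically insensitive to whether the boundary conditions are periodic (AL) or free (Killip--Nenciu), the discrepancy $\prod_j(1-|\alpha_j|^2)^{\beta j/N}$ between the two weight profiles should---after integration against a large-deviations estimate for the empirical Verblunsky distribution---produce in the thermodynamic limit precisely the action $\partial_\beta(\beta\,\cdot\,)$ on $F_{HT}$.

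The main obstacle is the reconciliation of these two weight profiles: AL carries the constant exponent $\beta-1$ in $j$, while Killip--Nenciu has an exponent linearly decaying in $j/N$. Overcoming this requires a quantitative concentration (or LDP) estimate for the empirical measure of $\{|\alpha_j|^2\}_j$ under the AL GGE, where the Verblunsky coefficients are coupled only through the short-range term $e^{-\mbox{Tr}(V(\cE))}$, and a careful tracking of how the $j$-dependent KN weight profile, upon thermodynamic averaging, produces the $\beta$-derivative on the right-hand side. Once the free energy identity is in hand, absolute continuity and smoothness of $\mu_{AL}^\beta$ follow from the known smoothness of $\mu_{HT}^\beta$ \cite{Hardy2020} together with smoothness of the $\beta$-dependence of the minimizer.
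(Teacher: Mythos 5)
Your overall skeleton matches the paper's: reduce the measure identity to matching moments, express the moments as derivatives of the free energies in the chemical potentials, and thereby reduce everything to the free energy identity $F_{AL}(V,\beta)=\partial_\beta\bigl(\beta F_{HT}(V,\beta)\bigr)+\mathrm{const}$ (this is precisely Proposition~\ref{PROP:MOMENT_RELATION}). You also correctly identify the crux: the Killip--Nenciu weights carry exponents $\beta(1-j/N)-1$ decaying linearly in $j$, while the AL Gibbs weights carry the constant exponent $\beta-1$, and this mismatch is the source of the operator $\partial_\beta(\beta\,\cdot\,)$.

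However, the central step is not actually carried out. You assert that the discrepancy $\prod_j(1-\vert\alpha_j\vert^2)^{\beta j/N}$ "should, after integration against a large-deviations estimate for the empirical Verblunsky distribution, produce precisely the action $\partial_\beta(\beta\,\cdot\,)$", but no such LDP is stated or proved, and the heuristic as phrased would not suffice even granted one: the exponent couples the macroscopic position $j/N$ to the value $\vert\alpha_j\vert^2$, so one needs control of the joint empirical field (a local-equilibrium statement), not of the empirical measure of $\{\vert\alpha_j\vert^2\}$ alone. The mechanism that actually works, and that the paper makes rigorous via transfer operators (Propositions in Appendix~\ref{appendix:lemma_transfer}), is: $F_{AL}(V,\beta)=-\tfrac1K\ln\lambda_1(V,\boldsymbol\beta)$ for the top eigenvalue of a single transfer operator, while the slowly varying KN exponents force $F_{HT}(V,\beta)=-\tfrac1K\int_0^1\ln\lambda_1(V,\boldsymbol\beta x)\,\di x=\tfrac1\beta\int_0^\beta F_{AL}(V,u)\,\di u$ up to the additive $\ln 2$; the identity then follows by differentiating the $\beta$-average. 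Establishing that averaging formula (via Jentzsch's theorem for positivity and simplicity of $\lambda_1$, Kato perturbation theory for its analyticity in the exponent profile, and the trace estimates controlling the $j$-dependent operator product) is the real content of the proof and is missing from your proposal. Two smaller omissions: you need a quantitative statement that the moments of $\mu_{HT}^\beta$ are (Lipschitz) differentiable in $\beta$ for $\partial_\beta(\beta\mu_{HT}^\beta)$ to be meaningful (the paper's Lemma~\ref{Lemma2.5} and Remark~\ref{rem:lip_moments}), and you need to justify interchanging $\partial_{\eta_m}$ with $\partial_\beta$ when passing from the free energy identity back to the moment identities.
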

To prove the above theorem  we derive a relation (see Proposition~\ref{PROP:MOMENT_RELATION}) 
between the free energy of the $\beta$-ensembles at high-temperature, namely 	the minimum value of the minimizer \eqref{functional_intro} 
$$F_{HT}(V,\beta):=
\cF^{(V,\beta)}(\mu^\beta_{HT}),$$  and the  free energy $F_{AL}(V,\beta)$ of the AL lattice:
\[
F_{AL}(V,\beta)= \partial_\beta \left(\beta F_{HT}(V,\beta)\right) + \ln(2).
\]
Such relation is obtained via transfer operator techniques.

The particular case $V(\theta)=2\eta \cos\theta$  corresponds to the  free energy  associated to the  AL equation \eqref{eq:AL},  and we show that  the  minimizer  of the functional 	\eqref{functional_intro} is obtained via a particular solution of the Double Confluent Heun (DCH)  equation.
\begin{theorem}[Second main theorem]
	\label{THM:MEAN_DENSITY}
	Fix $\beta >0$ and let $V(\theta)=\eta\cos\theta$, where $\eta$ is a real parameter. There exists  $\varepsilon >0$ such that for all $\eta \in (-\varepsilon,\varepsilon)$, the minimizer
	$\mu_{HT}^{\beta}(\di \theta)=\mu^{\beta}_{HT}(\theta)\di \theta$ of the functional \eqref{functional_intro} takes the form
	
	\begin{equation}
		\label{mu_intro} 
		\mu^{\beta}_{HT}(\theta)= 
		\frac{1}{2\pi} + \frac{1}{\pi \beta }\Re\left( \frac{zv'(z)}{v(z)}\right)_{\big\vert z=e^{i\theta}}\, ,
	\end{equation}
	where $v(z)$ is the unique solution (up to a multiplicative non-zero constant)  of the Double Confluent Heun (DCH) equation
	\begin{equation}
		\label{eq:heun}
		z^2 v''(z) + \left(-\eta + z (\beta +1) + \eta z^2\right)v'(z) +\eta\beta (z +\lambda) v(z) =0\,
	\end{equation}
	analytic   for  $\vert z\vert \leq r$ with  $r\geq 1$. Such solution is differentiable in the parameter $\eta$ and $\beta$. The  parameter 	$\lambda=\lambda(\eta,\beta)$ in  \eqref{eq:heun}   is determined for   $ \eta \in (-\varepsilon,\varepsilon)$ by the solution of the equation
	\begin{equation}
		\lambda (R_1)_{11}+  \frac{\eta}{\beta +1}(R_1)_{21}=0,\quad 		
	\end{equation}
	with   the condition $\lambda(\eta=0,\beta)=0$. In the above expression  $(R_1)_{jk}$ is the $jk$ entry of the matrix $R_1$ which is defined by the infinite product
	\[
	R_1 =M_1M_2\dots M_k\dots, \quad  M_k = \begin{pmatrix}
		1 + \frac{\lambda\beta \eta}{k(k+\beta)} & \frac{\eta^2}{k(k+\beta+1)}\\
		1 & 0
	\end{pmatrix}\,.
	\]
\end{theorem}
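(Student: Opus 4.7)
My plan is to recast the Euler--Lagrange equation for the minimizer of $\mathcal{F}^{(V,\beta)}$ as a complex-analytic identity on the unit circle, linearize it into the DCH equation, and then identify $\lambda$ by a Stokes-type selection at the irregular singular point $z=0$. First I would write down the EL condition and differentiate in $\theta$ to eliminate the Lagrange multiplier, obtaining
\[
\beta\,\mathrm{p.v.}\!\int_{\T}\!\cot\!\left(\tfrac{\theta-\phi}{2}\right)\mu^\beta_{HT}(\phi)\,\di\phi \;=\; 2V'(\theta) + (\log\mu^\beta_{HT})'(\theta).
\]
Since $V(\theta)=\eta\cos\theta$ is even, uniqueness of the minimizer (Hardy--Lambert) forces $\mu^\beta_{HT}$ to be even, so all Fourier coefficients $\hat\mu_k$ are real. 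Introduce the generating function $u(z):=\beta/2+\beta\sum_{k\geq 1}(2\pi\hat\mu_k)z^k$, analytic in a neighborhood of the closed unit disk (using real-analyticity of the equilibrium density for trigonometric-polynomial potentials), with real Taylor coefficients---so $\overline{u(z)}=u(1/z)$ on $|z|=1$---and $\Re u(e^{i\theta})=\pi\beta\mu^\beta_{HT}(\theta)$; this is exactly the announced form \eqref{mu_intro} once I set $u=\beta/2+zv'/v$.

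Next I would globalize. The Fourier identity $\mathrm{p.v.}\int\cot(\tfrac{\theta-\phi}{2})e^{ik\phi}\di\phi=-2\pi i\,\mathrm{sgn}(k)e^{ik\theta}$ lets me rewrite the singular integral above as $\tfrac{2}{\beta}\Im u(e^{i\theta})$, and a short rearrangement recasts the EL equation as
\[
\Phi(z)=\Phi(1/z)\ \text{on}\ |z|=1,\qquad \Phi(z):=u(z)^2+zu'(z)+\eta z\,u(z)-\eta u(z)/z.
\]
The Schwarz reflection $\overline{u(z)}=u(1/z)$ together with the identity theorem glue $\Phi$ (analytic in a neighborhood of the disk except for a simple pole at $z=0$ with residue $-\eta\beta/2$, arising from $-\eta u(0)/z$) to its mirror across the unit circle, producing a meromorphic function on $\mathbb{P}^1$ whose only singularities are simple poles at $0$ and $\infty$. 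Hence $\Phi(z)=-\eta\beta/(2z)+c-\eta\beta z/2$ for some real integration constant $c$. Substituting $u=\beta/2+zv'/v$ linearizes this Riccati-type equation into the DCH equation \eqref{eq:heun}, with $\lambda=\beta/(4\eta)-c/(\eta\beta)$ absorbing the integration constant; the scaling freedom $v\mapsto k v$ that leaves $zv'/v$ invariant accounts for the ``up to a multiplicative constant'' in the statement.

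Finally I would fix $\lambda$ by requiring $v$ to be analytic in $\{|z|\leq r\}$ with $r\geq 1$. This is a genuine Stokes condition because $z=0$ is an irregular singular point of DCH---a WKB balance gives the formal exponents $0$ and $-\eta/z$---and the Taylor coefficients of $v=\sum c_n z^n$ obey
\[
\eta(n+1)c_{n+1}=[n(n+\beta)+\eta\beta\lambda]c_n+\eta(n+\beta-1)c_{n-1},\qquad c_1=\beta\lambda c_0,
\]
whose dominant mode grows like $n!/\eta^n$; convergence of the series with radius $\geq 1$ selects the codimension-one recessive subspace at $n=\infty$. A careful rescaling $c_n\mapsto y_n$ (combined, if necessary, with a basis change) transforms the transfer at level $n$ into the matrix $M_n$ of the statement, and the estimate $M_n=\bigl(\begin{smallmatrix}1&0\\1&0\end{smallmatrix}\bigr)+O(n^{-2})$ shows by a telescoping argument that the infinite product $R_1=M_1M_2\cdots$ converges to a rank-one matrix whose first column is nontrivial. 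Tracking how the initial relation $c_1=\beta\lambda c_0$ propagates through the rescaling identifies recessivity at $n=\infty$ with the linear equation $\lambda(R_1)_{11}+\tfrac{\eta}{\beta+1}(R_1)_{21}=0$. At $\eta=0$ the recursion is trivial, $v\equiv 1$ (recovering the uniform density) and $\lambda=0$; since the $\lambda$-derivative of the selection equation equals $(R_1)_{11}=1+O(\eta)$, the implicit function theorem produces the unique smooth branch $\lambda(\eta,\beta)$ on $|\eta|<\varepsilon$, jointly differentiable in $\eta$ and $\beta$. The main obstacle is the Stokes analysis of this third step: pinning down the rescaling that yields exactly the matrices $M_n$, controlling the infinite product in spite of its rank-one limit, and proving that analyticity at $z=0$ is equivalent to the explicit linear relation in the entries of $R_1$---that is, solving the connection problem between the two irregular singularities of the DCH equation at $0$ and $\infty$.
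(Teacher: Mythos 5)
Your derivation of the DCH equation is correct and runs parallel to the paper's, though packaged differently: the paper works with the Cauchy transform $G(z)=\int_{S^1}\mu(w)\,\di w/(w-z)$, computes the jump $G_+^2-G_-^2$ and applies Sokhotski--Plemelj to land on the Riccati equation \eqref{eq:Fsquaredfinal}, in which $\lambda$ enters directly as the first moment \eqref{lambda}; you instead exploit the evenness of $\mu^\beta_{HT}$ to Schwarz-reflect a generating function across $\vert z\vert=1$ and invoke Liouville, with $\lambda$ appearing as an integration constant. Both routes give \eqref{eq:heun} after the logarithmic substitution (your $u=\beta/2+zv'/v$ versus the paper's \eqref{Ftov}), and your recurrence for the Taylor coefficients agrees with \eqref{eq:zero_recurrence}--\eqref{eq:recurrence_taylor}. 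One caveat: you assume at the outset that $u$ is analytic in a neighbourhood of the \emph{closed} disk, citing real-analyticity of the equilibrium density; the paper's Theorem~\ref{thm:lambert} only supplies $C^{m,1}$ regularity, so you should either justify the reflection with continuous boundary values (Painlev\'e/Morera) or, as the paper does, work with boundary values from inside only. Your implicit-function-theorem step for $\lambda$ at $\eta=0$ matches Lemma~\ref{LEM:UNIQUE}.

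The genuine gap is that the central technical content is announced but not proved — and you say so yourself. Specifically: (i) convergence of the infinite product $R_1=M_1M_2\cdots$ (the paper's Lemma~\ref{lemma_R}, proved by reducing to a scalar three-term recurrence and a Cauchy-sequence estimate); (ii) the fact that the explicit coefficients \eqref{eq:reconstruction}, $a_k=(-1)^k\frac{\eta^k}{k!(k+\beta)}\,(R_k)_{21}$, actually satisfy the recurrence \eqref{eq:recurrence_taylor} and define a series with infinite radius of convergence, and that imposing the $k=0$ relation is \emph{equivalent} to $\xi(\eta,\beta,\lambda)=0$ (Proposition~\ref{thm:analytic_confluent}); and (iii) uniqueness of the analytic solution up to a scalar, which you attribute to a ``codimension-one recessive subspace'' without proof and which the paper settles by an elementary Wronskian computation showing a second analytic solution would force $e^{\eta/z}z^{\beta+1}$ to be analytic at $0$. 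Note that the paper does not solve a Stokes/connection problem at all: it sidesteps the asymptotic analysis you anticipate by writing down the candidate recessive solution in closed form via the $R_k$ and verifying the recurrence algebraically from $R_k=M_kR_{k+1}$. Until steps (i)--(iii) are carried out, the selection equation $\lambda(R_1)_{11}+\tfrac{\eta}{\beta+1}(R_1)_{21}=0$ and hence the existence of the analytic $v$ remain unestablished, so the proposal as written does not yet prove the theorem.
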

We remark that the solution of the double confluent Heun equation has generically an essential singularity at $z=0$ and $z=\infty$,
and one needs to tune the \textit{accessory parameter}   $\lambda$ to obtain an analytic solution, for a review see \cite{Ronveaux}. In our derivation of  \eqref{eq:heun}  the parameter $\lambda$ coincides with the first moment of the measure $\mu(\theta)$, namely $\lambda=\int_{\T}\mu(\theta)e^{i\theta}d\theta$. It is  a transcendental function of $\beta$ and $\eta$ and  it   is related to  the Painlev\'e III equation \cite{FIKN,Lisovyy}.
\begin{remark}
	Under the change of variable
	\[
	v(z)=\exp\left(-\frac{\eta}{2}\left(z+\frac{1}{z}\right)\right)z^{-\frac{\beta+1}{2}}f(z)\,,
	\]
	the DCH equation \eqref{eq:heun} takes the form  of a Schr\"odinger equation 
	\[
	f''(z)+q(z)f(z)=0,
	\]
	with   potential  $q(z)$  singular at the origin
	\[ q(z)=\frac{1}{z^2}\left(
	\eta\frac{\beta-1}{2}\left(z+\frac{1}{z}\right)-\frac{\eta^2}{4}\frac{(z^2-1)^2}{z^2} -\frac{1}{4}(\beta^2+4\beta+3)+\eta\beta\lambda\right).
	\]
\end{remark}
\begin{remark}
	For the case $V=0$  it was shown in \cite{Hardy2020} that the minimizer of the functional \eqref{eq:functional} is the uniform measure on the circle, while for the case  $V(\theta)= \beta V(\theta) $
	and $\beta\to\infty$ the minimizer of \eqref{functional_intro} was considered in \cite{MM}. The particular case $  V(\theta)\to\beta\eta\cos\theta$ and $\beta\to\infty$  has first been considered by Gross--Witten  \cite{Gross_Witten} and Baik--Deift--Johansson \cite{BDJ}.
	The measure \eqref{mu_intro}  in Theorem~\ref{THM:MEAN_DENSITY} generalizes the result of  Gross and Witten  \cite{Gross_Witten} and Baik--Deift--Johansson \cite{BDJ}   to the high-temperature regime (see  Remark~\ref{rem_Deift}).
\end{remark}

This manuscript is organized as follows. In section~\ref{sec2} we introduce the Circular $\beta$ ensemble and its high-temperature limit.  Then we review results in the literature on Circular $\beta$ ensemble and we derive some technical results needed to prove our main theorems.
In section~\ref{sec3} we  prove our first main theorem,  namely Theorem~\ref{THM:RELATION} and  in section~\ref{sec4} we prove Theorem~\ref{THM:MEAN_DENSITY}.
Finally, the most technical parts of our arguments are deferred to the appendices.

\section{Circular $\beta$  Ensemble at high-temperature}
\label{sec2}
The  Circular  Ensemble  at temperature $\tilde{\beta}^{-1}$   is a system of  $N$ identical particles  on the  one-dimensional  torus $\T=[-\pi,\pi]$  with distribution 
\begin{equation}
	\label{eq:CBU_joint_density}
	\di \mathbb{P}_{\tilde{\beta}} (\theta_1, \ldots, \theta_{N}) = \frac{1}{Z^{C\tilde{\beta} E}_N} \prod_{j < \ell}\vert e^{i\theta_\ell} - e^{i\theta_j}\vert ^{\wt \beta} \di \btheta,\quad \di \btheta=\di \theta_1\dots\di\theta_N \, , \end{equation}
where   $Z_N^{C\tilde{\beta} E}>0$ is the  norming constant, or  partition function of the system.
For $\tilde{\beta}=1,2,4$   Dyson observed that the above measure corresponds to the   eigenvalue  distribution of circular orthogonal ensemble (COE), circular  unitary ensemble  (CUE) and circular symplectic ensemble  (CSE)  of random matrix ensembles  (see e.g. \cite{Mehta2004book, Forrester2010book}). 
For general $\tilde{\beta}>0$, Killip and Nenciu  proved that the Circular $\beta$  Ensemble  can be associated to the eigenvalue distribution   of a random
sparse matrix, the so-called   CMV matrix, after Cantero, Moral, Vel\'azquez \cite{Cantero2005}. To state their result, we need the following definition.

\begin{definition}
	\label{def:theta}
	A complex random variable $X$ with values on the unit disk $\D$ is $\Theta_\nu$-distributed ($\nu>1$) if
	\begin{equation}
		\meanval{f(X)} = \frac{\nu-1}{2\pi} \int_{\overline{\D}} f(z)(1-\vert z\vert ^2)^{\frac{\nu-3}{2}}\di^2z\, .
	\end{equation}
	for any  measurable function  $f\,:\,\D\to\C$. When $\nu = 1$,  $\Theta_1$ is the uniform distribution on the unit circle $S^1$. 
\end{definition}
We recall that for $\N\ni\nu\geq 2$,  such measure has the following geometrical interpretation:   if $\mathbf{u}=(u_1,u_2,\dots,u_{\nu+1})$ is chosen  at random according to the surface measure  on  the unit sphere $S^\nu$  in $\mathbb{R}^{\nu+1}$,   then $u_1+i u_2$ is $\Theta_\nu-$distributed.
We can now state the result of Killip-Nenciu.

\begin{theorem}[cf. \cite{Killip2004} Theorem 1]
	\label{thm:Killipenciu}
	Consider the block diagonal $N\times N$ matrices
	\begin{equation}
		\label{eq:LME}
		M = \mbox{diag}\left(\Xi_1,\Xi_3,\Xi_{5} \ldots,\right) \quad \mbox{ and } \quad L = \mbox{diag}\left(\Xi_{0},\Xi_2,\Xi_4, \ldots\right)\,,
	\end{equation}
	where the block $\Xi_j$, $j=1,\dots, N-1$, takes the form 
	\begin{equation}
		\label{eq:xi_def}
		\Xi_j = \begin{pmatrix}
			\wo \alpha_j & \rho_j \\
			\rho_j & -\alpha_j
		\end{pmatrix}\, ,\;\;\rho_j = \sqrt{1-\vert \alpha_j\vert ^2}, 
	\end{equation}	
	while $\Xi_{0} = (1)$ and $\Xi_{N} = (\wo \alpha_{N})$ are $1\times 1$ matrices. 
	Define the  $N \times N$ sparse matrix 
	\begin{equation}
		\label{E}
		E = LM,
	\end{equation}
	and suppose that  the entries $\alpha_j $ are independent complex random variables with $\alpha_j\sim \Theta_{\wt \beta(N-j) +1 }$ 	 for $1\leq j\leq N-1$ and  $\alpha_{N}$ is  uniformly distributed on the unit circle.
	Then the eigenvalues of $E$ are distributed according to the Circular  Ensemble  \eqref{eq:CBU_joint_density} at temperature $\tilde{\beta}^{-1}$.

\end{theorem}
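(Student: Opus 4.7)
The plan is to use a spectral-theoretic approach based on the CMV representation of unitary matrices. Since $L$ and $M$ are block-unitary, so is $E=LM$, and a direct inspection of the block pattern shows that (almost surely in the joint law of the $\alpha_j$) the vector $e_1$ is cyclic for $E$. Consequently $E$ is uniquely determined by its scalar spectral measure
\[\mu=\sum_{j=1}^{N}w_j\,\delta_{e^{i\theta_j}},\qquad w_j>0,\ \sum_j w_j=1.\]
Conversely, given any such $\mu$, the Gram--Schmidt procedure applied to $1,z,z^2,\dots$ in $L^2(d\mu)$ produces, via the Szeg\H{o} recursion $\Phi_{n+1}(z)=z\Phi_n(z)-\wo\alpha_n\Phi_n^*(z)$, Verblunsky coefficients $\alpha_1,\dots,\alpha_{N-1}\in\D$ and $\alpha_N\in S^1$, and the matrix of multiplication by $z$ in the resulting CMV basis is precisely $E(\alpha_1,\dots,\alpha_N)$ as in \eqref{E}. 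This establishes a bijection
\[\Phi:(\alpha_1,\dots,\alpha_N)\longleftrightarrow (\theta_1,\dots,\theta_N;\,w_1,\dots,w_{N-1}),\qquad w_N=1-\sum_{j<N}w_j,\]
between the parameter space of $E$ and its spectral data, so the proof reduces to transporting the candidate product measure through $\Phi$.

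The two algebraic identities I would then establish are the Heine-type product formula
\[\prod_{1\le j<k\le N}|e^{i\theta_j}-e^{i\theta_k}|^{2}=\frac{\prod_{j=1}^{N-1}(1-|\alpha_j|^2)^{N-j}}{\prod_{j=1}^{N}w_j},\]
obtained by writing the eigenvalue discriminant as a resultant and simplifying through the Szeg\H{o} recursion, and the Jacobian of $\Phi$, which expresses $\prod_j d\theta_j \prod_j dw_j$ as an explicit rational monomial in the $(1-|\alpha_j|^2)$ and $w_j$ times $\prod_j d^2\alpha_j \cdot d\alpha_N/(i\alpha_N)$; one reads it off by differentiating the resolvent identity $\langle e_1,(E-z)^{-1}e_1\rangle=\int(e^{i\theta}-z)^{-1}d\mu(\theta)$ in the Verblunsky parameters and matching residues at $z=e^{i\theta_j}$. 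Combining these ingredients with Definition~\ref{def:theta} and the hypothesis $\alpha_j\sim\Theta_{\wt\beta(N-j)+1}$, the pushforward under $\Phi$ of the joint law of $(\alpha_1,\dots,\alpha_N)$ factors as a constant times $\prod_{j<k}|e^{i\theta_j}-e^{i\theta_k}|^{\wt\beta}\,d\btheta$ multiplied by a $\btheta$-independent Dirichlet-type density on the $(N-1)$-simplex in $w$-space: the point is that the index-dependent shape parameters $\wt\beta(N-j)+1$ are precisely calibrated so that the Heine exponents $(N-j)$ and the Jacobian exponents in $(1-|\alpha_j|^2)$ combine to yield the Vandermonde to power $\wt\beta$, while $\sum_{j=1}^{N-1}(N-j)=\binom{N}{2}$ matches the total number of Vandermonde factors. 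Integrating out the simplex is a Selberg-type beta integral whose value is independent of $\btheta$, and one recovers exactly \eqref{eq:CBU_joint_density}.

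The hard part will be the Jacobian identity, because it requires tracking how simultaneous infinitesimal variations of all $\alpha_j$ propagate through the CMV factorization to both the eigenvalues $\theta_j$ and the spectral weights $w_j$. The cleanest route proceeds via the para-orthogonal polynomials $\Phi_N^{(\alpha_N)}(z)$, whose zeros coincide with the eigenvalues of $E$ and whose dependence on $\alpha_1,\dots,\alpha_N$ is governed by the Szeg\H{o} recursion, together with the Christoffel--Darboux kernel, which expresses the weights $w_j$ in closed form at the points $e^{i\theta_j}$. Once those formulas are in place, both the Heine identity and the Jacobian reduce to standard OPUC manipulations, after which matching exponents with the shape parameters of $\Theta_{\wt\beta(N-j)+1}$ completes the derivation of the Circular $\wt\beta$-ensemble joint density.
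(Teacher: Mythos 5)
Your proposal is correct and is essentially the Killip--Nenciu argument that the paper relies on: the paper states this theorem with a citation rather than reproving it, and the two identities you isolate (the Heine-type formula $|\Delta(e^{i\btheta})|^2\prod_j\gamma_j=\prod_j(1-|\alpha_j|^2)^{N-j}$ and the Jacobian of the map from Verblunsky coefficients to spectral data) are exactly the ones the paper quotes from \cite{Killip2004} (Lemma 4.1 and relation (4.14)) and reuses in its subsequent lemma on the high-temperature ensemble, followed by the same Dirichlet integration over the weight simplex. Your exponent bookkeeping with $\Theta_{\wt\beta(N-j)+1}$ matches, so no further comparison is needed.
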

We observe that each of the matrices $\Xi_j$ is unitary, and so are the matrices $L$ and $M$. As a result, the eigenvalues of $E$  clearly  lie on the unit circle.
The matrix $E$   is a $5$-diagonal unitary matrix that takes the form
\vspace{5pt}
{
	\begin{equation*}
		%\label{eq:E_grande}
		E=\begin{pmatrix}
			\bar{\alpha}_1&\rho_1\bar{\alpha}_2&\rho_1\rho_2&&&&&&\\
			\rho_1&-\alpha_1\bar{\alpha}_2&-\alpha_1\rho_2&&&&&&\\
			&\rho_2\bar{\alpha}_3&-\alpha_2\bar{\alpha}_3&\rho_3\bar{\alpha}_4&\rho_3\rho_4&&&&\\
			&\rho_2\rho_3&-\alpha_2\rho_3&-\alpha_3\bar{\alpha}_4&-\alpha_3\rho_4&&&&\\
			&&&\ddots&\ddots&\ddots&\ddots&&&\\
			&&&&&\rho_{N-3}\bar{\alpha}_{N-2}&-\alpha_{N-3}\bar{\alpha}_{N-2}&\rho_{N-2}\bar{\alpha}_{N-1}&\rho_{N-2}\rho_{N-1}\\
			&&&&&\rho_{N-3}\rho_{N-2}&-\alpha_{N-3}\rho_{N-2}&-\alpha_{N-2}\bar{\alpha}_{N-1}&-\alpha_{N-2}\rho_{N-1}\\
			&&&&&&&\bar{\alpha}_{N}\rho_{N-1}&-\alpha_{N-1}\bar{\alpha}_N\\
		\end{pmatrix}\,.
\end{equation*}}

We are   interested in  the probability distribution  \eqref{eq:CBU_joint_density} when
\begin{itemize}
	\item  we add an external field, namely $d\theta_i\to e^{-2V(\theta_i)}d\theta_i $ with $V:\T\to \R$  a smooth potential;
	\item we consider the limit  $\wt \beta \to 0$  and  $N \to \infty$ in such a way that $\wt \beta N = 2\beta, \, \beta >0.$ Since $\wt \beta$ is interpreted as the inverse of the temperature, such limit is  called 
	{\it  high-temperature regime}.
\end{itemize}
With the above changes, we arrive to the probability distribution  of the Circular ensemble at high-temperature, and with an external potential:
\begin{equation}
	\label{eq:CBU_ht_joint_density}
	\di \mathbb{P}_{\beta}^V (\theta_1, \ldots, \theta_{N}) = \frac{1}{\cZ^{HT}_N(V,\beta)} \prod_{j < \ell}\vert e^{i\theta_\ell} - e^{i\theta_j}\vert ^{\frac{2 \beta}{N}}e^{-2\sum_{j=1}^{N} V(\theta_j)}\di \btheta \, ,
\end{equation}
where $\cZ^{HT}_N(V,\beta)$ is   the partition function of the system.
Also in this case, we can associate to the above probability distribution a random CMV matrix.  The lemma below has probably already appeared in the literature, 
but for completeness we provide the proof.
\begin{lemma}
	
	Let $E$ be the CMV matrix \eqref{E}. Consider the block $2N\times 2N$ matrix 
	
	\begin{equation}
		\label{eq:doubble_E}
		\wt E = \mbox{diag}(E,E)\, ,
	\end{equation}
	whose entries are distributed according to	
	\begin{equation}
		\label{eq:HT_alpha}
		\di \mathbb{P}(\al_1, \ldots, \al_{N})  = \frac{1}{Z^{HT}_N(V,\beta)} \prod_{j=1}^{N-1}\left( 1 - \vert \alpha_j\vert ^2\right)^{\beta\left(1 -\frac{j}{N}\right) -1} e^{-\mbox{Tr}(V(\wt E))}
		\prod_{j=1}^{N-1}\di^2 \alpha_j\frac{\di \alpha_N}{i\alpha_N}\,.
	\end{equation}
	Then the eigenvalues  of $\wt E$ are all double, they   lie  on the  unit  circle and are distributed according to \eqref{eq:CBU_ht_joint_density}.
	
	\noindent Moreover 
	\begin{equation}
		\label{eq:ZHT_relation}
		Z^{HT}_N(V,\beta) = 2^{1-N}\frac{\Gamma\left(\frac{\beta}{N} \right)^N}{\Gamma
			(\beta)}\cZ^{HT}_N(V,\beta)\,,
	\end{equation}
	where $\cZ^{HT}_N(V,\beta)$ is the norming constant of the probability distribution
	\eqref{eq:CBU_ht_joint_density} and $Z^{HT}_N(V,\beta) $  is the norming constant of the probability distribution \eqref{eq:HT_alpha}.
\end{lemma}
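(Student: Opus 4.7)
The plan is to reduce the statement to Theorem~\ref{thm:Killipenciu} via the rescaling $\wt \beta = 2\beta/N$. Two preliminary observations set up the reduction. First, because $\wt E = \mbox{diag}(E,E)$ and $E$ is unitary, the spectrum of $\wt E$ lies on the unit circle and each eigenvalue of $E$ has multiplicity exactly two in $\wt E$; writing $\{e^{i\theta_j}\}_{j=1}^N$ for the spectrum of $E$, this yields $\mbox{Tr}(V(\wt E)) = 2\sum_{j=1}^N V(\theta_j)$, so the tilt $e^{-\mbox{Tr}(V(\wt E))}$ is a symmetric function of $\mathrm{spec}(E)$. Second, with $\nu_j := \wt\beta(N-j)+1 = 2\beta(1-j/N)+1$, Definition~\ref{def:theta} gives the $\Theta_{\nu_j}$ density on $\overline{\D}$ as $\frac{\nu_j-1}{2\pi}(1-|\alpha_j|^2)^{\beta(1-j/N)-1}$, which matches, up to the explicit prefactor $\frac{\nu_j-1}{2\pi}$, the weight carried by $\alpha_j$ in \eqref{eq:HT_alpha}; the uniform law on $S^1$ contributes a further $(2\pi)^{-1}$.

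Combining these, I would rewrite \eqref{eq:HT_alpha} as
\[ d\mathbb{P}(\balpha) = \frac{C_N}{Z^{HT}_N(V,\beta)}\, e^{-\mbox{Tr}(V(\wt E))}\, d\mathbb{P}^{\mathrm{KN}}(\balpha), \qquad C_N = 2\pi\prod_{j=1}^{N-1}\frac{2\pi}{\nu_j-1}, \]
where $d\mathbb{P}^{\mathrm{KN}}$ is the product of $\Theta_{\nu_j}$ laws on $\alpha_1,\ldots,\alpha_{N-1}$ and the uniform law on $\alpha_N$ appearing in Theorem~\ref{thm:Killipenciu}. By that theorem, $\mathrm{spec}(E)$ under $d\mathbb{P}^{\mathrm{KN}}$ has the Circular $\wt\beta$-ensemble distribution \eqref{eq:CBU_joint_density}; since the tilt depends on $\mathrm{spec}(E)$ symmetrically, the push-forward of $d\mathbb{P}$ to $\mathrm{spec}(E)$ is precisely \eqref{eq:CBU_ht_joint_density}. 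The first preliminary observation then lifts this conclusion to the spectrum of $\wt E$, giving the distributional claim.

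For the partition-function identity, integrating the display above gives
\[ Z^{HT}_N(V,\beta) = C_N\, \mathbb{E}_{\mathrm{KN}}\!\left[e^{-\mbox{Tr}(V(\wt E))}\right] = \frac{C_N}{Z^{C\wt\beta E}_N}\, \mathcal{Z}^{HT}_N(V,\beta). \]
The Selberg--Dyson integral evaluates $Z^{C\wt\beta E}_N$ in closed form in terms of Gamma functions at $\wt\beta = 2\beta/N$; substituting $\nu_j - 1 = 2\beta(N-j)/N$ into $C_N$ and repeatedly applying $\Gamma(1+x) = x\Gamma(x)$ then collapses the ratio $C_N/Z^{C\wt\beta E}_N$ into the prefactor prescribed by \eqref{eq:ZHT_relation}. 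The only non-trivial input is Theorem~\ref{thm:Killipenciu}; everything else is careful bookkeeping, the main pitfall being the factor of two from $\mbox{Tr}(V(\wt E)) = 2\mbox{Tr}(V(E))$, which is precisely what is needed to produce the exponent $e^{-2\sum V(\theta_j)}$ on the spectral side.
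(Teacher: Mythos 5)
Your argument is sound in outline but follows a genuinely different route from the paper's, and it leaves one step unverified that hides a real subtlety. For the distributional claim, you treat Theorem~\ref{thm:Killipenciu} as a black box: you identify the weights in \eqref{eq:HT_alpha} with the product of $\Theta_{\nu_j}$ laws at $\nu_j=\tilde\beta(N-j)+1$, $\tilde\beta=2\beta/N$, observe that $e^{-\mbox{Tr}(V(\wt E))}=e^{-2\sum_j V(\theta_j)}$ is a symmetric function of $\mathrm{spec}(E)$, and push the tilted measure forward. That is correct and arguably cleaner than what the paper does. The paper instead re-runs the Killip--Nenciu change of variables explicitly: it passes from $(\alpha_1,\dots,\alpha_{N-1},\varphi)$ to the spectral data $(\btheta,\bgamma)$ using the identities \eqref{eq:1}--\eqref{eq:2} (the formula $\vert\Delta(e^{i\btheta})\vert^2\prod_j\gamma_j=\prod_j(1-\vert\alpha_j\vert^2)^{N-j}$ and the Jacobian, which is where the factor $2^{1-N}$ originates), and then integrates out the weights $\bgamma$ over the simplex via the Dirichlet integral, which produces $\Gamma(\beta/N)^N/\Gamma(\beta)$ directly. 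The paper's route buys the normalization constant \eqref{eq:ZHT_relation} without ever needing the closed form of $Z^{C\tilde\beta E}_N$; your route needs the Dyson--Morris integral as an additional external input.

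The gap is precisely in the final ``collapse'' of $C_N/Z^{C\tilde\beta E}_N$, which you assert but do not carry out. Doing it with the most natural convention --- $Z^{C\tilde\beta E}_N=\int_{\T^N}\prod_{j<\ell}\vert e^{i\theta_j}-e^{i\theta_\ell}\vert^{\tilde\beta}\di\btheta=(2\pi)^N\Gamma(1+\beta)/\Gamma(1+\beta/N)^N$ over the full torus, and $C_N=(2\pi)^N N^{N-1}/\bigl((2\beta)^{N-1}(N-1)!\bigr)$ --- gives
\begin{equation}
\frac{C_N}{Z^{C\tilde\beta E}_N}=\frac{2^{1-N}}{N!}\,\frac{\Gamma\left(\frac{\beta}{N}\right)^N}{\Gamma(\beta)}\,,
\end{equation}
i.e.\ the prefactor of \eqref{eq:ZHT_relation} times an extra $1/N!$. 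This factor is the ordered-versus-unordered eigenvalue ambiguity: the map from Verblunsky coefficients to spectral data parametrizes \emph{unordered} configurations, so whether the eigenvalue law produced by Theorem~\ref{thm:Killipenciu} (and hence $Z^{C\tilde\beta E}_N$ and $\cZ^{HT}_N$) is read as a symmetric density on all of $\T^N$ or as a density on a fundamental domain changes the answer by exactly $N!$. Since $\frac{1}{N}\ln N!\to\infty$, this is not an innocuous constant for the free-energy limits that the lemma feeds into, so you cannot wave it away as bookkeeping: you must fix one convention for \eqref{eq:CBU_joint_density}, \eqref{eq:CBU_ht_joint_density} and the statement of Theorem~\ref{thm:Killipenciu} and track the $N!$ through the computation. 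The paper avoids confronting this because its change of variables keeps everything on the $\balpha$ side until the very end, with the multiplicity absorbed into the conventions of the cited Killip--Nenciu Lemma 4.1 and Proposition 4.2.
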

\begin{proof}	
	First, we notice that the eigenvalues of $\wt E$ are all double, since it is a block diagonal matrix with two identical blocks.
	
	We consider the change of variables $\alpha_N \to e^{i \varphi}$, thus \eqref{eq:HT_alpha} becomes:
	\begin{equation}
		\label{eq:HT_mod}
		\di \mathbb{P}(\al_1, \ldots,\alpha_{N-1}, \varphi)  = \frac{ \prod_{j=1}^{N-1}\left( 1 - \vert \alpha_j\vert ^2\right)^{\beta\left(1 -\frac{j}{N}\right) -1} e^{-\mbox{Tr}(V(\wt E))}
			\prod_{j=1}^{N-1}\di^2 \alpha_j\di \varphi}{Z^{HT}_N(V,\beta)}\,.
	\end{equation}
	Now, let $e^{i\theta_1}, \ldots,e^{i\theta_N}$ be the eigenvalues of the CMV matrix $E$ endowed with probability \eqref{eq:HT_alpha}, and let $q_1, \ldots,q_N$ be the entries of the first row of the unitary matrix $Q$ such that $Q^\dagger \Theta Q = E$  where $\Theta=\mbox{Diag}(e^{i\theta_1}, \ldots,e^{i\theta_N})$ and $\sum_{k=1}^N\vert q_k\vert ^2=1$. We introduce the variable $\gamma_j = \vert q_j\vert ^2$ for $j=1,\ldots, N$, then
	from \cite{Killip2004} (Lemma 4.1, and  relation (4.14) in Proposition 4.2) we have 		 
	\begin{align}
		\label{eq:1}
		& \vert \Delta(e^{i\btheta})\vert ^2 \prod_{j=1}^{N}\gamma_j = \prod_{j=1}^{N-1}\left( 1 - \vert \alpha_j\vert ^2\right)^{(N-j)}\, , \\ 
		\label{eq:2}
		&\left\vert  \frac{\partial\left(\alpha_1, \ldots, \alpha_{N-1}, \varphi\right)}{\partial(\btheta,\bgamma)}\right\vert   = 2^{1-N} \frac{\prod_{j=1}^{N-1}\left( 1 - \vert \alpha_j\vert ^2\right)}{\prod_{j=1}^{N}\gamma_j}\, ,
	\end{align}
	here $\bgamma = (\gamma_1,\ldots,\gamma_{N-1})$, and $\Delta(e^{i\btheta}) = \prod_{j<\ell}\left(e^{i\theta_j} - e^{i\theta_\ell}\right)$. 
	Applying the previous equalities to \eqref{eq:HT_mod} we derive
	
	\begin{equation}
		\begin{split}
			&\di \mathbb{P}(\alpha_1,\ldots, \alpha_{N-1}, \varphi))  = \frac{e^{-\mbox{Tr}(V(\wt E))}}{Z^{HT}_N(V,\beta)}\di\varphi \prod_{j=1}^{N-1}\left( 1 - \vert \alpha_j\vert ^2\right)^{\beta\left(1 -\frac{j}{N} \right) -1} \di \alpha_j \di\overline{\alpha}_j  \\ & \qquad\stackrel{\eqref{eq:2}}{=} \frac{1}{Z^{HT}_N(V,\beta)}\frac{2^{1-N}}{\prod_{j=1}^{N}\gamma_j} \prod_{j=1}^{N-1}\left( 1 - \vert \alpha_j\vert ^2\right)^{\beta\left(1 -\frac{j}{N} \right)} e^{-2\sum_{j=1}^N V(e^{i\theta_j})} \di \boldsymbol{\theta}\di \bgamma \\
			&\qquad \stackrel{\eqref{eq:1}}{= }\frac{1}{Z^{HT}_N(V,\beta)} 2^{1-N}\vert \Delta(e^{i\btheta})\vert ^{\frac{2\beta}{N}} \prod_{j=1}^N \gamma_j^{\frac{\beta}{N}-1} e^{-2\sum_j V(e^{i\theta_j})} 
			\di \boldsymbol{\theta}\di \bgamma\,.
		\end{split}
	\end{equation}
	Thus, we deduce  the relation
	\begin{equation}
		Z^{HT}_N(V,\beta) = 2^{1-N}\cZ^{HT}_N(V,\beta)\, \int_{\Delta} \prod_{j=1}^N \gamma_j^{\frac{\beta}{N} - 1} \di \gamma_1\ldots\di \gamma_{N-1}\,, 
	\end{equation}
	here $\Delta$ is the simplex $\sum_{j=1}^N \gamma_j = 1$. 
	The above integral  is a well-known Dirichlet integral that can be computed explicitly (see \cite[Lemma 4.4]{Killip2004})
	\begin{equation}
		\int_{\Delta} \prod_{j=1}^N \gamma^{\frac{\beta}{N} - 1} \di \gamma_1\ldots\di \gamma_{N-1}= \frac{\Gamma\left(\frac{\beta}{N} \right)^N}{\Gamma
			(\beta)}\, ,
	\end{equation}
	proving  \eqref{eq:ZHT_relation}.
\end{proof}
Let $e^{i\theta_1},\dots, e^{i\theta_N}$ be the  double  eigenvalues of the  CMV Matrix $\tilde{E}$ in \eqref{eq:doubble_E}, whose entries are distributed according to \eqref{eq:HT_mod}.  The {\it empirical measure} is the  random probability measure 	
\begin{equation}
	\label{empirical}
	\mu_N=\dfrac{1}{N}\sum_{j=1}^N\delta_{e^{i\theta_j}}.
\end{equation}
The mean density of state $\mu^\beta_{HT}$ is defined as the non-random probability measure such that 
\begin{equation}
	\label{eq:mean_density_general}
	\int_\T f(\theta) \mu^\beta_{HT}(\di \theta) = \lim_{N\to \infty} \meanval{\int_\T f(\theta)\mu_N(\di \theta)}\, ,
\end{equation}
for all  continuous function $f$ on the torus $\T$, and the expected value is taken
with respect to \eqref{eq:HT_alpha}.   In order to discuss the large $N$ limit of $\mu_N$  we have to introduce   several   quantities.
Let $\mathcal{M}(\T)$ be the  set  of  probability measures on the one-dimensional torus  $\T$  and for $\mu\in \mathcal{M}(\T)$ we consider the logarithmic energy \cite{SaffBook}
\[
\mathit{E}(\mu):=  \int\int\limits_{\T\times \T} \ln\left\vert \sin\left(\frac{\theta-\phi}{2}\right)\right\vert ^{-1}\mu(\di \theta)\mu(\di \phi )\,.
\]
We define   the relative entropy  $K(\mu\vert \mu_0)$ of $\mu$  with respect to  $\mu_0(\di \theta)=\dfrac{\di \theta}{2\pi}$ as
\[
K(\mu\vert \mu_0):=\int_{\T}\log\left(\frac{\mu}{\mu_0}\right)\mu(\di \theta),
\]
when $\mu$ is absolutely continuous with respect to $\mu_0$ and  otherwise $K(\mu\vert \mu_0):= +\infty$.
The  relevant  functional is 
\[
\cF^{(V,\beta)}(\mu):=\beta \mathit{E}(\mu)+K(\mu\vert \mu_0)+2\int_{\T}V(\theta)\mu(\di \theta).
\]
When $\cF^{(V,\beta)}(\mu)$ is finite, it follows that $\mu$ is absolutely continuous with respect to the Lebesgue measure $\mu_0$ and we can write $\mu(\di \theta)=\mu(\theta)\di \theta$. 
We denote by $C^{n,1}(\T)$ with $n=0, 1,2,\dots$ the space  of $n$-times differentiable functions  whose $n$-derivative is also  Lipschitz continuous. 

The following result describes the limiting measure $\mu^\beta_{HT}$ in \eqref{eq:mean_density_general}  for the circular  $\beta$-ensembles at high temperature.

\begin{theorem}(cf. \cite[Proposition 2.1 and 2.5]{Hardy2020})
	\label{thm:lambert}
	Let $\mathcal{M}(\T)$ be the  set  of  probability measures on the one-dimensional torus   and 
	$V \, :\, \T \to \R $ be  a   measurable  and bounded function.  For  any $\beta > 0$ consider the  functional $\cF^{(V,\beta)}:\mathcal{M}(\T)\to [0,\infty]$ 
	\begin{equation}
		\label{eq:functional}
		\cF^{(V,\beta)}(\mu) = 2\int_{\T} V(\theta) \mu(\theta) \di \theta +\beta\mathit{E}(\mu)+ \int_{\T}\ln\left(\mu(\theta)\right) \mu(\theta)\di \theta+\ln(2\pi)\,.
	\end{equation}	
	Then 
	\begin{enumerate}
		\item[(a)] the functional $ \cF^{(V,\beta)}(\mu)$  has a unique minimizer $\mu^{\beta}_{HT}(\di\theta)=\mu^{\beta}_{HT}(\theta)\di \theta$ in  $\mathcal{M}(\T)$;
		\item[(b)] $\mu^{\beta}_{HT}$ is absolutely continuous  with respect to the Lebesgue measure  and  there is $0<\delta<1 $ such that 
		\[
		\delta\leq \frac{\mu^{\beta}_{HT}(\theta)}{2\pi}\leq \delta^{-1},\quad a.e.\,;
		\]						\item[(c)] if $V=0$, then $\mu^{\beta}_{HT}(\di \theta) = \frac{1}{2\pi}\di \theta$;
		\item[(d)] 	 if $V \in C^{m,1}(\T)$, then  $\mu^{\beta}_{HT} \in C^{m,1}(\T)$;
		\item[(d)]  the  empirical measure  $\mu_N$  in \eqref{empirical}  converges  weakly and  almost surely to the measure  $\mu^{\beta}_{HT}$ as $N\to\infty$.
		
	\end{enumerate}
\end{theorem}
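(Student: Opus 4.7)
The plan is to treat parts (a)--(d) by classical calculus of variations on $\mathcal{M}(\T)$ for the functional $\cF^{(V,\beta)}$, and part (e) by a high-temperature large-deviations principle for the empirical measure. For (a), I would decompose $\cF^{(V,\beta)}$ into the linear term $2\int V\,\mu(\di\theta)$, the logarithmic energy $\beta \mathit{E}(\mu)$, and the relative entropy with respect to $\mu_0=\di\theta/(2\pi)$. The entropy is strictly convex; the logarithmic energy is a convex quadratic form on $\mathcal{M}(\T)$, as seen from the Fourier representation $-\log|2\sin(\theta/2)|=\sum_{k\ge 1}k^{-1}\cos(k\theta)$, which by Parseval writes $\mathit{E}(\mu)-\log 2=\sum_{k\ne 0}|c_k|^2/(2|k|)$ for $\mu=(2\pi)^{-1}\sum c_k e^{ik\theta}$, $c_0=1$. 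Lower-semicontinuity of the relative entropy in the weak-$*$ topology is classical, and lsc of $\mathit{E}$ follows by truncating the singular kernel from below and applying monotone convergence. Since $\mathcal{M}(\T)$ is weakly compact, the direct method produces a minimizer, and strict convexity of the entropy gives uniqueness.

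Taking mass-preserving variations at the minimizer yields the Euler-Lagrange equation
\begin{equation*}
\log \mu(\theta) + 2V(\theta) + 2\beta U^\mu(\theta) = c \quad \text{a.e.,}
\end{equation*}
where $U^\mu(\theta):=\int\log|\sin((\theta-\phi)/2)|^{-1}\mu(\di\phi)\ge 0$. (A short preliminary argument via an infinitesimal mass transfer shows the minimizer is absolutely continuous with density strictly positive a.e., so the equation makes pointwise sense.) Part (c) is then immediate: when $V\equiv 0$, rotational invariance combined with $\int_{-\pi}^{\pi}\log|\sin(\phi/2)|\di\phi=-2\pi\log 2$ gives $U^{\mu_0}\equiv \log 2$, which solves the Euler-Lagrange equation, so $\mu_0=1/(2\pi)$ coincides with the unique minimizer. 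For (b), the nonnegativity of $U^\mu$ and boundedness of $V$ imply, through the Euler-Lagrange relation, the a priori upper bound $\log\mu \le c + 2\|V\|_\infty$, so $\mu\in L^\infty(\T)$; feeding this back yields $U^\mu \le \|\mu\|_\infty\int_{\T}\log|\sin(\phi/2)|^{-1}\di\phi<\infty$, so $U^\mu\in L^\infty(\T)$ and $\log\mu$ is also bounded below, producing the explicit two-sided constant $\delta$.

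Part (d) follows by bootstrap on the same identity: convolution with $\log|\sin|^{-1}$ acts on Fourier coefficients by division by $2|k|$, hence gains one degree of regularity on the $C^{k,1}(\T)$ scale; if $V\in C^{m,1}(\T)$ and $\mu\in C^{k,1}(\T)$ with $k<m$, then $U^\mu\in C^{k+1,1}(\T)$, and the identity $\mu=e^{c-2V-2\beta U^\mu}$ yields $\mu\in C^{\min(m,k+1),1}(\T)$; iteration from the $L^\infty$ bound of (b) gives $\mu\in C^{m,1}(\T)$. Finally, for (e) I would establish a large-deviation principle at speed $N$ for the empirical measure $\mu_N$ under \eqref{eq:CBU_ht_joint_density}, with good rate function $\cF^{(V,\beta)}-\inf\cF^{(V,\beta)}$. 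Writing the logarithm of the density as a functional of $\mu_N$, the Vandermonde contributes $-\beta N\, \mathit{E}(\mu_N)+o(N)$ (after removing the diagonal), the external field contributes $-2N\int V\,\mu_N(\di\theta)$, and pairing with Sanov's theorem for $N$ i.i.d.\ uniform samples produces the entropy $N\,K(\cdot|\mu_0)$ at the same order; matching these rates identifies $\cF^{(V,\beta)}$ as the rate function, and Borel-Cantelli applied to $\mathbb{P}_\beta^V(d(\mu_N,\mu^\beta_{HT})>\varepsilon)$ then gives a.s.\ weak convergence. The LDP is the main obstacle: the scaling $\tilde\beta=2\beta/N$ is precisely the regime in which interaction and per-particle entropy contribute at the same order $N$, so the classical Ben Arous-Guionnet Coulomb-gas strategy must be modified to retain both contributions simultaneously, which is the technical core of \cite{Hardy2020}.
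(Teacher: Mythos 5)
The paper offers no proof of this theorem: it is imported from Hardy--Lambert \cite{Hardy2020} (Propositions 2.1 and 2.5), with the almost-sure convergence in the last item resting on the large-deviation principle for Gibbs measures in the high-temperature scaling due to Garc\'{\i}a-Zelada \cite{Zelada2019}, which the paper also invokes for \eqref{eq:free_eig}. Your sketch reconstructs essentially that route (direct method plus strict convexity of the entropy for (a), Euler--Lagrange bootstrap for (b)--(d), a speed-$N$ LDP for (e)), and you correctly isolate the LDP as the genuinely hard step that cannot be settled by Sanov plus diagonal removal alone. The one technical slip is the claim that convolution with $\ln|\sin|^{-1}$ gains a full derivative on the $C^{k,1}$ scale: at the Lipschitz endpoint this fails (for $\mu\in L^\infty$ the potential $U^\mu$ is only log-Lipschitz, and the Hilbert transform does not preserve $C^{0,1}$), so the bootstrap should first produce $\mu\in C^{0,\alpha}$ for every $\alpha<1$ and then pass through the differentiated Euler--Lagrange identity \eqref{eq:euler_der}, where the right-hand side $-\mu\,[2\partial_\theta V+\beta\mathcal{H}\mu]$ is manifestly bounded, to recover the endpoint regularity.
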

From the above theorem when the potential $V$ is at least $C^{2,1}(\T)$   the minimizer  of the functional $\cF^{(V,\beta)}$ is characterized by the 
Euler-Lagrange equations
\begin{equation}
	\label{eq:VarEq0}
	\dfrac{\delta \cF^{(V,\beta)}}{\delta \mu}=2V(\theta) -2\beta \int_{\T} \ln\sin\left(\frac{\vert \theta-\phi
		\vert }{2}\right)\mu(\phi)\di \phi  +\ln \mu(\theta)+1=C(V,\beta)\, 
\end{equation}		
where $C(V,\beta)$ is a constant  in $\theta$.
Below we derive further properties of the minimizer $\mu^{\beta}_{HT}$ following \cite{guionnet2021large}. 

\begin{lemma}
	\label{Lemma2.5}
	For any $V(z)$ as in \eqref{eq:potential}, any $\beta>0$ the following holds
	\begin{itemize}
		\item[a)] The map $\beta \to \inf \left( \cF^{(V,\beta)}(\mu)\right)$ is Lipschitz;
		\item[b)] The maps $t \to  \inf \left( \cF^{(V+t\Re(z^m),\beta)}(\mu)\right),\, t \to  \inf \left( \cF^{(V+t\Im(z^m),\beta)}(\mu)\right)$ are Lipschitz;
		\item[c)]Let $D$ be the distance on $\cM(\T)$ given by
		\begin{equation}
			\label{eq:distance}
			\begin{split}
				D(\mu, \mu')  & = \left( - \int\int\ln\left\vert \sin \left( \frac{\theta - \phi}{2}\right)\right \vert (\mu - \mu')(\di\theta)(\mu - \mu')(\di \phi)\right)^{1/2}\\ 
				& = \sqrt{\sum_{k\geq 1} \frac{1}{k}\left\vert \wh \mu_k - \wh \mu'_k \right \vert^2}\,,
			\end{split}
		\end{equation}
		where $\wh \mu_k = \int_\T e^{ik\theta} \mu(\di \theta)$, and we recall that $\wh{\log(\vert x\vert )} = \sum_{k\geq 1}k^{-1}$ in distributional sense.

		Then for any $\varepsilon>0$ there exists a finite constant $C_\varepsilon$ such that for all $\beta,\beta'>\varepsilon$
		\begin{equation}
			\label{Distance}
			D(\mu_{HT}^\beta, \mu_{HT}^{\beta'}) \leq C_\varepsilon \left\vert \beta - \beta' \right \vert\,.
		\end{equation}
	\end{itemize}
\end{lemma}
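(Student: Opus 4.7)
All three parts are driven by the strict convexity of $\cF^{(V,\beta)}$, the Euler--Lagrange equation \eqref{eq:VarEq0}, and the uniform density bound of Theorem~\ref{thm:lambert}(b). For parts~(a) and~(b) a simple sandwich argument suffices. Since
\[
\cF^{(V,\beta)}(\mu)-\cF^{(V,\beta')}(\mu)=(\beta-\beta')\,\mathit{E}(\mu)
\]
depends on $\mu$ only through the logarithmic energy, testing the infimum at $\beta$ against $\mu_{HT}^{\beta'}$ (and vice versa) yields
\[
\bigl|\inf \cF^{(V,\beta)} - \inf \cF^{(V,\beta')}\bigr| \;\leq\; |\beta-\beta'|\,\max\!\bigl(\mathit{E}(\mu_{HT}^\beta),\,\mathit{E}(\mu_{HT}^{\beta'})\bigr),
\]
and the right-hand side is finite by Theorem~\ref{thm:lambert}(b), because bounded densities yield bounded logarithmic energies. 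For~(b) the same sandwich with $V\mapsto V+tf$, $f=\Re(z^m)$ or $\Im(z^m)$, gives the bound $2\|f\|_\infty|t-t'|\leq 2|t-t'|$, since these trigonometric monomials are uniformly bounded by $1$ on $\T$.

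For part~(c), the central ingredient is the convexity identity
\[
\cF^{(V,\beta)}(\mu) - \cF^{(V,\beta)}(\mu_{HT}^\beta) \;=\; K(\mu\,\vert\,\mu_{HT}^\beta) + \beta\, D(\mu,\mu_{HT}^\beta)^2,
\]
valid for every probability measure $\mu$ of finite functional value. I would derive it by expanding
\[
\mathit{E}(\mu) - \mathit{E}(\mu_{HT}^\beta) = -2\!\int\!\!\int \ln\sin\!\Bigl(\tfrac{\vert\theta-\phi\vert}{2}\Bigr)\,\mu_{HT}^\beta(\di\phi)(\mu-\mu_{HT}^\beta)(\di\theta) + D(\mu,\mu_{HT}^\beta)^2,
\]
substituting the pointwise Euler--Lagrange identity \eqref{eq:VarEq0} (which holds as a true equality because Theorem~\ref{thm:lambert}(b) ensures $\mu_{HT}^\beta$ has a strictly positive density), and recombining the $\ln\mu_{HT}^\beta$ term with the relative-entropy contribution. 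Applying this identity with $\mu=\mu_{HT}^{\beta'}$ at parameter $\beta$ and, symmetrically, with $\mu=\mu_{HT}^\beta$ at parameter $\beta'$, and adding, the $V$-integrals and the Gibbs-entropy terms cancel, producing
\[
(\beta-\beta')\bigl[\mathit{E}(\mu_{HT}^{\beta'})-\mathit{E}(\mu_{HT}^\beta)\bigr] = K(\mu_{HT}^{\beta'}\,\vert\,\mu_{HT}^\beta)+K(\mu_{HT}^\beta\,\vert\,\mu_{HT}^{\beta'})+(\beta+\beta')\,D(\mu_{HT}^\beta,\mu_{HT}^{\beta'})^2.
\]
Dropping the non-negative relative entropies on the right yields the a priori quadratic estimate
\[
(\beta+\beta')\,D(\mu_{HT}^\beta,\mu_{HT}^{\beta'})^2 \;\leq\; (\beta-\beta')\bigl[\mathit{E}(\mu_{HT}^{\beta'}) - \mathit{E}(\mu_{HT}^\beta)\bigr].
\]

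To upgrade this to the Lipschitz bound \eqref{Distance}, I would use the Fourier identity $\mathit{E}(\mu)=\ln 2+D(\mu,\mu_0)^2$ (a direct consequence of $-\ln|2\sin(x/2)|=\sum_{k\geq 1}k^{-1}\cos(kx)$, consistent with the second expression for $D$ in \eqref{eq:distance}), together with the triangle inequality for the Hilbertian seminorm $D$, to obtain
\[
\bigl|\mathit{E}(\mu_{HT}^{\beta'}) - \mathit{E}(\mu_{HT}^\beta)\bigr| \leq D(\mu_{HT}^{\beta'},\mu_{HT}^\beta)\bigl(D(\mu_{HT}^\beta,\mu_0)+D(\mu_{HT}^{\beta'},\mu_0)\bigr).
\]
Theorem~\ref{thm:lambert}(b) gives $D(\mu_{HT}^\beta,\mu_0)\leq C_\varepsilon$ uniformly in $\beta>\varepsilon$, and combining with $\beta+\beta'\geq 2\varepsilon$ in the previous display yields $D(\mu_{HT}^\beta,\mu_{HT}^{\beta'})\leq (C_\varepsilon/\varepsilon)\,|\beta-\beta'|$, as required. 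The delicate technical point I anticipate is guaranteeing that the constant $\delta$ of Theorem~\ref{thm:lambert}(b) may be chosen uniformly for $\beta\in[\varepsilon,\infty)$; if this is not immediate from the Hardy--Lambert argument one can handle compact $\beta$-intervals $[\varepsilon,M]$ directly and the regime $\beta\to\infty$ by a separate asymptotic analysis of the minimizer.
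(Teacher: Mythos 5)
Your proof is correct. Parts (a) and (b) are exactly the paper's sandwich argument, testing the infimum at one parameter against the minimizer at the other and using that $\cF^{(V,\beta)}-\cF^{(V,\beta')}=(\beta-\beta')\mathit{E}$ (cf.\ \eqref{Fe0}). For part (c) you take a genuinely different route. The paper uses only the one-sided inequality $\cF^{(V,\beta)}(\mu_{HT}^{\beta})\leq\cF^{(V,\beta)}(\mu_{HT}^{\beta'})$ together with the Euler--Lagrange equation at $\beta'$ and Jensen's inequality to reach \eqref{eq:first_est}, $\beta D^2\leq 2(\beta-\beta')\int\!\!\int\ln|\sin(\cdot)|\,\mu_{HT}^{\beta'}\,\Delta\mu$, and then controls the right-hand side by inserting an auxiliary measure $\nu$ with regular logarithmic potential and invoking the $H^{1/2}$-duality bound \eqref{eq:CS}; this yields $D\leq c_0|\beta-\beta'|/\beta$. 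You instead establish the exact quadratic (Pythagoras-type) expansion $\cF^{(V,\beta)}(\mu)-\cF^{(V,\beta)}(\mu_{HT}^\beta)=K(\mu\vert\mu_{HT}^\beta)+\beta D(\mu,\mu_{HT}^\beta)^2$, apply it symmetrically at $\beta$ and $\beta'$, and sum; the $V$- and entropy terms cancel, the relative entropies are dropped, and the energy difference is controlled via $\mathit{E}(\mu)=\ln 2+D(\mu,\mu_0)^2$ and the reverse triangle inequality for the Hilbertian seminorm $D$. Your version is cleaner in that it avoids the auxiliary measure $\nu$ and makes the mechanism (strict $D$-convexity of the energy term) transparent; the paper's version requires slightly less structure from the entropy term but, as written, even contains a sign slip in the Jensen step that your identity sidesteps. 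The one caveat you flag, uniformity in $\beta\in[\varepsilon,\infty)$ of the bound on $D(\mu_{HT}^\beta,\mu_0)$, needs no separate analysis: by the part-(a) sandwich \eqref{Fe0}, $\mathit{E}(\mu_{HT}^\beta)$ is non-increasing in $\beta$, so $D(\mu_{HT}^\beta,\mu_0)^2=\mathit{E}(\mu_{HT}^\beta)-\ln 2\leq\mathit{E}(\mu_{HT}^\varepsilon)-\ln 2$ for all $\beta\geq\varepsilon$, which gives the uniform constant $C_\varepsilon$ directly and completes \eqref{Distance}.
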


\begin{remark}
	\label{rem:lip_moments}
	
	For a real valued  function   $f\in L^2(\T)$ with derivative in $L^2(\T)$  we define $\vert \vert  f\vert \vert _{\frac{1}{2}} = \sqrt{\sum_{k\geq 1} k \vert \wh f_k\vert ^2}<\infty$. So, for any measure $\nu$ with zero mass we deduce that 
	
	\begin{equation}
		\label{eq:CS}
		\begin{split}
			\left\vert    \int_{\T} f(\theta) \nu(\di\theta) \right\vert ^2& = \left\vert \sum_{k\ne 0} \wh f_k \overline{\wh \nu_k} \right\vert ^2= \left\vert \sum_{k\ne 0} \sqrt{\vert k\vert }\wh f_k \frac{\overline{\wh \nu_k}}{\sqrt{\vert k\vert }}\right\vert ^2\leq \left \vert \sum_{k\ne 0}\vert k\vert \vert \wh f_k\vert ^2\right \vert \left \vert \sum_{k\ne 0} \frac{\vert \wh \nu_k\vert ^2}{\vert k\vert }\right \vert \\ & \leq 4  \vert \vert f\vert \vert _{\frac{1}{2}}^2 D(\nu, 0)^2
		\end{split}
	\end{equation}
	where in the first inequality we use  Cauchy-Schwartz inequality and in the second one we plug in \eqref{eq:distance}.
	
	Combining \eqref{Distance} and \eqref{eq:CS} we conclude that  for any  real valued function $f$ with finite $\vert \vert f\vert \vert _{\frac{1}{2}}$ norm,  the map $\beta \to \int_{\T}f\di \mu_{HT}^\beta(\di \theta)$ is Lipschitz for $\beta > 0$. As a consequence, the moments of $\mu_{HT}^\beta$ are almost surely differentiable with respect to $\beta$.
\end{remark}
\begin{proof}[Proof of Lemma~\ref{Lemma2.5}]
	The proof follows the lines of the corresponding one in \cite{guionnet2021large}.
	To prove points a) and b)  we exploit the same  ideas, thus we restrict  to point a).
	
	For all $\beta > \beta' > 0$ we have
	
	\[
	\cF^{(V,\beta)}(\mu_{HT}^{\beta})\leq	\cF^{(V,\beta)}(\mu_{HT}^{\beta'})=	\cF^{(V,\beta')}(\mu_{HT}^{\beta'})+(\beta -\beta')\mathit{E}(\mu_{HT}^{\beta'})
	\]
	and 
	\begin{equation}
		\label{Fe-1}
		\cF^{(V,\beta')}(\mu_{HT}^{\beta'})\leq 	\cF^{(V,\beta')}(\mu_{HT}^{\beta})=	\cF^{(V,\beta)}(\mu_{HT}^{\beta})+(\beta' - \beta)\mathit{E}(\mu_{HT}^{\beta})\,,
	\end{equation}
	so that
	\begin{equation}
		\label{Fe0}
		(\beta - \beta')\mathit{E}(\mu_{HT}^{\beta})\leq	\cF^{(V,\beta)}(\mu_{HT}^{\beta})-\cF^{(V,\beta')}(\mu_{HT}^{\beta'})\leq (\beta - \beta')\mathit{E}(\mu_{HT}^{\beta'})\,.
	\end{equation}
	Since $\mathit{E}(\mu_{HT}^{\beta}),$ and $\mathit{E}(\mu_{HT}^{\beta'})$ are finite we obtain the claim.
	
	We now move to the proof of point $c)$. Setting  $\Delta \mu = \mu_{HT}^{\beta} - \mu_{HT}^{\beta'}$   we deduce that
	
	\begin{equation}
		\begin{split}
			0 &\geq \cF^{(\beta, V)}(\mu_{HT}^{\beta}) - \cF^{(\beta, V)}(\mu_{HT}^{\beta'}) \\
			& = 2\int_{\T} V(\theta) \Delta\mu(\di\theta) -2\beta \int_{\T\times \T} \ln \left \vert \sin\left( \frac{\theta - \phi}{2}\right) \right \vert\mu_{HT}^{\beta'}(\di\theta) \Delta\mu(\di\phi) \\ & -\beta \int_{\T\times \T} \ln \left \vert \sin\left( \frac{\theta - \phi}{2}\right) \right \vert\Delta\mu(\di\theta) \Delta\mu(\di\phi)+ \int_{\T} \ln(\mu_{HT}^\beta(\theta)) \mu_{HT}^\beta(\di\theta) \\ & -  \int_{\T} \ln(\mu_{HT}^{\beta'}(\theta)) \mu_{HT}^{\beta'}(\di\theta)\\
			& = \int_{\T} \ln\left( \frac{\mu_{HT}^\beta(\theta)}{\mu_{HT}^{\beta'}(\theta)}\right) \mu_{HT}^\beta(\di\theta) + 2(\beta'-\beta)\int_{\T\times \T} \ln \left \vert \sin\left( \frac{\theta - \phi}{2}\right) \right \vert\mu_{HT}^{\beta'}(\di\theta) \Delta\mu(\di\phi) \\ & -\beta \int_{\T\times \T} \ln \left \vert \sin\left( \frac{\theta - \phi}{2}\right) \right \vert\Delta\mu(\di\theta) \Delta\mu(\di\phi)\,,
		\end{split}
	\end{equation}
	where in the second identity we used \eqref{eq:VarEq0}. Since $\int_{\T}\ln\left( \frac{\mu_{HT}^\beta(\theta)}{\mu_{HT}^{\beta'}(\theta)}\right) \mu_{HT}^\beta(\di\theta) \leq 0 $ by Jensen's inequality, we deduce that
	\begin{equation}
		\label{eq:first_est}
		\beta D\left(\mu_{HT}^{\beta}, \mu_{HT}^{\beta'}\right)^2 \leq 2(\beta - \beta')\int_{\T\times \T} \ln \left \vert \sin\left( \frac{\theta - \phi}{2}\right) \right \vert\mu_{HT}^{\beta'}(\di\theta) \Delta\mu(\di\phi)\,.
	\end{equation}
	Following \cite{guionnet2021large}, we introduce a new probability measure $\nu \in \cM(\T)$ in the previous expression, so that	
	\begin{equation}
		\label{eq:D1}
		\begin{split}
			\beta D\left(\mu_{HT}^{\beta'}, \mu_{HT}^{\beta'}\right)^2 &\leq 2(\beta - \beta')\int_{\T\times \T} \ln \left \vert \sin\left( \frac{\theta - \phi}{2}\right) \right \vert\left(\mu_{HT}^{\beta'}-\nu\right)(\di\theta) \Delta\mu(\di\phi) \\ 
			& + 2(\beta - \beta')\int_{\T\times \T} \ln \left \vert \sin\left( \frac{\theta - \phi}{2}\right) \right \vert\nu(\di\theta) \Delta\mu(\di\phi)\,.
		\end{split}
	\end{equation}
	We chose $\nu$ in such a way that the function $g_\nu(\phi) = \int_{\T} \ln \left \vert \sin\left( \frac{\theta - \phi}{2}\right) \right \vert\nu(\di\theta)$ is in $L^2(\T)$ with derivative in $L^2(\T)$. With this choice of $\nu$ and applying \eqref{eq:CS} we conclude that there exists a constant $c$ such that
	
	\begin{equation}
		\label{eq:D2}
		\left \vert \int_{\T\times \T} \ln \left \vert \sin\left( \frac{\theta - \phi}{2}\right) \right \vert\nu(\di\theta) \Delta\mu(\di\phi) \right\vert \leq c D(\mu_{HT}^\beta, \mu_{HT}^{\beta'})\,.
	\end{equation}
	Next, taking the Fourier transform and apply again the Cauchy-Schwartz inequality as in \eqref{eq:CS} we obtain 
	\begin{equation}
		\label{eq:D3}
		\begin{split}
			\left \vert \int_{\T\times \T} \ln \left \vert \sin\left( \frac{\theta - \phi}{2}\right) \right \vert\left(\mu_{HT}^{\beta'}-\nu\right)(\di\theta) \Delta\mu(\di\phi) \right \vert & = \left \vert \sum_{k\geq 1} \frac{1}{k}\wh{\left( \mu_{HT}^{\beta'}-\nu\right)}_k \overline{\wh{\Delta \mu_k}}\right \vert\\ & \leq D(\mu_{HT}^{\beta'},\nu)D(\mu_{HT}^{\beta'}, \mu_{HT}^{\beta})\,,
		\end{split}
	\end{equation}
	since $D(\mu_{HT}^{\beta'},\nu)$ is bounded. Combining  \eqref{eq:D1}, \eqref{eq:D2}  and \eqref{eq:D3}  we conclude that there exists a constant $c_0$ such that
	\begin{equation}
		D(\mu_{HT}^\beta,\mu_{HT}^{\beta'})\leq \frac{c_0}{\beta}(\beta - \beta')\,,
	\end{equation}
	from which \eqref{Distance}   follows.
	
\end{proof}

For convenience, we define  $F_{HT}(V,\beta)$  as the value of  
the functional at the minimizer, namely 
\begin{equation}
	\label{F_min}
	F_{HT}(V,\beta):=\cF^{(V,\beta)}(\mu^\beta_{HT}).
\end{equation}
The quantity $F_{HT}(V,\beta)$ is  referred to as free energy  of the Circular $\beta$ ensemble at high-temperature.
It is a standard result that  (see e.g. \cite{Zelada2019})
\begin{equation}
	\label{eq:free_eig}
	F_{HT}(V,\beta)=-\lim_{N\to\infty}\frac{1}{N}\log \cZ^{HT}_N(V,\beta)\,,
\end{equation}
where the partition function $\cZ^{HT}_N(V,\beta)$  of the Circular $\beta$ ensemble at high-temperature is defined in \eqref{eq:CBU_ht_joint_density}.

\begin{remark}
	\label{rem:free_energy}
	We notice that from \eqref{eq:ZHT_relation} and \eqref{eq:free_eig} we can also obtain the free energy  $F_{HT}(V,\beta)$ from the partition function $Z^{HT}_N(V,\beta)$ of the CMV matrix ensemble  \eqref{eq:HT_alpha}, namely: 
	\begin{equation}
		F_{HT}(V,\beta) = - \lim_{N\to\infty} \frac{\ln (Z^{HT}_N(V,\beta))}{N} - \ln(2)\, .
	\end{equation}
\end{remark}

The literature related to the high-temperature regime of the classical $\beta$-ensembles is quite broad. For completeness, we mention some of the results in the field. In \cite{Allez2012,Allez2013,Hardy2020,Duy2018,trinh2019beta,trinh2020beta,Forrester2021} the authors explicitly computed the mean density of states for the classical Gaussian, Laguerre, Jacobi, and Circular $\beta$ ensemble at high-temperature. In \cite{Allez2012,Allez2013,Hardy2020,Forrester2021} the densities of states are computed as a solution of some particular ordinary differential equations. On the other hand, in \cite{Duy2018,trinh2019beta,trinh2020beta}  the density of states is constructing  from the moment generating functions.  Several authors \cite{BG2015,Trinh2019,Nakano2018,Nakano2020,Lambert2021} investigated the local fluctuations of the eigenvalues, and they observed that in this regime they are described by a Poisson process. In particular, in \cite{Lambert2021}  Lambert studied the local fluctuations for general Gibbs ensembles on $N$-dimensional manifolds,  moreover he also studied the asymptotic behaviour of the maximum eigenvalue for the classical $\beta$  ensembles at high-temperature.  In \cite{Forrester2021,forrester2021highlow} the loop equations for the classical $\beta$-ensembles at high-temperature are studied, in particular in \cite{forrester2021highlow} a  duality between high and low temperature is uncovered. There are also results   for a Coulomb gas at high temperature in two dimensions  \cite{Akemann2019}. It is worth mentioning also the work \cite{Mazzuca2020}, where some new tridiagonal random matrix ensembles related to the classical $\beta$ one at high-temperature are defined.

\section{Proof of Theorem~\ref{THM:RELATION}}
\label{sec3}
The  probability distribution \eqref{eq:GGE}  of  generalized Gibbs ensemble of the Ablowitz--Ladik lattice  is very close to the probability distribution \eqref{eq:HT_alpha} of the Circular $\beta$  ensemble  at high-temperature with an external source. Indeed, the only difference between the two ensembles is the exponent of 
the terms  $\left(1-\vert \alpha_j\vert \right)$  in the probability distributions \eqref{eq:GGE} and \eqref{eq:HT_alpha} and the fact that the random matrix of the Ablowitz--Ladik lattice  is a rank $2$ perturbation of the random matrix of the circular $\beta$-ensemble.
Our first main result contained in Theorem~\ref{THM:RELATION} relates  the mean density of states of the random Lax matrix $\mathcal{E}$ of the  Ablowitz--Ladik lattice to the mean density of states  of the random matrix $E$  from the  Circular $\beta$  ensemble at high-temperature.

To prove the result, we use the {\em moment matching technique} and the following lemma.
\begin{lemma}(\cite[Lemma B.1 - B.2]{Bai2010})
	\label{lem:bay}
	Let  $\di \sigma, \di \sigma'$ be two measures   defined on  $\mathbb{T}$, with the same moment sequence $\{u^{(\ell)}\}_{\ell\geq 0}$. If 
	\begin{equation}
		\label{eq:divergence_moments}
		\lim_{\ell\to \infty}\inf\frac{(u^{(2\ell)})^\frac{1}{2\ell}}{\ell} < \infty\, ,
	\end{equation}
	then $\di \sigma = \di \sigma'$.
\end{lemma}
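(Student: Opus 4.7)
The plan is to recognize the growth hypothesis \eqref{eq:divergence_moments} as a form of Carleman's condition and to invoke the classical uniqueness theorem for the Hamburger moment problem. Since $\T=[-\pi,\pi]$ sits inside $\R$, both $\di\sigma$ and $\di\sigma'$ may be regarded as finite positive Borel measures on $\R$ with a common power-moment sequence $\{u^{(\ell)}\}_{\ell\geq 0}$, and the question of their coincidence is exactly the moment determinacy problem.

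First, I would extract Carleman's divergence criterion $\sum_{n\geq 1}(u^{(2n)})^{-1/(2n)}=\infty$ from the hypothesis. By Lyapunov's inequality applied to the normalized probability measure $\sigma/\sigma(\T)$, the sequence $n\mapsto(u^{(2n)}/\sigma(\T))^{1/(2n)}$ is non-decreasing in $n$, so the $\liminf$ in \eqref{eq:divergence_moments} in fact governs the entire tail; consequently, for some $c>0$ and all sufficiently large $n$ one has $(u^{(2n)})^{-1/(2n)}\geq c/n$, and comparison with the harmonic series yields divergence. Second, I would invoke Carleman's theorem: a positive Borel measure on $\R$ whose even moments satisfy $\sum_{n}(u^{(2n)})^{-1/(2n)}=\infty$ is uniquely determined by its moment sequence. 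Applied to $\di\sigma$ and $\di\sigma'$, which share all moments, this forces $\di\sigma=\di\sigma'$.

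An equivalent and arguably shorter route is available precisely because $\T$ is compact: one may bypass Carleman altogether and appeal to Stone-Weierstrass. Polynomials are uniformly dense in $C(\T)$; equality of all power moments means $\int p\,\di\sigma=\int p\,\di\sigma'$ for every polynomial $p$; uniform density promotes this to every continuous $f$; and the Riesz representation theorem concludes $\di\sigma=\di\sigma'$. In this bounded-support regime the condition \eqref{eq:divergence_moments} is in fact automatic, since $|u^{(2\ell)}|\leq \pi^{2\ell}\sigma(\T)$ forces $(u^{(2\ell)})^{1/(2\ell)}=O(1)$ and the ratio in \eqref{eq:divergence_moments} tends to zero. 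The only mildly technical point is the passage from the $\liminf$ formulation to the standard Carleman series, which is made routine by Lyapunov's monotonicity; there is no substantive obstacle, and the real content of the lemma is moment determinacy for compactly supported measures.
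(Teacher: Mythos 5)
The paper offers no proof of this lemma at all: it is imported verbatim from \cite[Lemmas B.1--B.2]{Bai2010}, so there is no internal argument to compare against and your proposal must stand on its own. Its second route does: the Stone--Weierstrass argument is a complete and correct proof in this compact setting, and your observation that \eqref{eq:divergence_moments} is automatic on $\T$ (the moments grow at most geometrically, so $(u^{(2\ell)})^{1/(2\ell)}$ stays bounded) is exactly right --- the growth hypothesis only has content for the Hamburger problem on $\R$, which is the setting Bai--Silverstein actually treat. One caveat: as the lemma is used in the proof of Theorem~\ref{THM:RELATION}, the ``moments'' are the Fourier coefficients $c_n=\int_\T e^{in\theta}\mu(\di\theta)$, not the power moments $\int_\T \theta^n\mu(\di\theta)$, so the density you really need is that of \emph{trigonometric} polynomials in $C(\T)$ (Stone--Weierstrass for the circle, or Fej\'er); this reading also sidesteps the fact that power moments are not even well defined for a measure on the torus carrying mass at the identified point $\theta=\pm\pi$.

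Your first (Carleman) route, however, contains a genuine gap. From Lyapunov monotonicity of $a_\ell:=(u^{(2\ell)}/\sigma(\T))^{1/(2\ell)}$ and finiteness of $\liminf_\ell a_\ell/\ell$ you infer that ``the liminf governs the entire tail,'' i.e.\ $a_\ell\leq C\ell$ for all large $\ell$, whence $a_\ell^{-1}\geq c/\ell$. That implication is false for monotone sequences: take $a_\ell=n_k$ for $n_{k-1}<\ell\leq n_k$ with $n_k$ growing super-geometrically; then $\liminf_\ell a_\ell/\ell\leq 1$ while $a_{n_{k-1}+1}/(n_{k-1}+1)\to\infty$, so the claimed pointwise bound fails infinitely often. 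The correct passage from the Riesz-type hypothesis \eqref{eq:divergence_moments} to Carleman's condition --- which is precisely the content of the cited Bai--Silverstein lemmas --- runs through a sparse subsequence: choose $n_k\geq 2n_{k-1}$ with $a_{n_k}\leq Cn_k$; monotonicity gives $a_\ell^{-1}\geq (Cn_k)^{-1}$ for all $\ell\leq n_k$, so each block $n_{k-1}<\ell\leq n_k$ contributes at least $(n_k-n_{k-1})/(Cn_k)\geq 1/(2C)$ to the Carleman sum, which therefore diverges. On $\T$ this repair is unnecessary (the $a_\ell$ are bounded, so the Carleman sum diverges trivially), but since you present the Carleman route as your primary general argument, the step as written would fail outside the compact case.
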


Next we define the {\em free energy} of the generalized Gibbs ensemble of   the  Ablowitz--Ladik lattice  at temperature $\beta^{-1}$ and in an external field $V$ as:	
\begin{equation}
	\label{eq:free_energy}
	F_{AL}(V,\beta) = -\lim_{N\to \infty} \frac{1}{N} \ln Z^{AL}_N(V,\beta)\,,
\end{equation}
where the partition function $Z^{AL}_N(V,\beta)$ is defined in \eqref{eq:partition_AL}. 
The next proposition shows that the free energy $F_{AL}(V,\beta)$ 
of the Generalized Gibbs ensemble of the Ablowitz--Ladik lattice and the free energy  $F_{HT}(V,\beta)$ in \eqref{eq:free_eig}  of the Circular $\beta$  ensemble at high-temperature  are related. This fact  allows us  to calculate the moments of the mean density of states of the CMV matrix $E$  in \eqref{E}  and of the Lax matrix $\mathcal{E}$  in \eqref{eq:Lax_matrix}. 

\begin{proposition}
	\label{PROP:MOMENT_RELATION}
	The free energy $F_{AL}(V,\beta)$  in  \eqref{eq:free_energy} of the AL lattice and the free energy $F_{HT}(V,\beta)$  in \eqref{eq:free_eig} of the Circular $\beta$  ensemble at high-temperature are analytic with respect to $\beta>0$, and are related by 
	\begin{equation}
		\label{eq:free_energy_relation}
		\partial_\beta \left(\beta F_{HT}(V,\beta)\right) + \ln(2) = F_{AL}(V,\beta).
	\end{equation}	
	The moments of the  density of states  $\mu^\beta_{AL}$   of the Lax matrix $\cE$ in  \eqref{eq:Lax_matrix} endowed with the probability measure \eqref{eq:GGE} and the moments of the density of states $\mu_{HT}$   of the Circular $\beta$  ensemble in the high-temperature regime \eqref{eq:HT_alpha} are related to the free energies 
	$F_{AL}(V,\beta)$ and $F_{HT}(V,\beta)$  by
	\begin{equation}
		\label{eq:moments_free_energy}
		\begin{split}
			&\Re\int_\T e^{i\theta m }\mu_{AL}^\beta(\di \theta)= \partial_t  F_{AL}\left(V + \frac{t}{2}\Re(z^m),\beta\right)_{\vert_{t=0}}\,,\\
			&  \Re\int_\T e^{i\theta m}\mu_{HT}^\beta(\di \theta) = \partial_t  F_{HT}\left(V + \frac{t}{2}\Re(z^m),\beta\right)_{\vert_{t=0}},
		\end{split}
	\end{equation}
	and analogously for the imaginary part of the moments taking care of using the potential $V + \frac{t}{2}\Im(z^m)$.
\end{proposition}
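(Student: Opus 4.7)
The plan is to establish both free energies via a common transfer-operator representation, and then to extract the identity \eqref{eq:free_energy_relation} by recognizing the HT ensemble as the AL ensemble with a position-dependent, slowly-varying local temperature. More concretely, for each fixed $\gamma>0$ I would introduce a transfer operator $T_{V,\gamma}$ associated with a one-dimensional chain whose on-site weight is $(1-\vert\alpha_j\vert^2)^{\gamma-1}$ and whose finite-range interaction encodes $\mathrm{Tr}(V(\cdot))$ localized on a CMV window; this range is indeed finite because the CMV matrix is $5$-banded and $V$ is a trigonometric polynomial. Standard Krein--Rutman / Perron--Frobenius arguments yield a simple positive dominant eigenvalue $\lambda(V,\gamma)$, analytic in $\gamma$ on compact subsets of $(0,\infty)$ and in the coefficients of $V$. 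Set $\Phi(V,\gamma):=-\ln\lambda(V,\gamma)$.

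Since $Z_N^{AL}(V,\beta)$ has constant exponent $\beta-1$ and the periodic Lax matrix $\cE$ is translation invariant, a standard trace computation gives $Z_N^{AL}(V,\beta)\sim \lambda(V,\beta)^{N}$, so that $F_{AL}(V,\beta)=\Phi(V,\beta)$. For $Z_N^{HT}$ the exponent at site $j$ equals $\beta(1-j/N)$ and varies slowly in $j$. Grouping sites into $M=M(N)$ blocks of size $k=k(N)$ with $k,M\to\infty$ and $k/N\to 0$, the exponent is essentially constant on each block; Lipschitz continuity of $\gamma\mapsto \Phi(V,\gamma)$ controls the block-approximation error, and the resulting sum of block contributions is a Riemann sum converging to
\begin{equation*}
-\lim_{N\to\infty}\tfrac{1}{N}\ln Z_N^{HT}(V,\beta)=\int_0^1 \Phi(V,\beta(1-s))\,\di s=\tfrac{1}{\beta}\int_0^\beta\Phi(V,u)\,\di u.
\end{equation*}
By Remark~\ref{rem:free_energy} the left-hand side equals $F_{HT}(V,\beta)+\ln 2$, hence $\beta\bigl(F_{HT}(V,\beta)+\ln 2\bigr)=\int_0^\beta \Phi(V,u)\,\di u$. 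Differentiating in $\beta$ produces
\begin{equation*}
\partial_\beta\bigl(\beta F_{HT}(V,\beta)\bigr)+\ln 2=\Phi(V,\beta)=F_{AL}(V,\beta),
\end{equation*}
which is \eqref{eq:free_energy_relation}. Analyticity of both free energies in $\beta>0$ follows at once from the analyticity of $\Phi$, itself inherited from the simplicity of the dominant eigenvalue of $T_{V,\gamma}$.

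For the moment identities \eqref{eq:moments_free_energy}, I would differentiate the partition function under a perturbation of the potential. Setting $W(\theta)=\tfrac{1}{2}\Re(z^m)$ one computes $\mathrm{Tr}(W(\cE))=\tfrac14\mathrm{Tr}(\cE^m+\cE^{-m})=\tfrac12(K^{(m)}+\overline{K^{(m)}})=\Re K^{(m)}$, using unitarity of $\cE$ and $K^{(m)}=\tfrac12\mathrm{Tr}(\cE^m)$. Hence
\begin{equation*}
\partial_t F_{AL}(V+tW,\beta)\big|_{t=0}=\lim_{N\to\infty}\tfrac{1}{N}\meanval{\Re K^{(m)}}=\Re\int_\T e^{im\theta}\mu^\beta_{AL}(\di\theta),
\end{equation*}
where the last equality uses the normalization $\int f\,\di\mu^\beta_{AL}=\lim\tfrac{1}{N}\meanval{\tfrac{1}{2}\mathrm{Tr}(f(\cE))}$ appropriate to the $2N\times 2N$ periodic CMV matrix with eigenvalues naturally counted twice. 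The identical manipulation with $\tilde E$ in place of $\cE$ gives the HT identity, and taking $W=\tfrac12\Im(z^m)$ yields the imaginary-part formulas. The exchange of limit and $t$-derivative is justified by the analytic dependence of $\lambda(V,\gamma)$ on the coefficients of $V$.

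The technically delicate step is the Riemann-sum approximation for $Z_N^{HT}$: one must quantify the spectral gap and the regularity of the dominant eigenvector of $T_{V,\gamma}$ uniformly for $\gamma$ in compact subintervals of $(0,\infty)$, which is made subtle by the degeneration of the weight $(1-\vert\alpha\vert^2)^{\gamma-1}$ as $\gamma\downarrow 0$; this forces a careful choice of weighted function space for $T_{V,\gamma}$. All remaining ingredients---the transfer-matrix asymptotic at constant exponent, the analyticity of the simple leading eigenvalue, and the perturbative derivation of the moment formulas---are standard.
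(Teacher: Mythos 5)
Your proposal follows essentially the same route as the paper's proof: a finite-range transfer operator whose simple, positive, analytic leading eigenvalue gives $F_{AL}(V,\beta)$ directly at constant exponent, a Riemann-sum argument over the slowly varying exponents $\beta(1-j/N)$ giving $\beta\bigl(F_{HT}(V,\beta)+\ln 2\bigr)=\int_0^\beta\Phi(V,u)\,\di u$, and differentiation of the free energy in the potential for the moments; your normalization checks (the factor $1/(2N)$ for the $2N\times 2N$ matrix $\cE$, and $\mathrm{Tr}(W(\cE))=\Re K^{(m)}$) agree with the paper. Two remarks. First, the step you correctly flag as delicate --- controlling the non-homogeneous product of transfer operators --- is where the paper spends almost all of its effort: it compares $Z_N^{HT}$ to $\mbox{Tr}(\wt\cT_{M,\bzeta_M})$ for an explicit product of conjugated operators (Proposition~\ref{prop:limit}) and then isolates the contribution of the top eigenfunction with errors $1+a_j/N+O(N^{-2})$ per block (Proposition~\ref{prop:stima_brutta}); so your text is an outline of the same proof rather than a completed alternative, and "Lipschitz continuity of $\gamma\mapsto\Phi$" alone does not justify replacing the product of distinct operators by a product of leading eigenvalues --- one also needs the eigenfunctions $\psi_1(\cdot,\bzeta^{(j)})$ to vary analytically in $\bzeta^{(j)}$ and the subleading spectrum to stay uniformly dominated, which is exactly what the paper proves. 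Second, for the HT moment identity the paper does not repeat the transfer-operator manipulation with $\wt E$: it uses the variational characterization $F_{HT}(V,\beta)=\cF^{(V,\beta)}(\mu_{HT}^\beta)$ and tests the Euler--Lagrange equation against $\partial_t\mu^{(t)}|_{t=0}$, an envelope argument that avoids having to differentiate the inhomogeneous operator product; your direct differentiation of $\cZ_N^{HT}$ would also work but requires justifying the exchange of limit and $t$-derivative through the same uniform spectral estimates.
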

Since the proof of this proposition is rather technical, we postpone it to Appendix \ref{appendix:lemma_transfer}. We are now ready  to prove the  first  main Theorem~\ref{THM:RELATION}.
\begin{proof}[Proof of  Theorem~\ref{THM:RELATION}.]
	
	First, we define $c_n(\beta) := \int e^{i\theta n} \mu_{AL}^\beta(\di \theta),$ 
	
	\noindent $d_n(\beta) := \int e^{i\theta n} \mu_{HT}^\beta(\di \theta)$. Since  the eigenvalues of $\cE$ lie on the unit circle, we deduce the following chain of inequalities:
	
	\begin{equation}
		\vert c_n(\beta)\vert  = \lim_{N\to\infty}\frac{\left \vert  \meanval{\mbox{Tr}(\cE^n)} \right \vert }{2N}\leq   \lim_{N\to\infty}\frac{\meanval{\left \vert  \mbox{Tr}(\cE^n) \right \vert }}{2N} \leq 1\, ,
	\end{equation}
	where the expectation in made according to the Gibbs measure.
	Thus, from Lemma \ref{lem:bay}, we obtain that the measure $\mu_{AL}^\beta(\di \theta)$ is uniquely characterized by its moments.  
	
	Next,  from Proposition~\ref{PROP:MOMENT_RELATION} and Remark \ref{rem:lip_moments} we obtain the relation
	\begin{equation}
		\label{eq:moment_relation_thm}
		c_n(\beta)=  \partial_\beta \left(\beta d_n(\beta)\right)\,  \;a.e
	\end{equation}
	between the moments of the measures $\mu^\beta_{AL}(\theta)$ and $\mu_{HT}^\beta(\theta)$ respectively.
	
	This, together with \eqref{eq:moment_relation_thm}  and Remark~\ref{rem:lip_moments} implies  
	\begin{equation}
		\mu^\beta_{AL}( \theta) = \partial_\beta \left(\beta \mu_{HT}^\beta(\theta)\right)\;\;a.e.\,.
	\end{equation}
\end{proof}
Our next main result provides an explicit expression of the mean density of states $\mu_{HT}(\theta)$ for  the potential  $V(z)=\eta\Re(z)$.
This generalizes the result by Gross and Witten  \cite{Gross_Witten} and Baik-Deift-Johansson \cite{BDJ} obtained  for finite temperature  to the high-temperature regime.

\section{Proof of Theorem \ref{THM:MEAN_DENSITY}}
\label{sec4}
The proof of Theorem~\ref{THM:MEAN_DENSITY} consists of mainly two  parts: we first derive from the variational equations with respect to the  functional $\cF^{(V,\beta)}$, the double confluent Heun equation 
\eqref{eq:heun}. Then we show that such equation admits an analytic solution in any compact sets of the complex plane containing the origin.
From Theorem \ref{thm:lambert} we know that the density $\mu^\beta_{HT}$ is characterized as the unique minimizer of the functional \eqref{eq:functional}.
We follow the  ideas developed in \cite{Allez2012,Allez2013,Forrester2021,DKM} to find this minimizer explicitly.
We consider the Euler-Lagrange equation of the functional  \eqref{eq:functional}, namely
\begin{equation}
	\label{eq:VarEq}
	\dfrac{\delta \cF^{(V,\beta)}}{\delta \mu}=2V(\theta) -2\beta \int_{\T} \ln\sin\left(\frac{\vert \theta-\phi
		\vert }{2}\right)\mu(\phi)\di \phi  +\ln \mu(\theta)=C(V,\beta)\, ,\;\;a.e.
\end{equation}
where  the equation holds almost everywhere, $C(V,\beta)$ is a constant depending on the potential and $\beta$, but not on the variable $\theta$. 
Differentiating the Euler-Lagrange equation \eqref{eq:VarEq} at the minimizer $\mu_{HT}^\beta(\theta)$  with respect to $\theta$ we obtain the following integral equation (see \cite[Proposition 2.5]{Hardy2020}):

\begin{equation}
	\label{eq:euler_der}
	\partial_\theta \mu_{HT}^\beta(\theta)+\mu_{HT}^\beta(\theta)[2\partial_\theta (V(\theta))+\beta \mathcal{H}\mu_{HT}^\beta(\theta)] =0\, ,
\end{equation}
where   $\mathcal{H}$ is the Hilbert transform  defined on $L^2(\T)$ as
\begin{equation}
	\label{eq:hilbert_ransform}
	\mathcal{H}\mu_{HT}^\beta(\theta)= -\mbox{p.v.}\int_{\T}\cot\left(\frac{\theta-\phi}{2}\right)\mu_{HT}^\beta(\phi)\di \phi
\end{equation}
and $\mbox{p.v.}$ is the Cauchy principal value, that is the limit as $\varepsilon\to 0$ of the integral  on the torus $\T$  restricted to the domain $\vert e^{i\theta}-e^{i\phi}\vert >\varepsilon$. We notice that the Hilbert transform $\mathcal{H}$ is diagonal on the bases of exponential $\{e^{in\theta}\}_{n\in \Z}$, meaning that

\begin{equation}
	\itH e^{in\theta} = 2\pi i\mbox{sgn}(n)e^{in\theta}\, ,
\end{equation}
where $\mbox{sgn}(\cdot)$ is the sign function with the convention that $\mbox{sgn}(0) = 0$. 

Setting $e^{i\theta}=z$ and $e^{i\phi}=w$, we recognize the Riesz--Herglotz  kernel  $\dfrac{z+w}{z-w}$ expressed as 
\[
\dfrac{z+w}{z-w}=-i\cot\left(\frac{\theta-\phi}{2}\right)\,.
\]
Therefore 
\[
\int_{\T}\cot\left(\frac{\theta-\phi}{2}\right)\mu(\phi)\di \phi= i + 2\int_{S^1}\dfrac{\mu(\phi)_{\vert e^{i\phi} = w}dw}{z-w},
\]
where $S^1$ is the anticlockwise oriented circle,  and  we used the normalization condition  $\int_{\T}\mu(\phi)\di \phi=1$. In  the following, in  order to simplify  the notation, we indicate  $\mu(\phi)_{\vert e^{i\phi} = w}$ just as $\mu(w)$.
We can recast \eqref{eq:euler_der} in the form
\begin{equation}
	\label{eq:var1}
	z \partial_z \mu(z)+\mu(z)\left[2z \partial_zV(z)-\beta+ 2i\beta\mbox{p.v.}\int_{S^1}\mu(w)\dfrac{\di w }{z-w}\right]=0\,.
\end{equation}
For $z\in\C\backslash S^1$ let us define
\begin{equation}
	\label{eq:F_def}
	G(z) := \int_{S^1}\mu(w)\dfrac{\di w }{w-z} = \frac{i}{2} -  \frac{1}{2}\int_{\T}\cot\left(\frac{\theta-\phi}{2}\right)\mu(\phi)\di \phi\,,
\end{equation}
and for $z\in S^1$  let $G_\pm(z) = \lim_{\wt z \to z} G(\wt z) $ for $\wt z$  inside and outside  the unit circle respectively.
Then  by \eqref{eq:var1}
\begin{equation}
	\begin{split}
		G_{\pm}(z)  & = \pm \pi i\mu(z) + \mbox{p.v.}\int_{S^1}\mu(w)\dfrac{\di w }{w-z} \\
		& =  \pm \pi i \mu(z)  + \frac{i}{2} - \frac{2iz \partial_z V(z) }{2\beta} - \frac{i z \partial_z \mu(z)}{2 \beta \mu(z)}\,.
	\end{split}
\end{equation}
This implies that for $z\in S^1$ one has 
\begin{align}
	&G_+(z) + G_-(z) = i - \frac{2iz \partial_z V(z) }{\beta} - \frac{i z \partial_z \mu(z)}{\beta \mu(z)} \, ,\\
	&G_+(z) - G_-(z) = 2  \pi i\mu(z)\,.
\end{align}
Multiplying the two previous expressions, one obtains:
\begin{equation}
	G_+(z)^2 -G_-(z)^2 = 2 \pi i \mu(z)\left( i - \frac{2iz\partial_zV(z) }{\beta} - \frac{ z i \partial_z \mu(z)}{\beta \mu(z)}\right)\,.
\end{equation}
In order to proceed we have to specify the potential $V(z)$, in our case we will consider 
\begin{equation}
	\label{BDJ}
	V(z) = \frac{\eta}{2}\left(z + \frac{1}{z}\right).
\end{equation}
Applying  the Sokhtoski-Plemelj  formula \cite{Gakhov1990} to the above boundary value problem, one obtains  \begin{equation}
	\label{eq:fquadro}
	G^2(z) = i\int_{S^1} \frac{\mu(w)}{w-z}\di w - \frac{i\eta}{\beta}\int_{S^1}\frac{\left(w -\wo w\right) \mu(w)}{w-z} \di w - \frac{i}{\beta}\int_{S^1}\frac{w \partial_w\mu(w)}{w-z} \di w\, .
\end{equation}
The second term in the r.h.s. of the above expression gives
\begin{equation}
	\begin{split}
		\int_{S^1}\frac{(w-\wo w) \mu(w)}{w-z} \di w &=  \int_{S^1} \frac{w \pm z }{w-z}\mu(w)\di w + \frac{1}{z}\int_{S^1}\mu(w)\left(-\frac{1}{w-z} + \frac{1}{w} \right)\di w \\ 
		&=\left( z G(z) + i\lambda -\frac{G(z)}{z} + \frac{i}{z}\right)\, ,
	\end{split}
\end{equation}
where we have defined 
\begin{equation}
	\label{lambda}
	\lambda:=-i \int_{S^1}\mu(w) \di w,\quad  \lambda \in \R.
\end{equation}
The third term in the r.h.s. of \eqref{eq:fquadro} gives
\begin{equation}
	\begin{split}
		\int_{S^1}\frac{w \partial_w\mu(w)}{w-z} \di w & = \int_{S^1}\partial_w\mu(w) \di w  + z \int_{S^1}\frac{ \partial_w\mu(w)}{w-z} \di w \\
		& =z\int_{S^1}\frac{\mu(w)}{(w-z)^2}\di w = z \partial_z G(z),
	\end{split}
\end{equation}
where in these last relations we use the results of Theorem~\ref{thm:lambert} about the regularity of $\mu$.
Now we can rewrite \eqref{eq:fquadro} as
\begin{equation}
	\label{eq:Fsquaredfinal}
	G^2(z) = i G(z) -\frac{i\eta}{\beta}\left( z G(z)  +i\lambda  - \frac{G(z)}{z} + \frac{i}{z}\right) -\frac{i z \partial_z G(z)}{\beta}\,.
\end{equation}
\begin{remark}
	\label{rem:lambda}
	In the above ODE, the parameter $\lambda = \lambda(\eta,\beta)$  depends  via \eqref{lambda} implicitly on the function $G(z)$.
	Our strategy to solve the above equation is to consider $\lambda $ as a free parameter that is uniquely fixed by the analytic properties of the function $G(z)$.
	
\end{remark}
We can now turn the non-linear first order ODE \eqref{eq:Fsquaredfinal} into a linear second order ODE through the  substitution 
\begin{equation}
	\label{Ftov}
	G(z) = i + \frac{i z v'(z)}{\beta v(z)},
\end{equation}
getting:
\begin{equation}
	\label{eq:heun2}
	z^2 v''(z) + \left(-\eta + z (\beta +1) + \eta z^2\right)v'(z) +\eta\beta (z +\lambda) v(z) =0\, ,
\end{equation}
which is the DCH equation in  \eqref{eq:heun}.
The solutions to this  equation have generically essential singularities  at $z=0$ and $z=\infty$   and the local description near the singularities  depends on the parameter $\eta$ and $\beta$.
Indeed we have that the two fundamental solutions near $z=0$    have the following asymptotic behaviour

\begin{align}
	\label{eqv_1}
	&v^{(0)}_1(z)=e^{\eta(z+\frac{1}{z})}z^{1-\beta}\kappa_1(\eta,\beta,\lambda;z),\quad  - \frac{3\pi}{2} < \arg(\eta z) < \frac{3 \pi}{2}\, , \\
	\label{eqv_2}
	&v^{(0)}_2(z)=\kappa_2(\eta,\beta,\lambda,z),\quad  - \frac{\pi}{2} < \arg(\eta z) < \frac{5 \pi}{2}\, ,
\end{align}
where $\kappa_{j}(\eta,\beta,\lambda;z)$, $j=1,2$,  are asymptotic series in a neighbourhood of $z=0$.
The  quantity $\lambda$ is usually referred to as \textit{accessory parameter}.
Since  $G(z)$ is analytic in the unit disk,  continuous up to the boundary, and $G(0)=i$,  we deduce that 
\[
v(z)=v_0\exp\left[-i\int_{0}^z \beta\frac{G(s)-i}{s}ds\right],\quad v_0\neq 0,
\]
has to be analytic in the unit disk.
For this reason   we  seek for a solution $v(z)$ of the DCH equation   that is  analytic  in the unit disk  and such that $v(z) \xrightarrow[z\to 0]{} v_0 $, where $v_0$ is a nonzero constant.  
\paragraph*{Construction of the analytic solution of  equation \eqref{eq:heun2}.}
Of the fundamental solutions \eqref{eqv_1} and \eqref{eqv_2} of equation \eqref{eq:heun2}   only the solution \eqref{eqv_2} has a chance of being analytic near $z=0$. This  occurs  if we are able to make the asymptotic series defined by $\kappa_2(\eta,\beta,\lambda,z)$, into a convergent series. We look for a solution   of \eqref{eq:heun2}  in the form {\color{black} of a convergent power series}
\begin{equation}
	\label{eq:taylor_series}
	v(z) = \sum_{k=0}^\infty a_k z^k,
\end{equation}
where $a_k=a_k(\eta,\beta,\lambda)$.
This  implies the following recurrence relations for the coefficients  $\{a_k\}_{k\in \N}$
\begin{align}
	\label{eq:zero_recurrence}
	&\eta(a_0\lambda\beta - a_1) =0\, , \\
	\label{eq:recurrence_taylor}
	&a_k(k^2 + k\beta  + \lambda\beta \eta ) + \eta (k-1+\beta) a_{k-1} - \eta(k+1) a_{k+1} = 0 \,, \quad k > 0\, ,
\end{align} 
where we have the freedom to chose $\lambda$ and $a_0$. Generically, the above recurrence relation for the coefficients $\{a_k\}_{k\in \N}$
gives a divergent series in \eqref{eq:taylor_series}.  To obtain a convergent series, we follow the ideas in  \cite{Tertychniy2007,Buchstaber2015}.

We start by considering the $2\times 2 $ matrices $R^{(s)}_k$ defined as 
\begin{equation}
	\label{Rks}
	R_k^{(s)}= M_kM_{k+1}\dots M_s,\; s\geq k,\;\;  M_k = \begin{pmatrix}
		1 + \frac{\lambda\beta \eta}{k(k+\beta)} & \frac{\eta^2}{k(k+\beta+1)}\\
		1 & 0
	\end{pmatrix}\,,
\end{equation}
which satisfy  the recurrence relation  $R_k^{(s)}=R_k^{(s-1)}M_s$.  The next lemma shows that the limit of  $R_k^{(s)}$  as $s\to\infty$  exists. 
\begin{lemma}
	\label{lemma_R}
	Let $R_k^{(s)}$  be the matrix defined in  \eqref{Rks}. Then the  limit  of   $R_k^{(s)}$  as $s\to\infty$ exists and 
	\begin{equation}
		\label{eq:R_def}
		R_k:=\lim_{s\to\infty}R_k^{(s)}.
	\end{equation}
	The matrices $R_k$, $k\geq 1$  satisfy  the descending recurrence relation:
	\begin{equation}
		\label{eq:R_recurrence}
		R_k = M_k R_{k+1}\quad k \geq 1\, .
	\end{equation} 
	Furthermore each entry of the matrix $R_k=R_k(\beta,\eta,\lambda)$ is differentiable with respect to the parameters $\beta, \, \eta$, and $\lambda$.
\end{lemma}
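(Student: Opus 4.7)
The plan is to establish convergence of the partial products through a Dyson-type expansion that exploits a projection structure in the limiting factor. Writing
\[
M_k = M_\infty + \epsilon_k, \qquad M_\infty := \begin{pmatrix} 1 & 0 \\ 1 & 0 \end{pmatrix}, \qquad \epsilon_k = \begin{pmatrix} \dfrac{\lambda\beta\eta}{k(k+\beta)} & \dfrac{\eta^2}{k(k+\beta+1)} \\ 0 & 0 \end{pmatrix},
\]
two facts will drive the argument: $M_\infty$ is an idempotent, $M_\infty^{2}=M_\infty$, so $M_\infty^{m}=M_\infty$ for every $m\geq 1$; and $\|\epsilon_k\|\leq C/k^2$ with $C$ uniform on compact subsets of the parameter space $(\beta,\eta,\lambda)\in(0,\infty)\times\R\times\R$ (the positivity $\beta>0$ guarantees $k+\beta\geq k$).

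Expanding the product over subsets yields
\[
R_k^{(s)}=\prod_{j=k}^{s}(M_\infty+\epsilon_j)=\sum_{I\subseteq\{k,\ldots,s\}}P_I^{(s)},
\]
where, for $I=\{j_1<\cdots<j_r\}$, the term $P_I^{(s)}$ is an interlaced product of the factors $\epsilon_{j_i}$ with runs of $M_\infty$-factors between and around them. By $M_\infty^{m}=M_\infty$ each such run collapses to at most a single $M_\infty$, which stabilises $P_I^{(s)}$ once $s>j_r$ and gives the uniform bound $\|P_I^{(s)}\|\leq c^{\,r+1}\prod_{i=1}^{r}j_i^{-2}$. Summing,
\[
\sum_{I\subseteq\{k,\ldots,s\}}\|P_I^{(s)}\|\leq c\prod_{j\geq k}\Bigl(1+\tfrac{c'}{j^{2}}\Bigr)<\infty,
\]
uniformly in $s$ and on compact parameter sets. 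Dominated convergence over the countable index of finite subsets of $\{k,k+1,\ldots\}$ produces $R_k=\lim_{s\to\infty}R_k^{(s)}$. The descending recurrence $R_k=M_kR_{k+1}$ then follows by passing to the limit in the identity $R_k^{(s)}=M_kR_{k+1}^{(s)}$.

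Joint differentiability in $(\beta,\eta,\lambda)$ is obtained by the same mechanism: each partial derivative of $\epsilon_k$ is again $O(1/k^2)$ uniformly on compacts (for instance $\partial_\beta\bigl[\lambda\beta\eta/(k(k+\beta))\bigr]=\lambda\eta/(k+\beta)^{2}$), so the termwise-differentiated Dyson series admits an absolutely convergent majorant of the same form, legitimising differentiation under the sum.

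The main obstacle is that $M_\infty\neq I$, so the naive Cauchy estimate $\|R_k^{(s+1)}-R_k^{(s)}\|\leq\|R_k^{(s)}\|\,\|M_{s+1}-I\|$ cannot conclude convergence; indeed $\|M_s-I\|\not\to 0$. The idempotent identity $M_\infty^{2}=M_\infty$ is precisely what absorbs the interior $M_\infty$-factors in the Dyson expansion and reduces the bound to a product governed only by the summable weights $\|\epsilon_j\|=O(1/j^2)$.
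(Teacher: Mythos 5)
Your argument is correct, and it is genuinely different from the one in the paper. The paper works entrywise: writing $R_1^{(s)}=\left(\begin{smallmatrix} f_s & h_s\\ p_s & q_s\end{smallmatrix}\right)$, it observes that $h_s$ and $q_s$ are $O(s^{-2})$ multiples of $f_{s-1}$ and $p_{s-1}$, reduces everything to the scalar three-term recurrence $f_s=(1+\frac{\lambda\beta\eta}{s(s+\beta)})f_{s-1}+\frac{\eta^2}{(s-1)(s+\beta)}f_{s-2}$, bounds $|f_s|$ by a convergent infinite product, and then shows $|f_{s+1}-f_s|=O(s^{-2})$ so that the sequence is Cauchy; differentiability is obtained by proving each $f_s$ is analytic in $(\lambda,\eta,\beta)$ and invoking Weierstrass on the uniform limit. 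You instead keep the matrix structure, split $M_k=M_\infty+\epsilon_k$ with $M_\infty$ idempotent and $\|\epsilon_k\|=O(k^{-2})$, and run a Dyson expansion in which the idempotency collapses the interior runs of $M_\infty$ and leaves an absolutely convergent sum over finite subsets whose terms stabilise; you correctly identify why the naive Cauchy estimate fails ($M_s\not\to I$) and why idempotency is the right substitute. Your route is more structural and transfers verbatim to any product $\prod(P+\epsilon_j)$ with $P$ power-bounded and $\sum_j\|\epsilon_j\|<\infty$, and it yields the recurrence $R_k=M_kR_{k+1}$ and differentiation under the sum cleanly (the extra factor of $|I|$ produced by termwise differentiation is harmless against the product majorant). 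The paper's route is more elementary and gives an explicit Cauchy rate; it also delivers the slightly stronger conclusion of joint analyticity in the parameters on $\C\times\C\times\{\Re\beta>0\}$, which your Dyson series would also give if you allowed complex parameters, though you only claim differentiability. One small point worth making explicit in your write-up: the operator norm $\|M_\infty\|=\sqrt{2}>1$, so it is the boundedness of the powers $M_\infty^m=M_\infty$ (not a norm bound $\leq 1$) that controls the $c^{\,r+1}$ factor, and the interchange of limit and sum should be justified by Tannery's theorem (dominated convergence for series), which your uniform majorant supplies.
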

Since the proof of this lemma is rather technical, we defer it to appendix \ref{app:B}.
Further, let us define the following function:
\begin{equation}
	\label{eq:little_xi_def}
	\xi(\eta,\beta,\lambda) :=  \begin{pmatrix}
		\lambda & \frac{\eta}{\beta +1} 	\end{pmatrix} R_1 \begin{pmatrix}
		1 \\ 0
	\end{pmatrix} .
\end{equation}
We are now ready to prove the following result that will give us 
a necessary condition to fix the value of $\lambda$.
{
	\begin{proposition}
		\label{thm:analytic_confluent}
		For the values of $\lambda$ such that 
		\begin{equation}
			\label{eq:zero_condition}
			\xi(\eta,\beta,\lambda) =0,
		\end{equation}
		where  $ \xi(\eta,\beta,\lambda) $ is defined in \eqref{eq:little_xi_def}, the  Double Confluent Heun equation \eqref{eq:heun2} admits a non-zero   solution   $v=v(z,\eta,\beta)$   defined by the series 
		\eqref{eq:taylor_series} that is uniformly convergent in $\vert z\vert \leq r$ with $r\geq 1$. The   corresponding coefficients $\{a_k\}_{k\in \N}$ of the Taylor expansion \eqref{eq:taylor_series}
		are given by the relation
		\begin{align}
			\label{eq:reconstruction_0}
			&a_0 = \frac{1}{\beta}\begin{pmatrix}
				1 & 0 
			\end{pmatrix} R_1 \begin{pmatrix}
				1 \\ 0
			\end{pmatrix}\, , \\
			\label{eq:reconstruction}
			& a_k = (-1)^k\frac{\eta^k}{k!(k+\beta)} \begin{pmatrix}
				0 & 1
			\end{pmatrix}  R_k \begin{pmatrix}
				1 \\ 0 
			\end{pmatrix}\, , \quad k \geq 1 \, \, , 
		\end{align}
		where the matrices $R_k$ are defined in \eqref{eq:R_def}.	For each $\lambda$ satisfying  \eqref{eq:zero_condition}, the solution  $v(z)$ of the DCH equation  \eqref{eq:heun2},  analytic at zero is unique up to a multiplicative  factor.
	\end{proposition}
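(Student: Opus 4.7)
The approach I would follow is to substitute a formal power series $v(z) = \sum_{k\geq 0} a_k z^k$ into the DCH equation \eqref{eq:heun2}, extract the boundary equation \eqref{eq:zero_recurrence} together with the three-term recurrence \eqref{eq:recurrence_taylor}, and then \emph{solve the recurrence from infinity} rather than from zero. The motivation for this backward approach is that the generic formal series solution diverges (because of the essential singularity of $v_1^{(0)}$), so one must pick out the recessive solution; the backward matrix product $R_k$ of Lemma \ref{lemma_R} is precisely the classical device for isolating it.

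First, I would recast the scalar recurrence \eqref{eq:recurrence_taylor} in matrix form. Setting $u_k := (R_k)_{21}$, the second row of $R_k = M_k R_{k+1}$ gives $u_k = (R_{k+1})_{11}$, so that reading off the first entry of the vector identity $R_k \begin{pmatrix} 1 \\ 0 \end{pmatrix} = M_k R_{k+1}\begin{pmatrix} 1 \\ 0 \end{pmatrix}$ produces
\[
u_{k-1} = \Bigl(1 + \tfrac{\lambda \beta \eta}{k(k+\beta)}\Bigr)u_k + \tfrac{\eta^2}{k(k+\beta+1)} u_{k+1}.
\]
This is exactly \eqref{eq:recurrence_taylor} after the rescaling $a_k = (-1)^k \eta^k u_k/(k!(k+\beta))$, so the formulas \eqref{eq:reconstruction_0}--\eqref{eq:reconstruction} automatically satisfy the bulk recurrence for $k \geq 1$ without any constraint on $\lambda$. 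The constraint arises only from the $k=0$ equation \eqref{eq:zero_recurrence}: inserting the proposed values of $a_0$ and $a_1$ into $a_1 = \lambda\beta a_0$ collapses, after clearing denominators, to $\lambda(R_1)_{11} + \eta(R_1)_{21}/(\beta+1) = 0$, i.e.\ to $\xi(\eta,\beta,\lambda) = 0$. Thus \eqref{eq:zero_condition} is precisely the compatibility condition between the boundary and the bulk equations.

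For convergence I would lean on Lemma \ref{lemma_R}: the existence of the infinite product $M_k M_{k+1}\cdots$, combined with the fact that the entries of $M_k$ differ from their $k\to\infty$ limit by $O(1/k^2)$, yields a uniform bound $\sup_{k\geq 1}\|R_k\| \leq C(\eta,\beta,\lambda) < \infty$. Inserting this into \eqref{eq:reconstruction} gives
\[
|a_k| \leq \frac{C\,|\eta|^k}{k!(k+\beta)},
\]
so $v(z) = \sum a_k z^k$ converges absolutely and uniformly on every compact subset of $\mathbb{C}$; in particular it is analytic on any closed disc $\{|z|\leq r\}$ with $r \geq 1$, as claimed. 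Uniqueness is then essentially free: once $\lambda$ is fixed by $\xi = 0$ and $a_0$ is chosen, \eqref{eq:zero_recurrence} fixes $a_1$ and the three-term recurrence fixes every subsequent $a_k$, so the space of analytic solutions at the origin is at most one-dimensional.

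The step I expect to be the most delicate is the uniform-in-$k$ bound $\sup_k \|R_k\| < \infty$. While Lemma \ref{lemma_R} guarantees pointwise existence of each $R_k$, the convergence estimate must be tracked with care since $M_k$ tends to a \emph{singular} limit (the rank-one matrix with first column $(1,1)^\top$ and vanishing second column), so the standard "product of matrices close to the identity" argument does not apply directly; one instead exploits the $1/k^2$ decay of the off-diagonal and diagonal corrections of $M_k$ together with the specific nilpotent-plus-projection structure of the limit. All the remaining steps of the proposition reduce to algebraic bookkeeping on the recurrence.
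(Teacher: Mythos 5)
Your proposal is correct and, for the construction and convergence parts, follows essentially the same route as the paper: you verify that the bulk recurrence \eqref{eq:recurrence_taylor} is an automatic consequence of the descending relation $R_k = M_kR_{k+1}$ (your identity $u_k=(R_{k+1})_{11}$ is exactly the paper's observation that $\begin{pmatrix}0&1\end{pmatrix}R_k\begin{pmatrix}1\\0\end{pmatrix}=\begin{pmatrix}1&0\end{pmatrix}R_{k+1}\begin{pmatrix}1\\0\end{pmatrix}$), you isolate $\xi(\eta,\beta,\lambda)=0$ as the compatibility condition coming from the $k=0$ equation, and you deduce uniform convergence from the uniform boundedness of the entries of $R_k$ via $|a_k|\leq C|\eta|^k/(k!(k+\beta))$ — all of which matches the paper, with the boundedness indeed supplied by the proof of Lemma~\ref{lemma_R} (the three-term recurrence bound \eqref{eq:bound_abs}), so your worry about the singular limit of $M_k$ is already absorbed there. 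The one place where you genuinely diverge is uniqueness: the paper computes the Wronskian $W(v,\tilde v)(z)=e^{-\eta(z+1/z)}z^{-(\beta+1)}(v'\tilde v-v\tilde v')$, shows it is constant, and derives a contradiction from the essential singularity of $e^{\eta/z}z^{\beta+1}$ if two independent analytic solutions existed; you instead observe that \eqref{eq:zero_recurrence} and \eqref{eq:recurrence_taylor} determine all $a_k$ from $a_0$, so the space of solutions analytic at the origin is at most one-dimensional. Your argument is more elementary and works, but note that solving the recurrence forward requires dividing by $\eta(k+1)$, so the case $\eta=0$ needs the separate (trivial) remark that the recurrence then forces $a_k k(k+\beta)=0$, hence $a_k=0$ for $k\geq1$ — the paper handles this case explicitly, and you should too.
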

}

\begin{proof}
	First, we show that choosing  $a_k$ according to \eqref{eq:reconstruction_0}-\eqref{eq:reconstruction} we obtain  a solution of the recurrence \eqref{eq:recurrence_taylor}. 
	We notice that due to the recurrence relation for the matrices $R_k$ \eqref{eq:R_recurrence}, we have that:	 
	\begin{equation}
		\begin{pmatrix}
			0& 1
		\end{pmatrix} R_k \begin{pmatrix}
			1 \\ 0
		\end{pmatrix} = \begin{pmatrix}
			1 & 0 
		\end{pmatrix} R_{k+1 } \begin{pmatrix}
			1 \\ 0
		\end{pmatrix}\, .
	\end{equation}
	Thus, applying the previous equation and \eqref{eq:reconstruction_0}-\eqref{eq:reconstruction}, we can recast \eqref{eq:recurrence_taylor} as:
	\begin{equation}
		\begin{split}
			&\left[ (-1)^{k-1} \frac{\eta^k}{(k-1)!} \begin{pmatrix}
				1 & 0
			\end{pmatrix} R_k + (-\eta)^k\ \frac{k(k+\beta) + \eta\lambda\beta}{k!(k+\beta)}   \begin{pmatrix}
				1 & 0
			\end{pmatrix} R_{k+1} \right.\\
			&\left.+ (-1)^{k} \frac{\eta^{k+2}}{k!(k+1+\beta)}  \begin{pmatrix}
				0 & 1
			\end{pmatrix} R_{k+1}\right] \begin{pmatrix}
				1 \\ 0
			\end{pmatrix} \\ 
			& = \frac{(-\eta)^{k}}{(k-1)!}\left[ - \begin{pmatrix}
				1 & 0 
			\end{pmatrix}R_k + \begin{pmatrix}
				1 + \frac{\lambda \beta\eta}{k(k+\beta)} & \frac{\eta^2}{k(k+1+\beta)}
			\end{pmatrix} R_{k+1}\right] \begin{pmatrix}
				1 \\ 0
			\end{pmatrix}  = 0\, ,
		\end{split}
	\end{equation}
	where in the last equality we have enforced \eqref{eq:R_recurrence}.
	Next  we can rewrite \eqref{eq:zero_recurrence} in terms of the matrix $R_1$ exploiting \eqref{eq:reconstruction_0}-\eqref{eq:reconstruction}, namely
	\begin{equation}
		0= \begin{pmatrix}
			\lambda & \frac{\eta}{\beta +1} 	\end{pmatrix} R_1 \begin{pmatrix}
			1 \\ 0
		\end{pmatrix}  = \xi(\eta,\beta,\lambda)\, ,
	\end{equation}
	which  is exactly \eqref{eq:little_xi_def}.
	Since the entries of the matrices $R_k$ are uniformly bounded,  the solution $v(z)=\sum_{k\geq 0}a_kz^k$  with $a_k$ as in \eqref{eq:reconstruction},  defines a uniformly convergent Taylor series  in $\vert z\vert <r$ for any $r\geq 0$ and in particular for any $r>1$. 
	
	To show that the solution  analytic at $z=0$ is unique  up to a constant,  we consider the Wronkstian $W(v,\tilde{v})(z)$ 
	of two independent solution   $v$ and $\tilde{v}$  of the  Double Confluent Heun equation
	\eqref{eq:heun2}, namely   
	\[
	W(v,\wt v)(z)=e^{-\eta(z+\frac{1}{z})}z^{-(\beta+1)}(v'(z)\tilde{v}(z)-v(z)\tilde{v}'(z)).
	\]
	Since $W'(v,\wt v)(z)=0$, it follows that $W(v,\wt v)(z)=C$ a constant. If by contradiction we suppose that there are two analytic solutions at $z=0$, then from the above relation we obtain
	\[
	e^{-\eta z}(v'(z)\tilde{v}(z)-v(z)\tilde{v}'(z))=Ce^{\frac{\eta}{z}}z^{\beta+1}\,.
	\]
	When $\eta \ne 0$ the left-hand side of the above  equation is analytic and the right-hand side is not, that is clearly a contradiction. When  $\eta=0$  then \eqref{eq:heun2} becomes:
	\begin{equation}
		z^2 v''(z) + z(\beta+1)v'(z) =0\, .
	\end{equation}
	The above   equation has two independent solutions, one is the constant solution, which is analytic, the other one is  $v(z) = Cz^{-\beta}$ which is not analytic since $\beta>0$.
	
\end{proof}

\begin{remark}
	We observe that the equation  \eqref{eq:zero_condition}  does not uniquely determine $\lambda$.  Indeed, as it is shown in Figure \ref{fig:non_unique} the function $\xi(\eta,\beta,\lambda)$ may have several zeros for  given  $\eta$ and $\beta$.
\end{remark}
\begin{figure}[ht]
	\centering
	\includegraphics[scale=0.2]{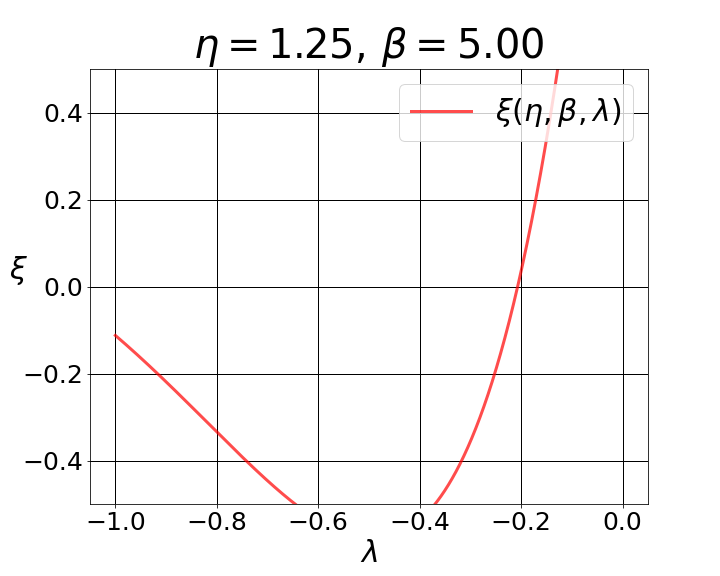}
	\includegraphics[scale=0.2]{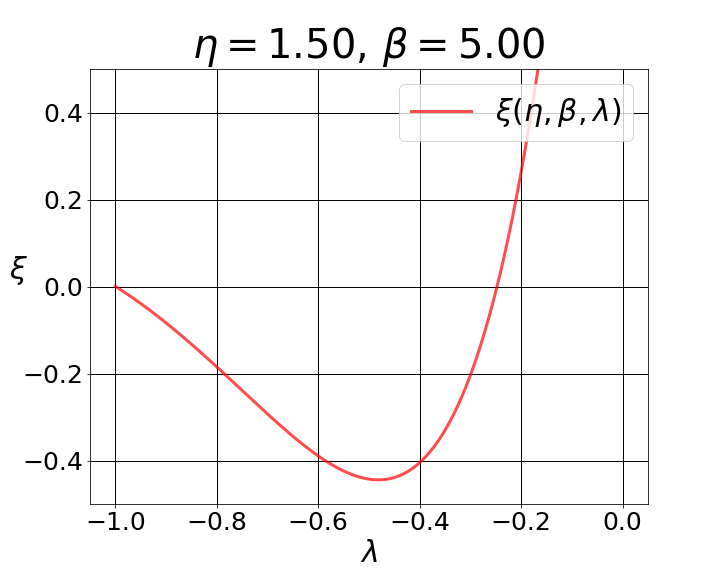}
	
	\includegraphics[scale=0.2]{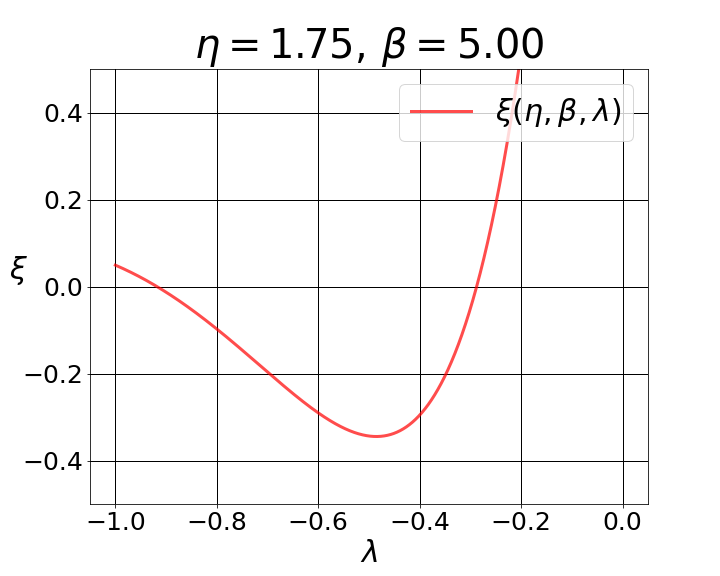}
	\includegraphics[scale=0.2]{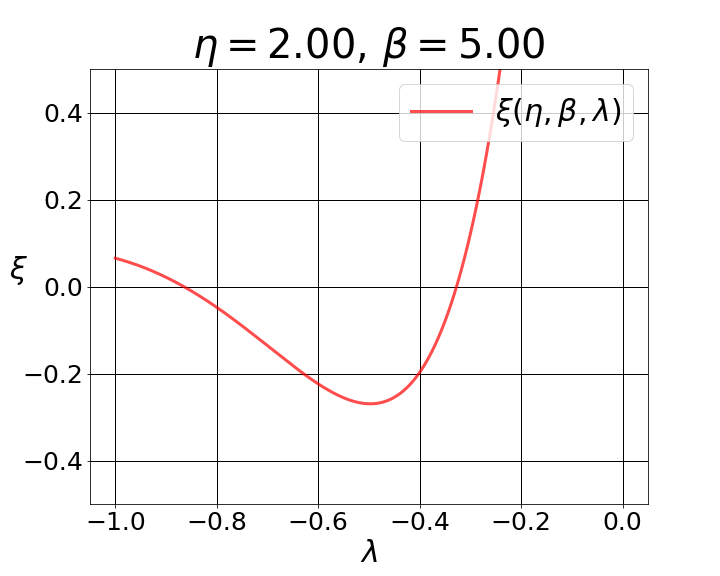}
	
	\caption{Plots of $\xi(\eta,\beta,\lambda)$ for various values of $\eta,\beta$}
	\label{fig:non_unique}
\end{figure}

\paragraph*{Choice of the parameter $\lambda$.}
We will now prove that  the parameter $\lambda$ is uniquely determined in  a neighbourhood of $\eta=0$ by requiring that the solution $v=v(z,\eta,\beta)$ 
depends continuously on the parameter $\eta$.

\begin{lemma}
	\label{LEM:UNIQUE}
	There exists an $\varepsilon >0$ such that for all $\eta \in (-\varepsilon,\varepsilon)$ and $\beta > 0 $ there is a unique $\lambda = \lambda(\eta,\beta)$ such that $\xi(\eta, \beta, \lambda(\eta,\beta)) = 0$.
\end{lemma}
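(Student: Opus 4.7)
The plan is to prove Lemma \ref{LEM:UNIQUE} via the implicit function theorem applied to the scalar equation $\xi(\eta, \beta, \lambda) = 0$ at the base point $(\eta, \lambda) = (0, 0)$.

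The first step is an explicit computation of $\xi$ along the slice $\eta = 0$. When $\eta = 0$ the matrix $M_k$ in \eqref{Rks} collapses to
$$P := \begin{pmatrix} 1 & 0 \\ 1 & 0 \end{pmatrix},$$
independently of $k$, $\beta$, and $\lambda$. Since $P$ is idempotent ($P^2 = P$), every partial product satisfies $R_1^{(s)}(0, \beta, \lambda) = P^s = P$, and passing to the limit furnished by Lemma \ref{lemma_R} yields $R_1(0, \beta, \lambda) = P$. Plugging into \eqref{eq:little_xi_def},
$$\xi(0, \beta, \lambda) = \begin{pmatrix} \lambda & 0 \end{pmatrix} P \begin{pmatrix} 1 \\ 0 \end{pmatrix} = \lambda,$$
so in particular $\xi(0, \beta, 0) = 0$ and $\partial_\lambda \xi(0, \beta, 0) = 1$ for every $\beta > 0$.

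Lemma \ref{lemma_R} also guarantees that the entries of $R_1$ are $C^1$ in $(\eta, \lambda)$, hence so is $\xi$. Because $\partial_\lambda \xi(0, \beta, 0) = 1 \neq 0$, the standard implicit function theorem immediately gives, for each fixed $\beta > 0$, a neighborhood of $\eta = 0$ on which a unique $C^1$ branch $\lambda = \lambda(\eta, \beta)$ exists with $\lambda(0, \beta) = 0$. To upgrade this to a $\beta$-independent $\varepsilon$, I would establish quantitative estimates of the form $|\xi(\eta, \beta, 0)| \leq C|\eta|$ and $|\partial_\lambda \xi(\eta, \beta, \lambda) - 1| \leq \tfrac{1}{2}$ for $|\eta|, |\lambda| < \varepsilon$, with $C$ and $\varepsilon$ independent of $\beta > 0$. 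Writing $M_k = P + \delta M_k$ with
$$\delta M_k = \begin{pmatrix} \dfrac{\lambda \beta \eta}{k(k+\beta)} & \dfrac{\eta^2}{k(k+\beta+1)} \\ 0 & 0 \end{pmatrix},$$
and using the $\beta$-uniform bounds $\tfrac{\beta}{k(k+\beta)} \leq \min(\tfrac{1}{k}, \tfrac{\beta}{k^2})$ and $\tfrac{1}{k(k+\beta+1)} \leq \tfrac{1}{k(k+1)}$, one expands $R_1 - P$ as a Dyson-like series whose terms are controlled by combining the projector identity $P^2 = P$, the vanishing bottom row of $\delta M_k$, and a term-by-term estimate of products $P\,\delta M_{k_1} P\,\delta M_{k_2}\,\cdots$. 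A parallel computation handles $\partial_\lambda(R_1 - P)$.

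The main obstacle will be the bookkeeping in this uniformity step: the naive bound $|(\delta M_k)_{11}| = O(|\lambda \eta|/k)$ is not summable in $k$, so one must exploit the rank-one structure of $P$ and the vanishing bottom row of $\delta M_k$ to extract an effective $O(\eta^2)$ smallness in $R_1 - P$ with summable-in-$k$ coefficients uniform in $\beta$. With this quantitative input in hand, a quantitative implicit function theorem then supplies a $\beta$-independent $\varepsilon > 0$ and a unique $\lambda(\eta, \beta)$ defined on $(-\varepsilon, \varepsilon)$, completing the proof.
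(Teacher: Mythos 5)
Your core argument coincides with the paper's proof of Lemma~\ref{LEM:UNIQUE}: at $\eta=0$ every $M_k$ in \eqref{Rks} collapses to the projector $P=\begin{pmatrix}1&0\\1&0\end{pmatrix}$, so $R_1=P$, $\xi(0,\beta,\lambda)=\lambda$, $\partial_\lambda\xi(0,\beta,0)=1$, and the implicit function theorem produces the branch through $\lambda=0$; the paper stops exactly there. Your additional worry about choosing $\varepsilon$ uniformly in $\beta>0$ is a legitimate reading of the statement that the paper silently skips, but the quantitative scheme you sketch does not close as written: since $\sum_k \beta/(k(k+\beta))\sim\log\beta$ as $\beta\to\infty$, on a $\beta$-independent box $|\eta|,|\lambda|<\varepsilon$ the entry $(R_1)_{11}$ can be as large as $\beta^{c\varepsilon^2}$ (for $\lambda\eta>0$ all coefficients in the three-term recurrence \eqref{recurrence_f} are positive, so $f_s\geq\prod_{k\leq s}(1+\lambda\beta\eta/(k(k+\beta)))$), and the proposed bound $|\partial_\lambda\xi-1|\leq\tfrac12$ fails; the rank-one structure of $P$ and the vanishing bottom row of $\delta M_k$ do not remove this, because every term of the Dyson expansion still carries the non-uniformly-summable factors $|a_{k_i}|$. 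A uniformity argument would instead have to work in a $\beta$-dependent window such as $|\lambda|\lesssim|\eta|/(\beta+1)$ (equivalently, rescale $\lambda=\mu/(\beta+1)$, after which $\sum_k|\mu\eta|\beta/((\beta+1)k(k+\beta))=O(|\mu\eta|\log\beta/\beta)$ is uniformly small), which is where the actual solution branch $\lambda\approx-\eta(R_1)_{21}/((\beta+1)(R_1)_{11})$ lives. In short: the part of your proof that matches the paper is fine; the extra uniformity step is an improvement in intent but needs a different normalization of $\lambda$ to go through.
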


\begin{proof}
	When $\eta=0$   the matrix $R_j=\begin{pmatrix}1&0\\1&0\end{pmatrix}$ so  that the only solution of the equation 
	\eqref{eq:zero_condition}   $\xi(\eta=0,\beta, \lambda) = 0$ is $\lambda = 0$.
	To show the existence of the solution \eqref{eq:zero_condition} for $\lambda=\lambda(\eta,\beta)$  near $\eta=0$, we use the implicit function theorem.
	We have to show that $\partial_\lambda \xi(\eta,\beta,\lambda)_{\vert_{(0,\beta,0)}} \ne 0$.
	For this purpose, we need to  evaluate
	\[ \partial_\lambda \left(M_k\right)_{(\eta= 0,\lambda = 0)} = \begin{pmatrix}
		0 & 0 \\
		0 & 0
	\end{pmatrix}\, , \]	
	where $M_k$ is defined in \eqref{eq:R_def}. This equation implies that \[
	\partial_\lambda(\xi(\eta,\beta,\lambda))_{\vert_{(0,\beta,0)}} = \begin{pmatrix}
		1& 0\end{pmatrix} \begin{pmatrix}
		1 & 0 \\
		1 & 0
	\end{pmatrix} \begin{pmatrix}
		1 \\ 0
	\end{pmatrix}=1.
	\]
	Thus we can apply the implicit function theorem, and we obtain the claim.
\end{proof}
\color{black}

We conclude the proof of Theorem~\ref{THM:MEAN_DENSITY}. 
When $\eta=0$ the only analytic  solution of DCH equation is  $v(z)=c$, $c\in \C\backslash\{0\}$.  In this case, in principle $\lambda$ is undetermined. However, from Theorem~\ref{thm:lambert}
the minimizer  $\mu^\beta_{HT}$ of \eqref{eq:functional} is the uniform measure on the circle and therefore from equation \ref{lambda} one has $\lambda=0$. 
From Lemma~\ref{LEM:UNIQUE}  when  $\eta\in(-\varepsilon,\varepsilon)$, there exists a unique $\lambda(\eta,\beta)$  that satisfies equation \eqref{eq:zero_condition}  and such that $\lambda(\eta=0,\beta)=0$
and therefore  by Proposition~\ref{thm:analytic_confluent} we obtain for $\eta\in(-\varepsilon,\varepsilon)$, the unique  solution  $v(z,\eta,\beta)$ of the DCH equation analytic in any compact set $\vert z\vert \leq r$, with $r>0$ and in particular  when $r=1$.  
Because of lemma~\ref{lemma_R} the solution $v(z,\eta,\beta)$ is differentiable with respect to the parameters $\eta$ and $\beta$.

We remark that $v(z)\neq 0$ on the unit disc $\overline{\D}$ because of the relation  
\eqref{Ftov} between the analytic function $G(z)$ and $v(z)$  and the uniqueness  of the minimizer $\mu^{\beta}_{HT}$ and of the analytic solution $v(z)$ of  \eqref{eq:heun2}.

To complete our proof   of Theorem~\ref{THM:MEAN_DENSITY} we  recover the explicit expression of $\mu^\beta_{HT}(\theta)$ from $G(z)$ and $v(z)$  using the  {\em Poisson representation formula} (see for example \cite[Chapter 1]{Simon2005}): 
\begin{equation}
	\label{mu_temp}
	\mu^\beta_{HT}(\theta) = -\frac{1}{2\pi} - \frac{\Re(i G(e^{i\theta}))}{\pi \beta}  = \frac{1}{2\pi} + \frac{1}{\pi \beta} \Re\left.\left(\frac{zv'(z)}{v(z)}\right\vert _{z=e^{i\theta}}\right)\, .
\end{equation}  
\qed

\begin{figure}[ht]
	\centering
	\includegraphics[scale = 0.2]{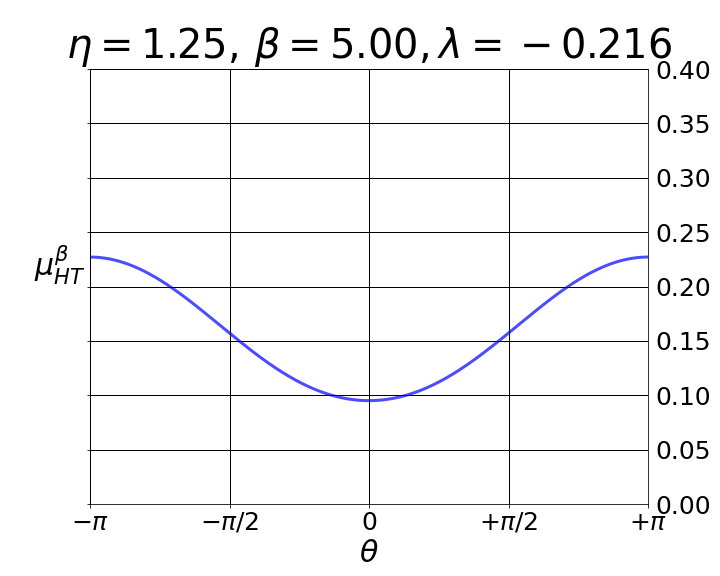}
	\includegraphics[scale = 0.2]{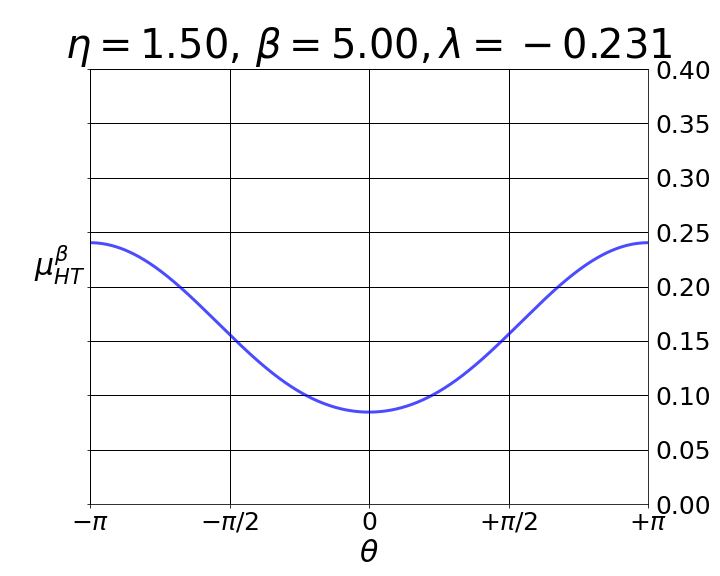}

	\caption{The mean density of states $\mu^\beta_{HT}$ for different parameters.}
	\label{fig:density}
\end{figure}

In Figure \ref{fig:density} we plotted the density of states of the Circular $\beta$  ensemble in the high-temperature regime with potential $V(z) = \eta\Re(z)$. To produce this picture and Figure \ref{fig:non_unique}, we used extensively the \texttt{NumPy} \cite{numpy}, and \texttt{matplotlib} \cite{matplotlib} libraries.
\begin{remark}
	\label{rem_Deift}
	The Gross-Witten  \cite{Gross_Witten} and Baik-Deift-Johannson \cite{BDJ} solution 
	is obtained by 
	making the substitution  $\eta\to\beta\eta$ and $\beta\to\infty$  in equation
	\eqref{eq:functional}  which gives the functional 
	\[
	\cF^{(\eta)}(\mu):=  \int\int\limits_{\T\times \T} \ln\left\vert \sin\left(\frac{\theta-\phi}{2}\right)\right\vert ^{-1}\mu(\di \theta)\mu(\di \phi )+2\eta\int_{\T}\cos(\theta)\mu(\di \theta).
	\]  
	The minimizer is $\mu(\theta)= \frac{1}{2\pi}\left(1 - 2\eta \cos\theta\right)$  with $0\leq 2\eta\leq 1$. In this case the first moment $\lambda=-\eta$. 
	
\end{remark}
\begin{appendices}
	
	\section{Proof of Proposition \ref{PROP:MOMENT_RELATION}}
	\label{appendix:lemma_transfer}
	
	First, we prove the relation between the free energies \eqref{eq:free_energy_relation}  namely:
	
	\begin{equation}
		\label{eq:rel_free}
		\partial_\beta(\beta F_{HT}(V,\beta)) + \ln(2)= F_{AL}(V,\beta)\, ,
	\end{equation}
	and we show that  $ F_{HT}(V,\beta)$ is   analytic with respect to $\beta>0$.
	
	From Remark~\ref{rem:free_energy}, the above expression is equivalent to:
	
	\begin{equation}
		\partial_\beta\left(\beta  \lim_{N\to \infty} \frac{\ln(Z^{HT}_N(V,\beta))}{N}\right) =  \lim_{N\to \infty} \frac{\ln(Z^{AL}_N(V,\beta))}{N}\,.
	\end{equation}
	To prove this  relation, we will use the so-called transfer operator technique \cite{Kevrekidis2001,Krumhansl1975,Peyrard1989}. We are considering a potential of the form  $\mbox{Tr}(V(\cE))$ as in \eqref{eq:potential} which is of finite range $K$, meaning that it can be expressed as a sum of local quantities, i.e. depending on a finite number $2K$ of variables, with $K$  independent of  $N$  \cite{Nenciu2005}. For example, if $V(z) =\Re( z)$, then $\mbox{Tr}(\cE) = -2\sum_{j=1}^N\Re( \alpha_j\wo \alpha_{j+1})$ and in this case the range  is $K=1$. Let $N = KM + L$ with  $M,L\in \N$ and $L< K$.  We split the coordinates $(\alpha_1,\dots,\alpha_N)$  into  $M$ blocks  of length $K$ and a reminder of length $L$,  and we define  the vector $\wt \balpha_j $  of length  $K$ as $$\wt \balpha_j = (\alpha_{K(j-1)+1}, \alpha_{K(j-1)+2}, \ldots, \alpha_{Kj}).$$  In this notation,
	\[
	(\alpha_1,\dots,\alpha_N)=(\overbrace{\wt \balpha_1,\dots,\wt \balpha_M}^{KM},\overbrace{\alpha_{KM+1}, \ldots,\alpha_N}^{L}),
	\]
	\begin{equation}
		\begin{split}
			\mbox{Tr}(V(\cE))& = \sum_{\ell=1}^{M-1} W(\wt \balpha_\ell, \wt \balpha_{\ell+1}) + W(\wt \balpha_{M}, \overbrace{\alpha_{KM+1}, \ldots,\alpha_{KM+L}}^{L}, \overbrace{\alpha_1,\ldots, \alpha_{K-L}}^{K-L})\\
			&+W_1(\overbrace{\alpha_{KM+1}, \ldots,\alpha_{KM+L}}^{L}, \wt\balpha_1),
		\end{split}
	\end{equation}
	where $W \,:\, \overline{\D}^{K}\times \overline{\D}^{K} \to \R $ and $W_1 \,:\, \overline{\D}^{L}\times \overline{\D}^{K} \to \R $  are  continuous functions.   The last  two  terms  in the above expression are  different from the others since we may have an off-set of length $L$, due to periodicity.   	In the case $V(z) = \Re(z)$, then $W(\alpha_1,\alpha_2) = -2\Re(\alpha_1\wo \alpha_2)$,  
	there is no off-set and $W_1=0$.
	
	For convenience, we define
	\[
	\wt \balpha_{M+1}=(\alpha_{KM+1}, \ldots,\alpha_{KM+L}, \alpha_1,\ldots, \alpha_{K-L}).
	\]
	We can now rewrite $Z^{AL}_N(V,\beta)$  in \eqref{eq:partition_AL} as
	\begin{equation}
		\label{eq:partition_1}
		\begin{split}
			Z^{AL}_N(V,\beta) &= \int\limits_{\D^N} \prod_{j=1}^{N} \left(1-\vert \alpha_j\vert ^2\right)^{\beta-1}\\ &\times \exp\left(- \sum_{\ell=1}^{M} W(\wt \balpha_\ell, \wt \balpha_{\ell+1}){ -W_1(\alpha_{KM+1}, \ldots,\alpha_{KM+L}, \wt\balpha_1)}\right)  \di^2 \balpha \,.
		\end{split}
	\end{equation}
	We are now in position to apply the transfer operator technique to compute this partition function. On $L^2(\D^K)$ we introduce the scalar product 	
	
	\begin{equation}
		\label{eq:scalar_prod}
		\left( f,g\right) = \int_{\D^K} f(\bz) \wo{g(\bz)} \di \bz\,,
	\end{equation}
	where $\bz = (z_1, \ldots, z_K)$. This scalar product induces a norm on $L^2(\D)$ and also a norm on bounded  operators $T \;: \, L^2(\D^K) \to L^2(\D^K)$ as
	
	\begin{equation}
		\vert \vert T\vert \vert  := \sup_{f\,:\, \vert \vert f\vert \vert _2 = 1} \vert \vert Tf\vert \vert _2\, ,
	\end{equation}
	where $\vert \vert f\vert \vert _2$ is the standard $L^2$ norm.
	
	Let  $\bzeta=(\zeta_1,\ldots,\ldots\zeta_{2K})$ with $\zeta_{K+j}=\zeta_j>0$ for   $j=1,\dots, K$. We define  the  continuous  family of transfer operators $\cT_{\bzeta} \,:\, L^2(\D^K) \to L^2(\D^K)$ as
	
	\begin{equation}
		\label{eq:transfer}
		(\cT_{\bzeta}f)(\wt \balpha_2) = \int_{\D^K} f(\wt \balpha_1) \prod_{j=1}^{2K} \left(1-\vert \alpha_j\vert ^2\right)^{\frac{\zeta_j-1}{2}}\exp\left(- W(\wt \balpha_1, \wt \balpha_{2})\right)  \di^2\wt \balpha_1\, .
	\end{equation}
	We observe that  $\cT_{\bzeta}$  is an integral operator whose kernel $\prod_{j=1}^{2K} \left(1-\vert \alpha_j\vert ^2\right)^{\frac{\zeta_j-1}{2}}\exp\left(- W(\wt \balpha_1, \wt \balpha_{2})\right) $ belongs to $L^2(\D^K\times \D^K)$, and therefore $\cT_{\bzeta}$ is an Hilbert-Schimdt operator.
	We conclude that  there exists a complete set of   normalized eigenfunctions $\{\psi_j\}_{j\geq 1}$ with eigenvalues $\{\lambda_j\}_{j\geq 1}$  numbered so  that  $\{\vert \lambda_j\vert \}_{j\geq 1}$ is a  non-increasing sequence  such that:	
	\begin{align}
		\label{eq:eigenvalues_reduction}
		& (\cT_{\bzeta}\psi_j)(\bz,V,\bzeta) = \lambda_j(V,\bzeta) \psi_j(\bz,V,\bzeta)\, ,\\
		\label{eq:delta_rep}
		&\sum_{n=1}^{\infty} \wo \psi_n(\bz,V,\bzeta)\psi_n(\bz',V,\bzeta) = \delta_\bz(\bz')\,, 
	\end{align}  	
	where $\delta_\bz(\cdot)$ is the Dirac delta function at $\bz\in \D^K$.  
	
	For clearness, we collect a series of properties that the operator $\cT_{\bzeta}$ fulfils:
	
	\begin{itemize}
		\item[a)] $\sum_{j=1}^{\infty}\vert  \lambda_j(V,\bzeta)\vert ^2<\infty$ and $\cT_{\bzeta}$ is compact, since it  is Hilbert-Schimdt (see \cite[Chapter V.2.4]{KatoBook});
		\item[b)] {The eigenvalue $\lambda_1(\bzeta,V)$ is simple, positive and $\lambda_1(\bzeta,V) > \vert \lambda_n(\bzeta,V)\vert $ for all $n\geq 2$ (see \cite[Theorem 137.4]{ZaanenBook});}
		\item[c)] The  eigenvalue $\lambda_1(\bzeta,V)$ and its eigenfunction $\psi_1(\bz, \bzeta,V)$ are analytic functions of the parameters $\bzeta$, and for  any real polynomial $P$ there exists an $\varepsilon>0$ such that the maps $t\to\lambda_1(\bzeta,V+tP)$, $t\to\psi_1(\bz,\bzeta,V+tP)$ are analytic for $ \vert t \vert <\varepsilon$ (see \cite[Chapter VII, Theorem 1.8]{KatoBook}).
	\end{itemize}

	We artificially rewrite $Z^{AL}_N(V,\beta)$  in \eqref{eq:partition_1} as
	
	\begin{equation}
		\label{eq:arificial}
		\begin{split}
			&Z^{AL}_N(V,\beta)  =  \int\limits_{\D^{N+K}} \delta_{\wt \balpha_1}(\bgamma)\prod_{\ell = 1}^K\left[\left(1-\vert \gamma_\ell\vert ^2\right)\left(1-\vert \alpha_\ell\vert ^2\right)\right]^{\frac{\beta-1}{2}} \prod_{\ell = K+1}^N\left(1-\vert \alpha_\ell\vert ^2\right)^{\beta-1} \\ & \times \exp\left(- \sum_{\ell=1}^{M-1} W(\wt \balpha_\ell, \wt \balpha_{\ell+1}) - W(\wt \balpha_{M}, \alpha_{KM+1}, \ldots,\alpha_N, \gamma_1,\ldots, \gamma_{K-L})\right)\\
			&\times\exp\left(-W_1(\alpha_{KM+1}, \ldots,\alpha_{KM+L}, \bgamma)\right) \prod_{j=1}^N\di^2\alpha_j \di^2 \bgamma\, ,
		\end{split}
	\end{equation}
	where $\bgamma = (\gamma_1, \ldots, \gamma_K)$ and $\bgamma \in \D^K$. \\
	We can use \eqref{eq:delta_rep} with $\bzeta=\boldsymbol{\beta}=\overbrace{(\beta,\dots,\beta)}^{2K}$ to rewrite the previous equation as:
	\begin{equation}
		\begin{split}
			&Z^{AL}_N(V,\beta)  =  \int\limits_{\D^{N+K}}\sum_{n=1}^\infty\wo \psi_n(\bgamma,V,\boldsymbol{\beta})\psi_n(\wt\balpha_1,V,\boldsymbol{\beta})\\ &\times \prod_{\ell = 1}^K\left[\left(1-\vert \gamma_\ell\vert ^2\right) \left(1-\vert \alpha_\ell\vert ^2\right)\right]^{\frac{\beta-1}{2}} \prod_{\ell = K+1}^N\left(1-\vert \alpha_\ell\vert ^2\right)^{\beta-1} \\ & \times \exp\left(- \sum_{\ell=1}^{M-1} W(\wt \balpha_\ell, \wt \balpha_{\ell+1}) - W(\wt \balpha_{M}, \alpha_{KM+1}, \ldots,\alpha_N, \gamma_1,\ldots, \gamma_{K-L})\right) \\
			&\times\exp\left(-W_1(\alpha_{KM+1}, \ldots,\alpha_{KM+L}, \bgamma)\right) \di^2\balpha \di^2 \bgamma\\
			&=\sum_{n=1}^\infty \int_{\D^{N}}\wo \psi_n(\bgamma,V,\boldsymbol{\beta})
			(\cT_{\boldsymbol{\beta}}\psi_n)(\wt \balpha_2)
			\prod_{\ell=K+1}^{2K} \left(1-\vert \alpha_\ell\vert ^2\right)^{\frac{\beta-1}{2}} \prod_{\ell = 2K+1}^N\left(1-\vert \alpha_\ell\vert ^2\right)^{\beta-1} \\ & \times \exp\left(- \sum_{\ell=2}^{M-1} W(\wt \balpha_\ell, \wt \balpha_{\ell+1}) - W(\wt \balpha_{M}, \alpha_{KM+1}, \ldots,\alpha_N, \gamma_1,\ldots, \gamma_{K-L})\right)\\
			&\times\exp\left(-W_1(\alpha_{KM+1}, \ldots,\alpha_{KM+L}, \bgamma)\right)\prod_{\ell = 1}^K\di^2 \gamma_\ell \left(1-\vert \gamma_\ell\vert ^2\right)^{\frac{\beta -1}{2}}\prod_{\ell=K+1}^N\di^2 \alpha_\ell \\
			&=\sum_{n=1}^\infty \lambda_n(V,\boldsymbol{\beta})\int_{\D^{N}}\wo \psi_n(\bgamma,V,\boldsymbol{\beta})\psi_n(\wt \balpha_2,V,\boldsymbol{\beta})
			\prod_{\ell=K+1}^{2K} \left(1-\vert \alpha_\ell\vert ^2\right)^{\frac{\beta-1}{2}} \\ 
			&\prod_{\ell = 2K+1}^N\left(1-\vert \alpha_\ell\vert ^2\right)^{\beta-1} \exp\left(- \sum_{\ell=2}^{M-1} W(\wt \balpha_\ell, \wt \balpha_{\ell+1}) \right)\\
			&\times\exp\left(- W(\wt \balpha_{M}, \alpha_{KM+1}, \ldots,\alpha_N, \gamma_1,\ldots, \gamma_{K-L})\right) \\ &\times \exp\left(-W_1(\alpha_{KM+1}, \ldots,\alpha_{KM+L}, \bgamma)\right)\prod_{\ell = 1}^K \di^2 \gamma_\ell \left(1-\vert \gamma_\ell\vert ^2\right)^{\frac{\beta -1}{2}}\prod_{\ell=K+1}^N \di^2 \alpha_\ell\, . 
		\end{split}
	\end{equation}
	In the above integral, from the first to the second relation we 
	identify  the integral operator   $\cT_{\boldsymbol{\beta}}$  where $\boldsymbol{\beta}=\overbrace{(\beta,\dots,\beta)}^{2K}$.
	We  repeatedly apply  $\cT_{\boldsymbol{\beta}}$ and \eqref{eq:eigenvalues_reduction} another $M-2$ times      to the above integral,  to obtain:
	
	\begin{align}
		\label{eq:eigenvalues}
		&Z^{AL}_N(V,\beta)  = \sum_{n=1}^\infty (\lambda_n(V,\boldsymbol{\beta}))^{M-1}R_n,\\
		\label{Rn}
		&R_n=\int_{\D^{2K+L}} \overline{\psi}_n(\bgamma,V,\boldsymbol{\beta})\psi_n(\wt\balpha_M,V,\boldsymbol{\beta})
		\prod_{\ell = 1}^K \di^2 \gamma_\ell\left(1-\vert \gamma_\ell\vert ^2\right)^{\frac{\beta -1}{2}}\\
		\nonumber
		& \prod_{\ell=(M-1)K+1}^{MK} \left(1-\vert \alpha_\ell\vert ^2\right)^{\frac{\beta-1}{2}}\prod_{\ell =MK+1}^N\left(1-\vert \alpha_\ell\vert ^2\right)^{\beta-1}\\ 
		\nonumber
		&\times    \exp\left( - W(\wt \balpha_{M}, \alpha_{KM+1}, \ldots,\alpha_N, \gamma_1,\ldots, \gamma_{K-L})\right)\\
		\nonumber
		&\times\exp\left(-W_1(\alpha_{KM+1}, \ldots,\alpha_{KM+L}, \bgamma)\right)\prod_{\ell=(M-1)K+1}^N \di^2 \alpha_\ell.
	\end{align}
	
	The modulus of the reminder $\vert R_n\vert $  in \eqref{Rn} can be easily  bounded  from above and below by two constants $C_1,C_2 >0$ independent of $N$, therefore we conclude  from \eqref{eq:eigenvalues} that 
	
	\begin{equation}
		\label{eq:FAL}
		F_{AL}(V,\beta) = -\lim_{N\to \infty}\frac{1}{N} \ln\left(Z^{AL}_N(V,\beta) \right) = -\frac{1}{K}\ln\left(\lambda_{1}(V,\boldsymbol{\beta})\right)\, .
	\end{equation}	
	Since $\lambda_1(V,\boldsymbol{\beta})$ is analytic for $\beta>0$, see \cite[Chapter VII, Theorem 1.8]{KatoBook}, and strictly positive, see \cite[Theorem 137.4]{ZaanenBook}, we conclude that $F_{AL}(V,\beta)$ is analytic with respect to $\beta$.

	We can apply  the same procedure to the partition function $ Z^{HT}_N(V,\beta)$  in \eqref{eq:HT_alpha}. Also in this case the potential
	$\mbox{Tr}(V(\wt E))$ with $V$ as in \eqref{eq:potential} and the matrix $\wt E$ as in \eqref{eq:doubble_E}  is of finite range $K$, meaning that it can be expressed as a sum of local quantities  \cite{Nenciu2005}.   
	More precisely, assuming $N=KM+L$ with $L<K$ and $M,N,L\in\N$ we have
	\begin{equation}
		\label{eq:potentialE}
		\begin{split}
			\mbox{Tr}(V(\wt E))& = \sum_{\ell=1}^{M-1} W(\wt \balpha_\ell, \wt \balpha_{\ell+1}) + W(\overbrace{0,\ldots,0}^{K-1}, -1 ,\wt \balpha_1) \\
			&+W(\wt \balpha_{M}, \underbrace{\alpha_{KM+1}, \ldots,\alpha_N}_{L},\underbrace{0,\dots,0}_{K-L})\, .
		\end{split}
	\end{equation} 
	For example for $V(z)=z^2+\bar{z}^2$ one has $K=2$ and $N=2M+L$ where $L=0,1$, depending on the parity of $N$.
	The vector  $\wt \balpha_\ell $ takes the form 	 $\wt \balpha_\ell = (\alpha_{2\ell-1},\alpha_{2\ell})$ for $\ell=1,\dots, M$.  In this notation,
	we can rewrite the potential as
	\begin{equation}
		\mbox{Tr}(V(\wt E)) = \sum_{\ell=1}^{M-1} W(\wt \balpha_\ell, \wt \balpha_{\ell+1}) +W(\wt \balpha_{M},\delta_{L,1}\alpha_{N},0) +\underbrace{2\Re( \alpha_1^2+2\bar{\alpha}_2\rho_1^2)}_{= W(0,-1, \alpha_1,\alpha_2)}\,,
	\end{equation}
	where in this case $$W(\wt \balpha_\ell, \wt \balpha_{\ell+1}) =2 \Re\sum_{s=0}^1(\alpha_{2\ell-1+s}\bar{\alpha}_{2\ell+s})^2-4 \Re\sum_{s=0}^1\alpha_{2\ell-1+s}\bar{\alpha}_{2\ell+1+s}\rho_{2\ell+s}^2$$ and $\delta_{L,1}$ is equal to zero for $L\neq 1$.\\
	\noindent Using \eqref{eq:potentialE} the partition function can be written in the form
	\begin{equation}
		\label{eq:partition_2}
		\begin{split}
			Z^{HT}_N(V,\beta) &= \int\limits_{\D^{N-1}\times S^1} \frac{\di\alpha_N}{i\alpha_N}\prod_{j=1}^{N-1} \di^2 \alpha_j \left(1-\vert \alpha_j\vert ^2\right)^{\beta\left(1-\frac{j}{N}\right)-1}\\
			&\times\exp\left(- \sum_{\ell=1}^{M-1} W(\wt \balpha_\ell, \wt \balpha_{\ell+1}) - 
			W(\wt \balpha_{M}, \alpha_{KM+1}, \ldots,\alpha_N,\overbrace{0,\dots,0}^{K-L})\right) \\ &\times \exp\left(- W(\overbrace{0,\ldots,0}^{K-1},-1, \wt \balpha_1)\right)\,.
		\end{split}
	\end{equation}
	We want to apply the same technique as in the previous case, but we have to pay attention to one important detail: in this situation, the eigenvalues and the  eigenfunctions of the transfer operators will be dependent on the block number. Indeed, in this case, the  exponents of $(1-\vert \alpha_j\vert ^2)$ are not identical, but they depend on the index $j$  as in  \eqref{eq:partition_2}. 
	
	\noindent For this reason, we define 
	$$
	\bzeta^{(1)} = \beta\overbrace{\left(1 - \frac{1 }{N},1 - \frac{  2 }{N}\, \ldots, 1 - \frac{K }{N} ,1 - \frac{1 }{N},1 - \frac{  2 }{N}, \ldots, 1 - \frac{K }{N}  \right)}^{2K}\,,
	$$
	and 
	$$
	\bzeta^{(j)}=\bzeta^{(1)}-\beta\frac{j-1 }{N}\boldsymbol{K},\quad j=1,\dots, M-1\,,
	$$
	where  the vector $\boldsymbol{K}$ has entries  $\boldsymbol{K}_j=K$ for $j=1,\dots,2K$.
	For $K$ integer and $K<N$ we introduce the multiplication operator $\cM_K:L^2(\D^K)\to L^2(\D^K)$  defined as 
	
	$$(\cM_Kf)(\balpha)=\prod_{j=1}^{K} \left(1-\vert \alpha_j\vert ^2\right)^{-\frac{K\beta}{2N}} f(\balpha).
	$$
	\begin{remark}
		We notice that, for  $\beta \in\R$, $K\in \N$ and $N\in \N$ big enough, the function $\prod_{j=1}^{K} \left(1-\vert \alpha_j\vert ^2\right)^{-\frac{K\beta}{2N}}\in L^2(\D^K)$.  Since we are considering the limit $N\to\infty$,  and $\beta, K$ independent from $N$, we always assume that this condition holds.
	\end{remark}
	We observe that $\cM_{-K}=(\cM_K)^{-1}$ and the operators $\cT_{\bzeta^{(j)}}:L^2(\D^K)\to L^2(\D^K)$  defined in \eqref{eq:transfer} satisfy the relation
	\begin{equation}
		\label{eq:operator_relation}
		\cT_{\bzeta^{(j+1)}}=\cM_K\cT_{\bzeta^{(j)}}\cM_{K},\quad j=1,\dots,M-1.
	\end{equation}
	We recall that the operators $\cT_{\bzeta^{(j)}}$ are compact, furthermore, we notice that $\cM_K\cT_{\bzeta^{(j)}}$ is also compact since it is Hilbert--Schmidt  \cite{KatoBook}. 
	
	\noindent  { Let us define the $K(M-1)$-dimensional  vector $\bzeta_M=(\bzeta^{(M-1)},\dots,  \bzeta^{(1)})$ }and the 
	operator $ \wt \cT_{M,\bzeta_M}:L^2(\D^K)\to L^2(\D^K)$  as
	\begin{equation}
		\label{eq:T_tilde}
		\wt \cT_{M,\bzeta_M} = 	\cM_K\cT_{\bzeta^{(M-1)}}\cM_K \cT_{\bzeta^{(M-2)}}\cM_K\cdots \cM_K\cT_{\bzeta^{(1)}}\, ,
	\end{equation}
	we notice that it is a compact operator, since all $\cM_K\cT_{\bzeta^{(j)}}$ are Hilbert--Schmidt.
	
	We will now prove the following technical result:
	\begin{proposition}
		\label{prop:limit}
		Let $ \wt \cT_{M,\bzeta_M} $ as in \eqref{eq:T_tilde} and $Z_N^{HT}$ as in \eqref{eq:partition_2} then:
		\begin{equation}
			\label{eq:limit_bound}
			\lim_{N\to\infty}\frac{1}{N}\ln\left( \frac{Z_N^{HT}}{\mbox{Tr}( \wt \cT_{M,\bzeta_M} )}\right) = 0\,,
		\end{equation}
		here by $\mbox{Tr}( \wt \cT_{M,\bzeta_M} )$ we indicate the standard trace on $L^2$.
	\end{proposition}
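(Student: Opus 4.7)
The plan is to compute $\mbox{Tr}(\wt\cT_{M,\bzeta_M})$ as an explicit multiple integral and to show that, up to boundary corrections at most polynomial in $N$, it reproduces the partition function $Z_N^{HT}$ of \eqref{eq:partition_2}. The key input is the identity \eqref{eq:operator_relation}, which was designed precisely so that the $\cM_K$ operators sandwiched between consecutive $\cT_{\bzeta^{(j)}}$'s recombine, upon composition, to restore the natural weights $(1-\vert\alpha_j\vert^2)^{\beta(1-j/N)-1}$ on every interior block. A direct kernel computation, followed by setting $\wt\balpha_M=\wt\balpha_1$ as required by the trace, gives
\begin{equation*}
\mbox{Tr}(\wt\cT_{M,\bzeta_M})=\int_{\D^{K(M-1)}}\prod_{j=1}^{K(M-1)}(1-\vert\alpha_j\vert^2)^{c_j}\,e^{-\sum_{\ell=1}^{M-2}W(\wt\balpha_\ell,\wt\balpha_{\ell+1})-W(\wt\balpha_{M-1},\wt\balpha_1)}\prod_{\ell=1}^{M-1}d^2\wt\balpha_\ell,
\end{equation*}
with $c_j=\beta(1-j/N)-1$ on the interior blocks but $c_j=\tfrac12(\zeta^{(1)}_j+\zeta^{(M)}_j)-1$ for $j\in\{1,\ldots,K\}$ because of the cyclic identification.

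Comparison with \eqref{eq:partition_2} isolates three differences that must be controlled: (i) the boundary interactions $e^{-W(0,\ldots,0,-1,\wt\balpha_1)}$ and $e^{-W(\wt\balpha_M,\alpha_{KM+1},\ldots,\alpha_N,0,\ldots,0)}$ in $Z_N^{HT}$ replace the cyclic closing $e^{-W(\wt\balpha_{M-1},\wt\balpha_1)}$ in the trace; (ii) $Z_N^{HT}$ carries the extra integrations over the last block $\wt\balpha_M$, the $L$ remainder coordinates on $\D$, and $\alpha_N\in S^1$, all absent from the trace; (iii) the block-$1$ exponent in the trace is the symmetric average rather than the natural $\zeta^{(1)}_j-1$, producing an extra factor $(1-\vert\alpha_j\vert^2)^{-\beta K(M-1)/(2N)}=(1-\vert\alpha_j\vert^2)^{-\beta/2+O(1/N)}$. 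I would bound each correction uniformly: (i) since $W$ is continuous on the compact $\overline{\D}^{2K}$, the factors $e^{-W}$ are pinched between positive constants; (ii) the extra integrations reduce to elementary Beta-type integrals that grow at most polynomially in $N$, and the last-block integration is bounded via $L^1$--$L^\infty$ estimates on the kernel together with the positivity and compactness of $\wt\cT_{M,\bzeta_M}$; (iii) the weight-shift factor remains integrable against the surrounding block-$1$ weight, since $\beta(1-j/N)-1-\beta/2>-1$ for $j\le K$ and $N$ sufficiently large.

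Combining these estimates yields $c_1N^{-C_1}\le Z_N^{HT}/\mbox{Tr}(\wt\cT_{M,\bzeta_M})\le c_2N^{C_2}$ with constants independent of $N$, from which \eqref{eq:limit_bound} follows by taking the logarithm and dividing by $N$. The main technical obstacle is item (iii): the exponent shift is $O(1)$ rather than $O(1/N)$, precisely because the insertion of $\cM_K$ symmetrizes only the interior blocks and leaves a residual half-weight mismatch on block $1$ after the cyclic identification. This is controlled by observing that for $j\le K$ the natural exponent is close to $\beta-1$, so the shifted exponent converges to $\beta/2-1>-1$ for any $\beta>0$, ensuring integrability; combined with the polynomial control of (i) and (ii), this closes the argument.
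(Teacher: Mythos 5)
Your proposal is correct and follows essentially the same route as the paper: both arguments write $\mbox{Tr}(\wt\cT_{M,\bzeta_M})$ as an explicit kernel integral, pinch the boundary and closing interaction factors $e^{-W}$ between positive constants using continuity of $W$ on the compact $\overline{\D}^{2K}$, and then explicitly evaluate the decoupled disc integrals via $\int_\D(1-\vert z\vert^2)^{t-1}\di^2z=\pi t^{-1}$ to obtain two-sided bounds of the form $cN^{K+L-1}$ on the ratio, which vanish after taking $\frac1N\ln(\cdot)$. The only point to tighten is your item (iii): mere integrability of the shifted block-one weight is not enough --- you need the ratio of the shifted to the unshifted block-one integrals bounded above and below, which follows exactly as in the paper because, once the interactions coupling block one to its neighbours are pinched by constants, that integral factorizes into one-dimensional Beta integrals $\pi/t$ with $t$ converging to $\beta/2$ (respectively $\beta(1-j/N)$), hence uniformly bounded away from $0$ and $\infty$.
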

	
	\begin{proof}
		We will estimate both $Z_N^{HT}$, and $\mbox{Tr}(\wt\cT_{M,\bzeta_M})$ from above and below, then combining these estimates we will obtain \eqref{eq:limit_bound}. We start with $Z_N^{HT}$.
		
		\begin{equation}
			\begin{split}
				Z_N^{HT} &= \int\limits_{\D^{N-1}\times S^1}\frac{\di\alpha_N}{i\alpha_N}\left[\prod_{j=1}^{N-1}  \di^2 \alpha_j\left(1-\vert \alpha_j\vert ^2\right)^{\beta\left(1-\frac{j}{N}\right)-1}\right]\\
				&\times\exp\left(- \sum_{\ell=1}^{M-1} W(\wt \balpha_\ell, \wt \balpha_{\ell+1}) - W(\wt \balpha_{M}, \alpha_{KM+1}, \ldots,\alpha_N,\overbrace{0,\dots,0}^{K-L})\right) \\ &\times \exp\left( - W(\overbrace{0,\ldots,0}^{K-1},-1, \wt \balpha_1)\right)\,.
			\end{split}
		\end{equation}
		We can bound the first and the last three terms in the above exponential  with two positive constants  $C(V,\beta)$ and $c(V,\beta)$, independent of $N$, such that
		\begin{equation}
			\begin{split}
				c(V,\beta) &\leq \exp\left( -  W(\wt \balpha_1, \wt \balpha_{2})-  W(\wt \balpha_{M-1}, \wt \balpha_{M})-W(\wt \balpha_{M}, \alpha_{KM+1}, \ldots,\alpha_N,\overbrace{0,\dots,0}^{K-L}) \right) \\ & \times \exp\left(- W(\overbrace{0,\ldots,0}^{K-1},-1, \wt \balpha_1)\right)  \leq C(V,\beta)\,    
			\end{split}
		\end{equation}
		where in the exponents  each $\alpha_j\in\overline{\mathbb{D}}$.
		From the previous inequalities, we deduce that  the integral
		\begin{equation}
			\int\limits_{\D^{N-1}\times S^1} \frac{\di\alpha_N}{i\alpha_N}
			\left[\prod_{j=1}^{N-1}  \di^2 \alpha_j\left(1-\vert \alpha_j\vert ^2\right)^{\beta\left(1-\frac{j}{N}\right)-1}\right]
			\exp\left(- \sum_{\ell=2}^{M-2} W(\wt \balpha_\ell, \wt \balpha_{\ell+1}) \right) 
		\end{equation}
		is bounded from above by $ Z_N^{HT} /c(V,\beta) $  and from below by $ Z_N^{HT} /C(V,\beta) $.
		%\begin{equation}
		%    \begin{split}
			%     &c(V,\beta) \int\limits_{\D^{N-1}\times S^1} \frac{\di\alpha_N}{i\alpha_N}
			%     \left[\prod_{j=1}^{N-1}  \di^2 \alpha_j\left(1-\vert \alpha_j\vert ^2\right)^{\beta\left(1-\frac{j}{N}\right)-1}\right]
			%    \exp\left(- \sum_{\ell=2}^{M-2} W(\wt \balpha_\ell, \wt \balpha_{\ell+1}) \right) \leq  Z_N^{HT}   \\ &\leq C(V,\beta) \int\limits_{\D^{N-1}\times S^1} \frac{\di\alpha_N}{i\alpha_N}
			%     \left[\prod_{j=1}^{N-1}  \di^2 \alpha_j\left(1-\vert \alpha_j\vert ^2\right)^{\beta\left(1-\frac{j}{N}\right)-1}\right]
			%  \exp\left(- \sum_{\ell=2}^{M-2} W(\wt \balpha_\ell, \wt \balpha_{\ell+1}) \right)\,.
			%    \end{split}
		%\end{equation}
		We can explicitly integrate in  $\alpha_j$ for $j=1,\ldots,K$ and $j=(M-1)K + 1,\ldots,N$ using the formula
		\begin{equation}
			\label{eq:explicit_integral}
			\int_\D \left( 1-\vert z\vert ^2\right)^{t-1} \di^2z = \pi t^{-1}\, , 
		\end{equation}
		obtaining that there are two constants $C_1(V,\beta)$ and $c_1(V,\beta)$ depending on $V,\,\beta, K$ and $L$  but not on $N$, such that  
		
		\begin{equation}
			\label{eq:ub_ZHT}
			\begin{split}
				Z_N^{HT} &\leq C_1(V,\beta)N^{K+L-1} \int\limits_{\D^{(M-2)K}} \left[ \prod_{j=K+1}^{(M-1)K} \di^2\alpha_j\left(1-\vert \alpha_j\vert ^2\right)^{\beta\left(1-\frac{j}{N}\right)-1}\right]\\ & \times\exp\left(- \sum_{\ell=2}^{M-2} W(\wt \balpha_\ell, \wt \balpha_{\ell+1}) \right)\, ,
			\end{split}
		\end{equation}
		and 
		\begin{equation}
			\label{eq:lb_ZHT}
			\begin{split}
				Z_N^{HT} &\geq c_1(V,\beta)N^{K+L-1} \int\limits_{\D^{(M-2)K}} \left[ \prod_{j=K+1}^{(M-1)K} \di^2\alpha_j\left(1-\vert \alpha_j\vert ^2\right)^{\beta\left(1-\frac{j}{N}\right)-1}\right] \\ & \times    \exp\left(- \sum_{\ell=2}^{M-2} W(\wt \balpha_\ell, \wt \balpha_{\ell+1}) \right)\, .
			\end{split}
		\end{equation}
		
		\noindent We can proceed analogously to estimate the trace of $ \wt \cT_{M,\bzeta_M} $:
		\begin{equation}
			\begin{split}
				\mbox{Tr}( \wt \cT_{M,\bzeta_M} ) &= \int\limits_{\D^{(M-1)K}}\prod_{j=1}^K \left(1- \vert \alpha_j\vert ^2\right)^{\frac{\beta}{2}\left( 1- \frac{j}{N}\right) -\frac{1}{2}} \prod_{j=1}^K \left(1- \vert \alpha_j\vert ^2\right)^{\frac{\beta}{2}\left( 1- \frac{(M-1)K+ j}{N}\right) -\frac{1}{2}}\\ & \times \prod_{j=K+1}^{(M-1)K}\left( 1 - \vert \alpha_j\vert ^2\right)^{\beta\left(1-\frac{j}{N}\right)-1}\\
				&\times \exp\left(- \sum_{j=1}^{M-2}W(\wt \balpha_j, \wt \balpha_{j+1}) - W(\wt \balpha_{M-1}, \wt \balpha_1)\right) \times\prod_{j=1}^{(M-1)K}\di^2 \alpha_j\,.
			\end{split}
		\end{equation}
		As before, we notice that there exist two  positive  constants $\wt C(V,\beta)$, and $\wt c(V,\beta)$, independent of $N$, such that
		\begin{equation}
			\wt c(V,\beta)<\exp\left(-W(\wt \balpha_1, \wt \balpha_{2}) - W(\wt \balpha_{M-1}, \wt \balpha_1) \right) <\wt C(V,\beta)\,
		\end{equation}
		when $\balpha_1,\balpha_2,\balpha_{M-1}\in\overline{\mathbb{D}^K}$.
		From these inequalities, we deduce that  the integral
		\begin{equation}
			\begin{split}
				&\int\limits_{\D^{(M-1)K}}\prod_{j=1}^K \left(1- \vert \alpha_j\vert ^2\right)^{\frac{\beta}{2}\left( 1- \frac{j}{N}\right) -\frac{1}{2}} \prod_{j=1}^K \left(1- \vert \alpha_j\vert ^2\right)^{\frac{\beta}{2}\left( 1- \frac{(M-1)K+ j}{N}\right) -\frac{1}{2}}\\ & \times \prod_{j=K+1}^{(M-1)K}\left( 1 - \vert \alpha_j\vert ^2\right)^{\beta\left(1-\frac{j}{N}\right)-1} \exp\left(- \sum_{j=2}^{M-2}W(\wt \balpha_j, \wt \balpha_{j+1}) \right) \prod_{j=1}^{(M-1)K}\di^2 \alpha_j 
			\end{split}
		\end{equation}
		is bounded from above by $ \mbox{Tr}( \wt \cT_{M,\bzeta_M} )/ \wt c(V,\beta)$  and from below by $ \mbox{Tr}( \wt \cT_{M,\bzeta_M} )/ \wt C(V,\beta)$. 
		%\begin{equation}
		%    \begin{split}
			%           \wt c(V,\beta)&
			%           \int\limits_{\D^{(M-1)K}}\prod_{j=1}^K \left(1- \vert \alpha_j\vert ^2\right)^{\frac{\beta}{2}\left( 1- \frac{j}{N}\right) -\frac{1}{2}} \prod_{j=1}^K \left(1- \vert \alpha_j\vert ^2\right)^{\frac{\beta}{2}\left( 1- \frac{(M-1)K+ j}{N}\right) -\frac{1}{2}}\\ & \times \prod_{j=K+1}^{(M-1)K}\left( 1 - \vert \alpha_j\vert ^2\right)^{\beta\left(1-\frac{j}{N}\right)-1} \exp\left(- \sum_{j=2}^{M-2}W(\wt \balpha_j, \wt \balpha_{j+1}) \right) \prod_{j=1}^{(M-1)K}\di^2 \alpha_j 
			%                     \\ & \leq \mbox{Tr}( \wt \cT_{M,\bzeta_M} )  \leq \wt C(V,\beta)
			%                       \int\limits_{\D^{(M-1)K}}\prod_{j=1}^K \left(1- \vert \alpha_j\vert ^2\right)^{\frac{\beta}{2}\left( 1- \frac{j}{N}\right) -\frac{1}{2}} \\ & \times\prod_{j=1}^K \left(1- \vert \alpha_j\vert ^2\right)^{\frac{\beta}{2}\left( 1- \frac{(M-1)K+ j}{N}\right) -\frac{1}{2}} \prod_{j=K+1}^{(M-1)K}\left( 1 - \vert \alpha_j\vert ^2\right)^{\beta\left(1-\frac{j}{N}\right)-1}\\ & \times\exp\left(- \sum_{j=2}^{M-2}W(\wt \balpha_j, \wt \balpha_{j+1}) \right) \prod_{j=1}^{(M-1)K}\di^2 \alpha_j 
			%                       \end{split}
		%\end{equation}
		Using \eqref{eq:explicit_integral} we can now explicitly integrate in $\alpha_j$ for $j=1,\ldots,K$ the above integral obtaining the following inequalities
		\begin{equation}
			\label{eq:ub_trace}
			\begin{split}
				\mbox{Tr}( \wt \cT_{M,\bzeta_M} ) & \leq \wt C_1(V,\beta)\int\limits_{\D^{(M-2)K}}\left[ \prod_{j=K+1}^{(M-1)K} \di^2\alpha_j\left(1-\vert \alpha_j\vert ^2\right)^{\beta\left(1-\frac{j}{N}\right)-1}\right] \\ &\times \exp\left(- \sum_{\ell=2}^{M-2} W(\wt \balpha_\ell, \wt \balpha_{\ell+1}) \right)\,,
			\end{split}
		\end{equation}
		
		\begin{equation}
			\label{eq:lb_trace}
			\begin{split}
				\mbox{Tr}( \wt \cT_{M,\bzeta_M} ) &\geq \wt c_1(V,\beta)\int\limits_{\D^{(M-2)K}}\left[ \prod_{j=K+1}^{(M-1)K} \di^2\alpha_j\left(1-\vert \alpha_j\vert ^2\right)^{\beta\left(1-\frac{j}{N}\right)-1}\right] \\ & 
				\times \exp\left(- \sum_{\ell=2}^{M-2} W(\wt \balpha_\ell, \wt \balpha_{\ell+1}) \right)\, ,
			\end{split}
		\end{equation}
		where $\wt C_1(V,\beta)$, and  $ \wt c_1(V,\beta)$ are  positive constants depending on $V,\,\beta,K$ and $L$  but not on $N$.
		Combining \eqref{eq:ub_ZHT}-\eqref{eq:lb_ZHT}-\eqref{eq:ub_trace}-\eqref{eq:lb_trace} we deduce \eqref{eq:limit_bound}.
	\end{proof}
	
	Applying the previous proposition, we can express the Free energy of the Circular $\beta$  ensemble in the high-temperature regime in terms of $\mbox{Tr}(\wt\cT_M)$:
	\begin{equation}
		\label{eq:free_energy_trace_00}
		\begin{split}
			F_{HT}(V,\beta) &= - \lim_{N\to \infty} \frac{1}{N} \ln\left(Z_N^{HT} \right) \\
			&= - \lim_{N\to \infty} \frac{1}{N}\left( \ln\left(\frac{Z_N^{HT}}{\mbox{Tr}( \wt \cT_{M,\bzeta_M})} \right) + \ln(\mbox{Tr}( \wt \cT_{M,\bzeta_M}))\right) \\
			& = - \lim_{N\to\infty}\frac{\ln(\mbox{Tr}( \wt \cT_{M,\bzeta_M} ))}{N}\, ,
		\end{split}
	\end{equation}
	where in the last equality we used Proposition \ref{prop:limit}.
	
	As a final step,  we have to understand the behaviour of $\mbox{Tr}( \wt \cT_{M,\bzeta_M} )$,   and for this purpose  we need to carefully analyse the compact operators $\cT_{\bzeta^{(j)}}$.
	
	Let  $\{\psi_n(\bz,V,\bzeta^{(j)})\}_{n\geq 1}$  be the eigenfunctions of $\cT_{\bzeta^{(j)}}$ with corresponding eigenvalues $\{\lambda_{n}(V,\bzeta^{(j)})\}_{n\geq 1}$ and { $\vert \lambda_{1}(V,\bzeta^{(j)})\vert \geq \vert \lambda_{2}(V,\bzeta^{(j)})\vert \geq \dots$}. From a generalized version of Jentzsch's Theorem (see \cite[Theorem 137.4]{ZaanenBook}), we deduce that $\vert \lambda_n(V,\bzeta^{(j)})\vert  < \lambda_1(V,\bzeta^{(j)})$ for all $n\geq2$. 
	
	We are now in  the  position to prove the following proposition.
	
	\begin{proposition}
		\label{prop:stima_brutta}
		Let $\{\psi_n(\bz,V,\bzeta^{(j)})\}_{n=1}^\infty$ be the eigenfunctions of  the operator  $\cT_{\bzeta^{(j)}}$   in \eqref{eq:transfer} with corresponding eigenvalues $\{\lambda_n(V,\bzeta^{(j)})\}_{n=1}^\infty$. Consider the operator $ \wt \cT_{M,\bzeta_M} $ in \eqref{eq:T_tilde}, then there are constants $\td,a_j,c_j$, $j=1,\ldots,M-1${ uniformly bounded in $N$, and so in $M$, }such that :

		\begin{equation}
			\label{eq:first_goal}
			\left(\psi_1(\bz,V,\bzeta^{(1)}), \wt \cT_{M,\bzeta_M}  \psi_1(\bz,V,\bzeta^{(1)}) \right)= \prod_{j=1}^{M-1} \lambda_1(V,\bzeta^{(j)})\left( 1 + \frac{a_j}{N} + O\left(\frac{1}{N^2}\right)\right)\, ,
		\end{equation}
		\begin{equation}
			\label{eq:second_goal}
			\left\vert  \sum_{\ell\geq 2} \left( \psi_{\ell}(\bz,V,\bzeta^{(1)}), \wt \cT_{M,\bzeta_M}  \psi_{\ell}(\bz,V,\bzeta^{(1)}) \right)\right\vert  \leq\td
			\prod_{j=3}^{M-1}\lambda_1(V,\bzeta^{(j)})\left( 1 + \frac{c_j}{N} + O\left(\frac{1}{N^2}\right)\right)
		\end{equation}
	\end{proposition}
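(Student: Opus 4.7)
The plan is to combine spectral perturbation theory for the analytic family $\{\cT_{\bzeta^{(j)}}\}_{j=1}^{M-1}$ of compact integral operators with the uniform spectral gap guaranteed by the generalized Jentzsch theorem of \cite{ZaanenBook}. Because $\bzeta^{(j+1)}-\bzeta^{(j)}=-\beta\boldsymbol{K}/N$ and the family $\{\bzeta^{(j)}\}$ stays in a fixed compact subset of $\R^{2K}$ as $j$ ranges over $\{1,\dots,M-1\}$, Kato's analytic perturbation theory furnishes, uniformly in $j$,
\[
\lambda_1^{(j+1)}=\lambda_1^{(j)}+O(1/N),\qquad \psi_1^{(j+1)}=\psi_1^{(j)}+O(1/N)\text{ in }L^2(\D^K),
\]
as well as a uniform spectral gap $\lambda_1^{(j)}-\vert\lambda_n^{(j)}\vert\geq\delta>0$ for all $n\geq 2$. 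Likewise, $\cM_K=I+N^{-1}A+O(N^{-2})$ when acting on the smooth eigenfunctions of $\cT_{\bzeta^{(j)}}$, where $A$ is multiplication by $-(K\beta/2)\sum_{\ell=1}^{K}\ln(1-\vert\alpha_\ell\vert^2)$ (integrable on $\D^K$). Since the kernel of $\cT_{\bzeta^{(j)}}$ is symmetric in its two block variables $\wt\balpha_1,\wt\balpha_2$, the eigenfunctions $\{\psi_n^{(j)}\}_n$ form an orthonormal basis of $L^2(\D^K)$, and each transfer operator decomposes as $\cT_{\bzeta^{(j)}}=\lambda_1^{(j)}\Pi_j+S_j$, where $\Pi_j=\psi_1^{(j)}\otimes\overline{\psi_1^{(j)}}$ is the rank-one spectral projection and $\Vert S_j\Vert_{\mathrm{op}}\leq(1-\delta')\lambda_1^{(j)}$ for some $\delta'>0$ uniform in $j$.

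For \eqref{eq:first_goal}, I would expand the product $\wt\cT_{M,\bzeta_M}$ by substituting $\lambda_1^{(j)}\Pi_j+S_j$ at each factor, obtaining $2^{M-1}$ terms. The ``all-$\Pi$'' contribution, sandwiched between $\psi_1^{(1)}$, telescopes to
\[
\prod_{j=1}^{M-1}\lambda_1^{(j)}\cdot\bigl(\psi_1^{(1)},\cM_K\psi_1^{(M-1)}\bigr)\prod_{j=1}^{M-2}\bigl(\psi_1^{(j+1)},\cM_K\psi_1^{(j)}\bigr),
\]
and each inner product factor satisfies $(\psi_1^{(j+1)},\cM_K\psi_1^{(j)})=1+a_j/N+O(1/N^2)$ by the perturbation estimates above; the $O(1)$ wrap-around $(\psi_1^{(1)},\cM_K\psi_1^{(M-1)})$ is absorbed into any single $a_j$. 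Any term containing one or more $S_{j_0}$ factors is strictly smaller: each such insertion contributes a factor bounded by $(1-\delta')\lambda_1^{(j_0)}$ in place of $\lambda_1^{(j_0)}$, and each location where $S_{j_0}$-output enters a subsequent projection $\Pi_k$ yields an extra $O(1/N)$, since the range of $S_{j_0}$ is orthogonal to $\psi_1^{(j_0)}$ while $\psi_1^{(k)}-\psi_1^{(j_0)}=O(\vert k-j_0\vert/N)$. Resumming all mixed contributions via the geometric series in $1-\delta'$ controls them by $O(1/N)$ relative to the leading term, yielding \eqref{eq:first_goal}.

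For \eqref{eq:second_goal}, the starting vector $\psi_\ell^{(1)}$ with $\ell\geq 2$ satisfies $\Pi_1\psi_\ell^{(1)}=0$, hence $\cT_{\bzeta^{(1)}}\psi_\ell^{(1)}=\lambda_\ell^{(1)}\psi_\ell^{(1)}$, already costing the factor $\lambda_1^{(1)}$. At the next step, $(\psi_1^{(2)},\cM_K\psi_\ell^{(1)})=O(1/N)$ because $\psi_1^{(2)}=\psi_1^{(1)}+O(1/N)$ is almost orthogonal to $\psi_\ell^{(1)}$, so either the $\Pi_2$-channel contributes $O(1/N)\lambda_1^{(2)}$ or the $S_2$-channel contributes at most $(1-\delta')\lambda_1^{(2)}$; in either event one loses a second factor of order $\lambda_1^{(2)}$. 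From $j=3$ onward the iterate has bounded norm, and the analysis of the previous paragraph applied to the tail product $\cM_K\cT_{\bzeta^{(M-1)}}\cdots\cM_K\cT_{\bzeta^{(3)}}$ supplies $\prod_{j=3}^{M-1}\lambda_1^{(j)}(1+c_j/N+O(1/N^2))$. Summation in $\ell\geq 2$ is then controlled by the Hilbert--Schmidt bound $\sum_{\ell}\vert\lambda_\ell^{(1)}\vert^2<\infty$ together with the uniform tail estimate above; the global constant $\td$ absorbs the Hilbert--Schmidt norm and the bounded alignment prefactors.

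The principal obstacle is the combinatorial explosion in the product expansion, since a crude operator-norm bound on $2^{M-1}$ mixed terms would grow exponentially in $M\sim N/K$. Control of these terms demands using the uniform spectral gap together with the $O(1/N)$ coupling between consecutive eigenprojections in a systematic Kato--Rellich-type expansion for a long product of slowly varying compact operators. A secondary technicality is justifying the expansion $\cM_K=I+O(1/N)$ acting on the relevant eigenfunctions — the multiplier $(1-\vert\alpha\vert^2)^{-K\beta/(2N)}$ is itself unbounded near the boundary — but the smoothing property of the integral operator $\cT_{\bzeta^{(j)}}$ together with integrability of the logarithmic singularity on $\D^K$ ensures that $(\cM_K-I)\psi_1^{(j)}$ is genuinely $O(1/N)$ in $L^2$-norm.
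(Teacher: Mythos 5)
Your overall strategy for \eqref{eq:first_goal} (analytic perturbation in $\bzeta^{(j)}$, spectral gap from Jentzsch, rank-one-plus-remainder decomposition of each transfer operator) is in the same spirit as the paper, but the paper organizes the argument as an induction on $M$, peeling off one factor $\cM_K\cT_{\bzeta^{(1)}}$ at a time and bounding the correction at each step by $N^{-1}\lambda_1(\bzeta^{(1)})\Vert\cM_K\cT_{\bzeta^{(M-1)}}\cdots\cM_K\cT_{\bzeta^{(2)}}\Vert$; this produces the product form $\prod_j(1+a_j/N+O(N^{-2}))$ directly, one factor per induction step. Your global expansion into $2^{M-1}$ terms has two genuine problems. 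First, the bound $\Vert S_j\Vert_{\mathrm{op}}\leq(1-\delta')\lambda_1^{(j)}$ identifies operator norm with spectral radius on the complementary subspace, which requires $\cT_{\bzeta^{(j)}}$ to be self-adjoint (or normal); your justification via symmetry of the kernel in the two block variables fails for potentials of range $K\geq 2$ — see the paper's own example $V(z)=z^2+\bar z^2$, where $W(\wt\balpha_\ell,\wt\balpha_{\ell+1})$ contains terms like $\alpha_{2\ell-1+s}\bar\alpha_{2\ell+1+s}\rho_{2\ell+s}^2$ that are not invariant under exchanging the two blocks. Second, even granting that bound, your resummation as described does not land on the required form: a maximal run of $S$-factors gains one $O(1/N)$ at its entry and one at its exit, and summing over the $\sim M$ possible locations of a run turns $O(1/N^2)$ per run into $O(1/N)$ total only after you use \emph{both} alignment factors; you only invoke the exit one explicitly, and in any case the result is an overall multiplicative $1+O(1/N)$ error on the whole product, which still has to be redistributed into per-site factors $(1+a_j/N)$ to match the statement. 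The paper's induction sidesteps this bookkeeping entirely.

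The more serious gap is in \eqref{eq:second_goal}. Your term-by-term bound for fixed $\ell\geq 2$ is of size $\vert\lambda_\ell^{(1)}\vert$ times an $O(1)$ quantity times $\prod_{j\geq3}\lambda_1(\bzeta^{(j)})$, so summing over $\ell$ requires $\sum_{\ell}\vert\lambda_\ell^{(1)}\vert<\infty$, i.e.\ that $\cT_{\bzeta^{(1)}}$ be trace class — and the Hilbert--Schmidt bound $\sum_\ell\vert\lambda_\ell^{(1)}\vert^2<\infty$ that you invoke does not give this. The paper's proof avoids the issue with a different device: it writes $\sum_{\ell\geq2}(\psi_\ell,\wt\cT_{M,\bzeta_M}\psi_\ell)$ as $\mbox{Tr}(\wt\cT_{M,\bzeta_M})$ minus the $\ell=1$ term (already controlled by \eqref{eq:first_goal}), and then bounds the trace by $\Vert\cM_K\cT_{\bzeta^{(M-1)}}\cdots\cM_K\cT_{\bzeta^{(3)}}\Vert\cdot\Vert\cT_{2,1}\Vert_{\textrm{Trace}}$, where $\cT_{2,1}=\cM_K\cT_{\bzeta^{(2)}}\cM_K\cT_{\bzeta^{(1)}}$ is trace class \emph{because it is the composition of two Hilbert--Schmidt operators}, with trace norm bounded by the product of their Hilbert--Schmidt norms. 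That composition trick is exactly what is needed to make the sum over $\ell$ converge uniformly in $N$, and it is also why the product in \eqref{eq:second_goal} starts at $j=3$ rather than $j=1$: two factors are consumed to manufacture a trace-class operator. Without this (or an independent proof that a single $\cT_{\bzeta^{(j)}}$ is trace class), your summation step does not close.
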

	\begin{proof}
		To simplify the notation, we will drop the $V$ dependence of the eigenvalues $\lambda_n(V,\bzeta^{(j)})$, and of the eigenfunctions $\psi_n(\bz,V,\bzeta^{(j)})$. 
		
		We will prove \eqref{eq:first_goal} by induction on $M$. 
		For $M=2$, we have that $ \wt \cT_{M,\bzeta_M}  = \cM_K\cT_{\bzeta^{(1)}}$, so we have to compute:
		
		\begin{equation}
			\begin{split}
				& \bigg(\psi_1(\bz,\bzeta^{(1)}), \cM_K \cT_{\bzeta^{(1)}}  \psi_1(\bz,\bzeta^{(1)}) \bigg)= \lambda_1(\bzeta^{(1)})\left(\psi_1(\bz,\bzeta^{(1)}), \cM_K \psi_1(\bz,\bzeta^{(1)}) \right)\\
				&\qquad=\lambda_1(\bzeta^{(1)})\left(\psi_1(\bz,\bzeta^{(1)}),\prod_{j=1}^K (1-\vert z_j\vert ^2)^{-\frac{K\beta}{2N}}\psi_1(\bz,\bzeta^{(1)}) \right)\\
				&\qquad=\lambda_1(\bzeta^{(1)})\left(\psi_1(\bz,\bzeta^{(1)}),\left( 1+ \frac{\wt a_1(\bz)}{N} + O\left( \frac{1}{N^2}\right)\right) \psi_1(\bz,\bzeta^{(1)}) \right)\\
				&\qquad=\lambda_1(\bzeta^{(1)})\left(1 + \frac{a_1}{N} + O\left( \frac{1}{N^2}\right)\right),
			\end{split}
		\end{equation}
		where the function $\wt a_1(\bz)$ is the first term of the expansion of 
		$\prod_{j=1}^K (1-\vert z_j\vert ^2)^{-\frac{K\beta}{2N}}$  in powers of $1/N$ and  the constant 
		$a_1=\left(\psi_1(\bz,\bzeta^{(1)}),\wt a_1(\bz)\psi_1(\bz,\bzeta^{(1)}) \right)$ is { uniformly bounded in} $N$.
		So the first inductive step is proved.
		
		\noindent For general $M$, 
		we  define the vector $\bzeta_{M-1}=(\bzeta^{(M-1)},\dots,  \bzeta^{(2)})$   so that 
		$$ \wt \cT_{M,\bzeta_M}= \wt \cT_{M-1,\bzeta_{M-1}}\cM_K\cT_{\bzeta^{(1)}}. $$
		Using the above relation  we obtain
		\begin{equation}
			\label{eq:induction_01}
			\begin{split}
				& \left(\psi_1(\bz,\bzeta^{(1)}), \wt \cT_{M,\bzeta_M}  \psi_1(\bz,\bzeta^{(1)}) \right) =\left(\psi_1(\bz,\bzeta^{(1)}),\wt \cT_{M-1,\bzeta_{M-1}}\cM_K\cT_{\bzeta^{(1)}} \psi_1(\bz,\bzeta^{(1)}) \right)\\ &= \lambda_1(\bzeta^{(1)}) \left(\psi_1(\bz,\bzeta^{(1)}),\wt \cT_{M-1,\bzeta_{M-1}} \cM_K \psi_1(\bz,\bzeta^{(1)}) \right)\,.
			\end{split}
		\end{equation}
		Thanks to  \cite[Chapter VII, Theorem 1.8]{KatoBook}, we know that the eigenfunctions $ \psi_1(\bz,\bzeta^{(j)})$ and the eigenvalues $\lambda_1(\bzeta^{(j)})$ are analytic functions of the parameter $\bzeta^{(j)}$, so, for $N$ big enough, there exists a function $\xi_{1}(\bz)\in L^2(\D^K)$ independent of $N$ such that:
		\begin{equation}
			\label{eq:def_xij}
			\psi_1(\bz,\bzeta^{(1)}) = \psi_1(\bz,\bzeta^{(2)})\left( 1 + \frac{\xi_1(\bz)}{N} + O\left( \frac{1}{N^2}\right)\right)\,
		\end{equation}
		and a constant $c_j$ such that
		\begin{equation}
			\label{eq:def_cj}
			\lambda_1(\bzeta^{(j+1)})=\lambda_1(\bzeta^{(j)})\left( 1 + \frac{c_j}{N} + O\left(\frac{1}{N^2}\right)\right).
		\end{equation}
		Using  \eqref{eq:def_xij}   and the expansion of the function defining the operator $\cM_K$ we   can expand \eqref{eq:induction_01} as:
		
		\begin{equation}
			\label{eq:induction_03}
			\begin{split}
				&\left(\psi_1(\bz,\bzeta^{(1)}),\wt \cT_{M-1,\bzeta_{M-1}}\cM_K\cT_{\bzeta^{(1)}}
				\psi_1(\bz,\bzeta^{(1)}) \right) \\ &=  \lambda_1(\bzeta^{(1)}) \left(\psi_1(\bz,\bzeta^{(2)}),\cT_{M-1,\bzeta_{M-1}}\psi_1(\bz,\bzeta^{(2)}) \right)  \\ &
				+ \frac{\lambda_1(\bzeta^{(1)})}{N}
				\Bigg(\psi_1(\bz,\bzeta^{(2)}), \wt \cT_{M-1,\bzeta_{M-1}}\psi_1(\bz,\bzeta^{(2)})\left( \wt a_1(\bz) + \xi_1(\bz)+ O\left( \frac{1}{N}\right)\right) \Bigg) \\ &
				+ \frac{\lambda_1(\bzeta^{(1)})}{N}  \Bigg(\psi_1(\bz,\bzeta^{(2)})\left(\xi_1(\bz)+ O\left( \frac{1}{N}\right)\right), \wt \cT_{M-1,\bzeta_{M-1}} \psi_1(\bz,\bzeta^{(2)})\Bigg)\,.
			\end{split}
		\end{equation}
		
		To bound the  last two terms in the above relation, we  use \eqref{eq:operator_relation}  and \eqref{eq:def_cj} so  that
		
		\begin{equation}
			\label{eq:first_ineq}
			\left\vert   \left\vert \cM_K \cT_{\bzeta^{(j)}} \right\vert \right\vert = \left\vert  \left\vert \cT_{\bzeta^{(j+1)}}\cM_K^{-1} \right\vert \right\vert  \leq \lambda_1(\bzeta^{(j+1)})=\lambda_1(\bzeta^{(j)})\left( 1 + \frac{c_j}{N} + O\left(\frac{1}{N^2}\right)\right) \; \,
		\end{equation}
		for $ j=2,\ldots,M-1$, here in the first inequality we use the fact that $\vert \vert \cM_K^{-1}\vert \vert  = 1$.
		
		Using  \eqref{eq:first_ineq}   we can bound the second term in the r.h.s of \eqref{eq:induction_03} by 
		\begin{equation}
			\label{eq:long}
			\begin{split}
				&\left\vert \left(\psi_1(\bz,\bzeta^{(2)}),\cM_K \cT_{\bzeta^{(M-1)}} \cM_K \ldots\cM_K\cT_{\bzeta^{(2)}}  \psi_1(\bz,\bzeta^{(2)}) \left( \wt a_1(\bz) + \xi_1(\bz) + O\left( \frac{1}{N}\right)\right) \right)\right\vert  \\ & \leq \left\vert\left\vert\psi_1(\bz,\bzeta^{(2)})\right\vert\right\vert_2\left\vert\left\vert\psi_1 (\bz,\bzeta^{(2)})\left( \wt a_1(\bz) + \xi_1(\bz) + O\left( \frac{1}{N}\right)\right)\right\vert\right\vert_2\\
				&\times \left \vert \left \vert\cM_K \cT_{\bzeta^{(M-1)}} \cM_K \ldots\cM_K\cT_{\bzeta^{(2)}} \right\vert\right \vert
				\\ &
				\leq \tc  \prod_{j=2}^{M-1}\left \vert \left \vert \cM_K \cT_{\bzeta^{(j)}}\right\vert \right\vert  \leq \tc \prod_{j=3}^{M}\lambda_1(\bzeta^{(j)})\leq  \tc \prod_{j=2}^{M-1}\lambda_1(\bzeta^{(j)})\left( 1 + \frac{c_j}{N} + O\left(\frac{1}{N^2}\right)\right)\,,\\
			\end{split}
		\end{equation}
		for some constant $\tc$ { uniformly bounded in $N$}. An analogous inequality can be obtained 
		for the second term   in \eqref{eq:induction_03}. Thus, applying the induction to the first term in  the r.h.s. of \eqref{eq:induction_03}, we deduce \eqref{eq:first_goal}.
		
		{We move to the proof of  \eqref{eq:second_goal}.  Applying \eqref{eq:first_goal}, we can estimate \eqref{eq:second_goal} as
			\begin{equation}
				\begin{split}
					\bigg\vert  \sum_{\ell\geq 2}& \bigg( \psi_{\ell}(\bz,\bzeta^{(1)}), \wt \cT_{M,\bzeta_M}  \psi_{\ell}(\bz,\bzeta^{(1)}) \bigg)\bigg\vert   \\ &= \left\vert  \sum_{\ell\geq 2} \left( \psi_{\ell}(\bz,\bzeta^{(1)}), \wt \cT_{M,\bzeta_M}  \psi_{\ell}(\bz,\bzeta^{(1)}) \right) \pm \left( \psi_{1}(\bz,\bzeta^{(1)}), \wt \cT_{M,\bzeta_M}  \psi_{1}(\bz,\bzeta^{(1)}) \right)\right\vert  \\ &
					\leq  \prod_{j=1}^{M-1} \lambda_1(\bzeta^{(j)})\left( 1 + \frac{a_j}{N} + O\left(\frac{1}{N^2}\right)\right) + \vert  \mbox{Tr}( \wt \cT_{M,\bzeta_M}) \vert\,.
				\end{split}
			\end{equation}
			Regarding the second term in the r.h.s of the above expression  we claim that  there exists a constant $c$ such that
			\begin{equation}
				\vert  \mbox{Tr}( \wt \cT_{M,\bzeta_M}) \vert \leq c
				\prod_{j=3}^{M-1}\lambda_1(\bzeta^{(j)})\left( 1 + \frac{c_j}{N} + O\left(\frac{1}{N^2}\right)\right)\,.
			\end{equation}
			To derive the above inequality first, we consider the operator $\cT_{2,1} = \cM_K\cT_{\bzeta^{(2)}}\cM_K\cT_{\bzeta^{(1)}}$, it is a compact operator and it is trace class since it is the composition of two different Hilbert--Schmidt operators. Let $\{\wt \lambda_n\}_{n\geq 1}$ be its eigenvalues numbered in such a way that $\{\vert \wt \lambda_n\vert \}_{n\geq 1}$ is a non  increasing  sequence and let $\{\varphi_n(\bz)\}_{n\geq 1}$  be  the corresponding eigenfunctions.   Then 
			
			\begin{equation}
				\begin{split}
					\vert  \mbox{Tr}( \wt \cT_{M,\bzeta_M}) \vert &= \left\vert \sum_{n\geq 1} \left( \varphi_n, \cM_K \cT_{\bzeta^{(M-1)}}\ldots\cM_K \cT_{\bzeta^{(3)}} \cT_{2,1} \varphi_n\right)\right\vert \\ 
					& \leq \sum_{n\geq 1} \vert \wt \lambda_{n} \vert\, \left\vert \left( \varphi_n, \cM_K \cT_{\bzeta^{(M-1)}}\ldots\cM_K \cT_{\bzeta^{(3)}} \varphi_n\right) \right\vert \\ &\leq \left \vert \left \vert\cM_K \cT_{\bzeta^{(M-1)}} \cM_K \ldots\cM_K\cT_{\bzeta^{(3)}} \right\vert\right \vert\sum_{n\geq 1} \vert \wt \lambda_{n} \vert  \,.
				\end{split}
			\end{equation}
			Since $\cT_{2,1}$ is trace class, it is a classical result that \cite{operatorbook}
			\begin{equation}
				\sum_{n\geq 1} \vert \wt \lambda_{n} \vert \leq \sum_{n\geq 1}  s_{n} := \vert\vert \cT_{2,1} \vert\vert_{\textrm{Trace}}\,,
			\end{equation}
			where $\{s_n\}_{n\geq 1}$  are  the singular values of the operator $\cT_{2,1}$. Furthermore, since $\cT_{2,1}$ is the composition of  two Hilbert--Schmidt operators, we have the following inequality
			\begin{equation}
				\vert\vert \cT_{2,1} \vert\vert_{\textrm{Trace}} \leq \vert \vert \cM_K\cT_{\bzeta^{(2)}}\vert \vert_{\textrm{HS}}\vert \vert\cM_K\cT_{\bzeta^{(1)}} \vert \vert_{\textrm{HS}} \leq\wt  c\,,
			\end{equation}
			where $\vert \vert \cdot \vert\vert_{\textrm{HS}}$ is the Hilbert--Schmidt norm and $\wt c$ is a positive constant uniformly bounded in $N$.
			Thus, applying the previous chain of inequalities and the same argument as in \eqref{eq:long}, we deduce that 
			\begin{equation}
				\vert  \mbox{Tr}( \wt \cT_{M,\bzeta_M}) \vert \leq c \prod_{j=3}^{M-1}\lambda_1(\bzeta^{(j)})\left( 1 + \frac{c_j}{N} + O\left(\frac{1}{N^2}\right)\right)\,,
			\end{equation}
			so we conclude our proof.
		}
		
	\end{proof}

	Applying Proposition \eqref{prop:stima_brutta} to \eqref{eq:free_energy_trace_00} we obtain that:
	
	\begin{equation}
		\label{eq:limite_FE}
		\begin{split}
			F_{HT}(V,\beta) &= - \lim_{N\to \infty} \frac{1}{N} \ln\left(  \mbox{Tr}( \wt \cT_{M,\bzeta_M})\right)=\\ &  - \lim_{N\to\infty}\frac{1}{N}\ln \left( \sum_{n\geq 1} \left(\psi_n(\bz, V,\bzeta^{(1)}), \wt \cT_{M,\bzeta_M} \psi_n(\bz, V,\bzeta^{(1)}) \right) \right) \\
			& = -  \lim_{N\to \infty}\Bigg[ \frac{1}{N} \ln\left(\prod_{j=1}^{M-1} \lambda_1(V,\bzeta^{(j)})\left( 1 + \frac{a_j}{N} + O\left(\frac{1}{N^2}\right)\right)\right) \\ &  +\frac{1}{N} \ln\left(1 + \frac{ \sum_{\ell\geq 2} \left( \psi_{\ell}(\bz,V,\bzeta^{(1)}), \wt \cT_{M,\bzeta_M}  \psi_{\ell}(\bz,V,\bzeta^{(1)}) \right)}{\prod_{j=1}^{M-1} \lambda_1(V,\bzeta^{(j)})\left( 1 + \frac{a_j}{N} + O\left(\frac{1}{N^2}\right)\right)} \right)\Bigg]\,,
		\end{split}
	\end{equation}
	applying the  inequality  \eqref{eq:second_goal} of  Proposition \eqref{prop:stima_brutta}, we deduce that the last term  in the above relation goes to zero as $N\to\infty$ and we obtain that
	\begin{equation}
		F_{HT}(V,\beta) = - \lim_{N\to \infty} \frac{1}{N} \ln\left(\prod_{j=1}^{M-1} \lambda_1(V,\bzeta^{(j)}) \right) \,.
	\end{equation}
	Since  {$\lambda_1(V,\bzeta^{(j)})$} is positive and an analytic  function of the parameter $\bzeta^{(j)}$, 
	we approximate the vector $\bzeta^{(j)}$ with the  vector $(1- \frac{jK}{N})\overbrace{(\beta,\beta,\dots \beta)}^{2K}=(1- \frac{jK}{N})\boldsymbol{\beta}$ and deduce that 
	$ \lambda_{1}(V,\bzeta^{(j)}) =  \lambda_{1}\left(V,\boldsymbol{\beta}\left( 1- \frac{jK}{N}\right)\right) + O(N^{-1})$. Therefore, we can rewrite  \eqref{eq:limite_FE} as 
	
	\begin{equation}
		\begin{split}
			F_{HT}(V,\beta)& = - \lim_{N\to \infty}\frac{1}{N}\sum_{j=1}^{M-1} \ln\left(   \lambda_{1}\left(V,\boldsymbol{\beta}\left( 1- \frac{jK}{N}\right)\right)\right) = \\ &- \frac{1}{K} \int_{0}^{1} \ln\left( \lambda_{1}\left(V,\boldsymbol{\beta}x\right)\right) \di x\, .
		\end{split}
	\end{equation}
	This, combined with \eqref{eq:FAL},   leads to \eqref{eq:rel_free}.  Moreover, as a consequence of the last relation, we deduce that $F_{HT}(V,\beta)$ is analytic in $\beta$ for $\beta>0$.

	We notice that the proof   of Proposition \ref{PROP:MOMENT_RELATION} is heavily based on the assumption that the potential $V$ that we are considering is of finite range, otherwise our approach would not work. 
	
	We now prove the moments relations \eqref{eq:moments_free_energy}.  For this purpose we have to prove the relations
	\begin{align}
		\label{eq:rel_simp_HT}
		\int_\T \cos( \theta m) \mu^\beta_{HT}(\di \theta) &= \partial_t  F_{HT}\left(V + \frac{t}{2} \Re(z^m),\beta\right)_{\vert_{t=0}}\, , \\
		\label{eq:rel_simp_AL}
		\int_\T \cos(\theta m) \mu^\beta_{AL}(\di \theta) &= \partial_t  F_{AL}\left(V + \frac{t}{2}\Re(z^m),\beta\right)_{\vert_{t=0}}\,. 
	\end{align}
	Analogous relation can be written for the imaginary part of the moments.
	We focus on \eqref{eq:rel_simp_HT}. From Remark \ref{rem:free_energy}, we know that $F_{HT}(V,\beta) = \cF^{(V;\beta)}(\mu^\beta_{HT}(\theta))$,   where   the functional $\cF^{(V,\beta)}$ is defined in \eqref{eq:functional} and  $\mu^\beta_{HT}(\theta)$ is the density of states of the Circular $\beta$  ensemble at high-temperature.
	We write the Euler-Lagrange equation for this functional, getting that $\mu^\beta_{HT}(\theta)$ satisfies:	
	\begin{equation}
		\label{eq:euler-lagrange}
		2V(\theta) -2\beta \int_\T \ln\left(\sin\left(\frac{\vert \theta-\gamma\vert }{2}\right)\right) \mu^\beta_{HT}(\gamma)\di \gamma + \ln(\mu^\beta_{HT}(\theta)) + C(V,\beta) = 0\, , 
	\end{equation}
	where $C(V,\beta)$ is a constant not depending on $\theta$. 
	
	Now let us consider the   functional  corresponding  to the  potential $\wt V(\theta) = V(\theta) + \frac{t}{2} \cos( m\theta)$:
	
	\begin{equation}
		\begin{split}
			\cF^{\left(V(\theta) + \frac{t}{2}\cos(m\theta), \beta\right)}(\mu) & = 2\int_{\T} V(\theta) \mu(\theta) \di \theta + t \int_{\T} \cos( m\theta) \mu(\theta) \di \theta \\
			& +  \int_{\T}\ln\left(\mu(\theta)\right) \mu(\theta)\di \theta \\ & - \beta \int\int_{\T\times \T} \ln\sin\left(\frac{\vert \theta-\gamma\vert }{2}\right)\mu(\theta)\mu(\gamma)\di \theta\di \gamma  +\ln(2\pi)\,.
		\end{split}
	\end{equation}
	Also this functional has a unique minimizer   that we denote by $\mu^{(t)}(\theta)$,  with $\mu^{(0)}(\theta) = \mu^\beta_{HT}(\theta)$. Evaluating the above   functional at $\mu^{(t)}(\theta)$, and computing its derivative at $t=0$,  we deduce the following relation:
	\begin{equation}
		\label{eq:functional_der}
		\begin{split}
			\partial_t 	\cF^{\left(V(\theta) + \frac{t}{2}\cos( m\theta), \beta\right)}&(\mu^{(t)})_{\vert_{t=0}} = 2\int_{\T} V(\theta) \partial_t \mu^{(t)}(\theta)_{\vert_{t=0}} \di \theta + \int_{\T} \cos( m\theta) \mu^\beta_{HT}(\theta) \di \theta  \\ &\quad  -  2\beta \int\int_{\T\times \T} \ln\sin\left(\frac{\vert \theta-\gamma\vert }{2}\right)\mu^\beta_{HT}(\gamma)\partial_t \mu^{(t)}(\theta)_{\vert_{t=0}}\di \theta\di \gamma  \\ &\quad + \int_{\T}\ln\left(\mu^\beta_{HT}(\theta)\right) \partial_t \mu^{(t)}(\theta)_{\vert_{t=0}}\di \theta\, .
		\end{split}
	\end{equation}
	Testing \eqref{eq:euler-lagrange} against $\partial_t \mu^{(t)}(\theta)_{\vert_{t=0}}$ we obtain 
	\begin{equation}
		\begin{split}
			& 2\int_{\T} V(\theta) \partial_t \mu^{(t)}(\theta)_{\vert_{t=0}} \di \theta   - 2\beta \int\int_{\T\times \T} \ln\sin\left(\frac{\vert \theta-\gamma\vert }{2}\right)\mu^\beta_{HT}(\gamma)\partial_t \mu^{(t)}(\theta)_{\vert_{t=0}}\di \theta\di \gamma  \\ & + \int_{\T}\ln\left(\mu^\beta_{HT}(\theta)\right) \partial_t \mu^{(t)}(\theta)_{\vert_{t=0}}\di \theta =0\,,
		\end{split}
	\end{equation}
	where we have used   $\int_{\T} \partial_t \mu^{(t)}(\theta) \di \theta = 0$. 
	Thus, we can simplify \eqref{eq:functional_der} as :
	\begin{equation}
		\partial_t 	\cF^{\left(V(\theta) + \frac{t}{2}\cos( m\theta), \beta\right)}(\mu)_{\vert_{t=0}} =   \int_{\T} \cos( m\theta)\mu^\beta_{HT}(\theta) \di \theta\, ,
	\end{equation}
	which is equivalent to \eqref{eq:rel_simp_HT}.

	To complete the proof of Proposition \ref{PROP:MOMENT_RELATION} we have to show that \eqref{eq:rel_simp_AL} holds.
	From the definition of mean density of states \eqref{eq:mean_density_general} we obtain that:
	
	\begin{equation}
		\label{eq:fatica_finale}
		\begin{split}
			\int_\T \cos( m\theta)\mu^\beta_{AL}(\di \theta)&= \lim_{N\to \infty} \frac{\meanval{\Re(\mbox{Tr}(\cE^m))}}{2N}
			\\& =  - \lim_{N\to \infty} \frac{\partial_t\left(Z_N^{(AL)}\left(V + \frac{t}{2}\Re(z^m),\beta\right)\right)_{\vert_{t=0}}}{NZ_N^{(AL)}(V ,\beta)}\, ,
		\end{split}
	\end{equation}
	where the expected value is taken with respect to the generalized Gibbs ensemble of the Ablowitz--Ladik lattice.  A similar equation holds for the imaginary part of the moment.
	
	Let's focus on the numerator, first we notice that we can assume that $\Re(z^m)$ and $V$ to have the same range $K$.  
	The more general case can be treated in the same way. Differentiating the partition function we obtain
	
	\begin{equation}
		\begin{split}
			\partial_t&\left(Z_N^{(AL)}\left(V + \frac{t}{2}\Re(z^m),\beta\right)\right)_{\vert_{t=0}} = \frac{1}{2}\int_{\D^N}\Re(\mathrm{Tr}(\cE^m))\prod_{j=1}^{N} \left(1-\vert \alpha_j\vert ^2\right)^{\beta-1}\\ &\qquad\times \exp\left(- \sum_{\ell=1}^{M} W(\wt \balpha_\ell, \wt \balpha_{\ell+1}){ -W_1(\alpha_{KM+1}, \ldots,\alpha_{KM+L}, \wt\balpha_1)}\right)  \di^2 \balpha\,.
		\end{split}
	\end{equation}
	Due to the structure of the measure and of the Lax matrix $\cE$, we deduce that there exist two   smooth  functions $g: \D^K\times \D^K \to \R$  and  $g_1 \,:\, \overline{\D}^{L}\times \overline{\D}^{K} \to \R $     such that 
	
	\begin{equation}
		\label{eq:primo_step}
		\begin{split}
			\partial_t&\left(Z_N^{(AL)}\left(V + \frac{t}{2}\Re(z^m),\beta\right)\right)_{\vert_{t=0}}   = \int_{\D^N} \di^2 \balpha \prod_{j=1}^{N} \left(1-\vert \alpha_j\vert ^2\right)^{\beta-1}\\ 
			&\qquad\times\left[\sum_{\ell=1}^{M} g(\wt \balpha_\ell, \wt \balpha_{\ell+1})+g_1(\alpha_{KM+1}, \ldots,\alpha_{KM+L}, \wt\balpha_1)\right]\\
			&\qquad\times \exp\left(- \sum_{\ell=1}^{M} W(\wt \balpha_\ell, \wt \balpha_{\ell+1}){ -W_1(\alpha_{KM+1}, \ldots,\alpha_{KM+L}, \wt\balpha_1)}\right)  \di^2 \balpha\,.
		\end{split}
	\end{equation} 
	%Using the symmetry we can reduce the integral to the form
	%\begin{equation}
	%	\label{eq:primo_step}
	%	\begin{split}
		%			\partial_t&\left(Z_N^{(AL)}\left(V + \frac{t}{2}\Re(z^m),\beta\right)\right)_{\vert_{t=0}}   = \int_{\D^N} \di^2 \balpha \prod_{j=1}^{N} \left(1-\vert \alpha_j\vert ^2\right)^{\beta-1}\\ 
		%			&\qquad\times\left[M g(\wt \balpha_{M-1}, \wt \balpha_M)+g_1(\alpha_{KM+1}, \ldots,\alpha_{KM+L}, \wt\balpha_1)\right]\\
		%			&\qquad\times \exp\left(- \sum_{\ell=1}^{M} W(\wt \balpha_\ell, \wt \balpha_{\ell+1}){ -W_1(\alpha_{KM+1}, \ldots,\alpha_{KM+L}, \wt\balpha_1)}\right)  \di^2 \balpha\,.
		%	\end{split}
	%\end{equation} 
	Proceeding as in the proof of Proposition \ref{eq:primo_step}, defining the operator  $\cT_{\bbeta}^{(t)}$ as
	
	\begin{equation}
		\cT_{\bbeta}^{(t)} = \cT_{\bbeta}e^{tg(\balpha_{M-1},\balpha_{M})}\,,
	\end{equation}
	for $N$ big enough, \eqref{eq:primo_step} is asymptotic to
	
	\begin{equation}
		\partial_t\left(Z_N^{(AL)}\left(V + \frac{t}{2}\Re(z^m),\beta\right)\right)_{\vert_{t=0}}   \sim M \mathrm{Tr}\left( \partial_t(\cT_{\bbeta}^{(t)})_{\vert_{t=0}}\cT_{\bbeta}^{M-2}\right)\,.
	\end{equation}
	Following the same reasoning as in the previous proof, in view of the analyticity of $\lambda_1(V,\bbeta)$, we deduce that the previous equation is asymptotic to
	
	\begin{equation}
		\label{eq:derivata}
		\partial_t\left(Z_N^{(AL)}\left(V + \frac{t}{2}\Re(z^m),\beta\right)\right)_{\vert_{t=0}}   \sim M \lambda_1(V,\bbeta)^{M-2}\partial_t \lambda_1(V+t/2\Re(z^m),\bbeta)_{\vert_{t=0}}\,.
	\end{equation}
	Exploiting \eqref{eq:eigenvalues}-\eqref{eq:FAL} and \eqref{eq:derivata}, we can rewrite \eqref{eq:fatica_finale} as:
	
	\begin{equation}
		\begin{split}
			&  \lim_{N\to \infty} \frac{\partial_t\left(Z_N^{(AL)}\left(V + \frac{t}{2}\Re(z^m),\beta\right)\right)_{\vert_{t=0}}}{NZ_N^{(AL)}(V,\beta)} \\
			&\qquad=  \lim\limits_{N\to \infty} \frac{ M \lambda_1(V,\bbeta)^{M-2}\partial_t \lambda_1\left(V+\frac{t}{2}\Re(z^m),\bbeta\right)_{\vert_{t=0}}}{ N\sum_{\ell=1}^\infty \lambda^{M-1}_\ell(V,\beta)} \\
			&\qquad=  \frac{\partial_t\lambda_{1}\left(\left(V + \frac{t}{2}\Re(z^m),\beta\right)\right)_{\vert_{t=0}}}{K \lambda_{1}(V,\beta)}  \\ &\qquad = -\partial_t\left(F_{AL}\left(V + \frac{t}{2}\Re(z^m),\beta\right)\right)_{\vert_{t=0}}\, .
		\end{split}
	\end{equation}
	Thus, we have completed the proof of Proposition \ref{PROP:MOMENT_RELATION}.
	$\qedsymbol$
	
	\section{Proof of lemma \ref{LEM:UNIQUE}}
	\label{app:B}
	We   prove \eqref{eq:R_def} for $k=1$ and  the  cases  $k>1$ easily follow.  For convenience we consider the more general case of $(\lambda,\eta,\beta)\in\C\times\C\times\{z\in\C\vert \Re\,z>0\}$.
	
	Let us define $R_1^{(s)}= \begin{pmatrix}
		f_{s}& h_{s} \\
		p_{s} & q_{s}
	\end{pmatrix}$ where $s\geq  1$. If follows from \eqref{Rks} that  	\begin{equation}
		\begin{pmatrix}
			f_{s}& h_{s} \\
			p_{s} & q_{s}
		\end{pmatrix} = 	\begin{pmatrix}
			f_{s-1}& h_{s-1} \\
			p_{s-1} & q_{s-1}
		\end{pmatrix}
		\begin{pmatrix}
			1 + \frac{\lambda\beta \eta}{s(s+\beta)} & \frac{\eta^2}{s(s+\beta+1)}\\
			1 & 0
		\end{pmatrix}\, ,\quad s> 1.
	\end{equation}
	Note that  in the  case $\eta=0$ the lemma is trivially satisfied.
	We will show that  all   the sequences $\{f_s,h_s,p_s,q_s\}_{s\geq 1}$   converge as $s\to \infty$, moreover $h_s,q_s \xrightarrow{s\to\infty}0$.
	First of all, we notice that $h_{s} =\frac{\eta^2 f_{s-1}}{s(s+\beta+1)} $ and $q_s = \frac{\eta^2 p_{s-1}}{s(s+\beta+1)} $, thus the convergence to zero of these two sequences follows from the convergence of $p_s$ and $f_s$ as $s\to\infty$. Moreover, the terms of the sequences  $\{f_s,p_s\}_{s\geq 1}$ obey to the  3-terms recurrence:
	\begin{equation}
		\label{recurrence_f}
		f_s = \left(1 + \frac{\lambda\beta \eta}{s(s+\beta)}\right)f_{s-1} +  \frac{\eta^2 }{(s-1)(s+\beta)}f_{s-2}\, ,
	\end{equation} 
	and the same holds for $p_s$ in place of $f_s$. Thus, we have just to prove that the sequence $\{f_s\}_{s\geq 1}$ converges.
	We assume that  $(\lambda,\eta,\beta)\in\Omega$ where $\Omega\subset \C\times\C\times\{z\in\C\vert \Re\,z>0\}$  is a compact set. With this assumption we can give a  bound   to $\vert f_s\vert $ from  above as:
	\begin{equation}
		\vert f_s\vert  \leq \left(1 + \frac{2\eta^2 + \vert \lambda \beta\eta\vert }{s(s+\beta)}\right)\max\left(\vert f_{s-1}\vert ,\vert f_{s-2}\vert \right)\, .
	\end{equation}
	Inductively, we deduce that there exists a constant $C=C(\Omega)$ such that:
	\begin{equation}
		\label{eq:bound_abs}
		\vert f_s\vert  \leq C\prod_{\ell = 1}^s \left(1 + \frac{2\eta^2 + \vert \lambda \beta\eta\vert }{\ell(\ell+\beta)}\right) \leq  C\prod_{\ell = 1}^\infty \left(1 + \frac{2\eta^2 + \vert \lambda \beta\eta\vert }{\ell(\ell+\beta)}\right) \, .
	\end{equation}
	Furthermore, the infinite product on the right-hand side of \eqref{eq:bound_abs} is convergent by a classical result, see for example \cite[Chapter XIII, Lemma 1]{Lang1999}, this implies that the sequence $\{f_s\}_{s\geq 1}$ is uniformly bounded. Moreover, we have that:
	\begin{equation}
		\label{eq:cauchy}
		\vert f_{s+1} - f_{s}\vert  \leq \frac{\vert f_{s}\lambda\beta \eta\vert }{(s+1)(s+1+\beta)} +  \frac{\eta^2 \vert f_{s-1}\vert }{s(s-1+\beta)} \leq \wt C  \frac{\eta^2 + \vert \lambda \beta\eta\vert }{s(s-1+\beta)}\, ,
	\end{equation} 
	for some constant $\tilde{C}>0$  that depends on the compact set $\Omega$.
	This last equation implies that the sequence $\{f_s\}_{s\geq 1}$ is a Cauchy sequence, thus it is convergent. So we get the claim \eqref{Rks}. The claim \eqref{eq:R_recurrence}  easily follows from 
	\eqref{Rks}. 
	
	Regarding the differentiability in the parameters $\lambda$, $\eta$ and $\beta$,  it follows from \eqref{recurrence_f}  that $f_s=f_s(\lambda,\eta,\beta)$
	is analytic in $\Omega$. Since $f_s(\lambda,\eta,\beta)\to f(\lambda,\eta,\beta)$  as $s\to \infty$  uniformly, then by Weierstrasse convergence theorem, $f(\lambda,\eta,\beta)$
	is analytic in $\Omega$.
	
	$\qedsymbol$
	\vspace{20pt}
	
\end{appendices}

\noindent	
{\bf Acknowledgements}

\noindent
We thank  Thomas Kriecherbauer, Ken McLaughlin,  Gaultier Lambert and Herbert Spohn   for the many discussions and suggestions  during our time at MSRI. We thank  Rostyslav Kozhan for useful  comments on the manuscript.
This material is based upon work supported by the National Science Foundation under Grant No. DMS-1928930 while the author participated in a program hosted by the Mathematical Sciences Research Institute in Berkeley, California, during the Fall 2021 semester  	"Universality and Integrability in Random Matrix Theory and Interacting Particle Systems".

This project has received funding from the European Union's H2020 research and innovation programme under the Marie Sk\l odowska--Curie grant No. 778010 {\em  IPaDEGAN}.  TG acknowledges the support of GNFM-INDAM group and the  research project Mathematical Methods in NonLinear
Physics (MMNLP), Gruppo 4-Fisica Teorica of INFN.  G.M. is financed by the KAM grant number 2018.0344.

\vskip 1cm
\noindent
{\bf Added note}. Independently,  H. Spohn  \cite{Spohn_AL}   discovered the  connection between the  Ablowitz--Ladik lattice  and the circular $\beta$ ensemble at high-temperature. He also calculated  Generalized Gibbs Ensemble averaged field  and currents and the  associated  hydrodynamic equations.


\begin{thebibliography}{10}
	
	%		\bibitem{Ablowitz1973}
	%		{\sc M.~J. Ablowitz, D.~J. Kaup, and A.~C. Newell}, {\em {Coherent pulse
			%				propagation, a dispersive, irreversible phenomenon}}, J. Math. Phys., 15
	%		(1973), pp.~1852--1858.
	%		
	\bibitem{Ablowitz1974}
	{\sc M.~J. Ablowitz and J.~F. Ladik}, {\em {Nonlinear differential-difference
			equations}}, J. Math. Phys., 16 (1974), pp.~598--603.
	
	\bibitem{Ablowitz1975}
	\leavevmode\vrule height 2pt depth -1.6pt width 23pt, {\em {Nonlinear
			differential-difference equations and Fourier analysis}}, J. Math. Phys., 17
	(1975), pp.~1011--1018.
	
	\bibitem{Ablowitz2003}
	{\sc M.~J. Ablowitz, B.~Prinari, and A.~D. Trubatch}, {\em {Discrete and
			Continuous Nonlinear Schr{\"{o}}dinger Systems}}, Cambridge University Press,
	dec 2003.
	
	\bibitem{AV}
	{\sc M.~Adler and P.~van Moerbeke}, {\em Integrals over classical groups,
		random permutations, {T}oda and {T}oeplitz lattices}, Comm. Pure Appl. Math.,
	54 (2001), pp.~153--205.
	
	\bibitem{Akemann2019}
	{\sc G.~Akemann and S.-S. Byun}, {\em The high temperature crossover for
		general 2{D} {C}oulomb gases}, J. Stat. Phys., 175 (2019), pp.~1043--1065.
	
	\bibitem{Bothner}  {\sc T.~Bothner}  On the origins of Riemann-Hilbert problems in mathematics.
	Nonlinearity 34 (2021), no. 4, R1–R73.
	
	\bibitem{Allez2012}
	{\sc R.~Allez, J.~P. Bouchaud, and A.~Guionnet}, {\em {Invariant {$\beta$}
			ensembles and the Gauss-Wigner crossover}}, Phys. Rev. Lett., 109 (2012),
	pp.~1--5.
	
	\bibitem{Allez2013}
	{\sc R.~Allez, J.~P. Bouchaud, S.~N. Majumdar, and P.~Vivo}, {\em {Invariant
			$\beta$-Wishart ensembles, crossover densities and asymptotic corrections to
			the Mar{\v{c}}enko-Pastur law}}, J. Phys. A Math. Theor., 46 (2013),
	pp.~1--26.
	
	\bibitem{KillipVisan}
	{\sc Y.~Angelopoulos, R.~Killip, and M.~Visan}, {\em {Invariant measures for
			integrable spin chains and an integrable discrete nonlinear Schr{\"o}dinger
			equation}}, SIAM J. Math. Anal., 52 (2020), pp.~135--163.
	
	\bibitem{Bai2010}
	{\sc Z.~Bai and J.~W. Silverstein}, {\em {Spectral Analysis of Large
			Dimensional Random Matrices}}, Springer Series in Statistics, Springer New
	York, New York, NY, 2010.
	
	\bibitem{BDJ}
	{\sc J.~Baik, P.~Deift, and K.~Johansson}, {\em On the distribution of the
		length of the longest increasing subsequence of random permutations}, J.
	Amer. Math. Soc., 12 (1999), pp.~1119--1178.
	
	\bibitem{BG2015}
	{\sc F.~Benaych-Georges and S.~P\'{e}ch\'{e}}, {\em Poisson statistics for
		matrix ensembles at large temperature}, J. Stat. Phys., 161 (2015),
	pp.~633--656.
	
	\bibitem{Buchstaber2015}
	{\sc V.~M. Buchstaber and S.~I. Tertychnyi}, {\em {Holomorphic solutions of the
			double confluent Heun equation associated with the RSJ model of the Josephson
			junction}}, Theor. Math. Physics(Russian Fed., 182 (2015), pp.~329--355.
	
	\bibitem{Cafasso}
	{\sc M.~Cafasso}, {\em Matrix biorthogonal polynomials on the unit circle and
		non-abelian {A}blowitz-{L}adik hierarchy}, J. Phys. A, 42 (2009), pp.~365211,
	20.
	\bibitem{Cafasso22} {\sc M.~Cafasso}, {\sc D.~Yang.} Tau-functions for the Ablowitz--Ladik hierarchy: the matrix-resolvent method. J. Phys. A 55 (2022), no. 20, Paper No. 204001, 16 pp.
	\bibitem{Cafasso21} {\sc M.~Cafasso}, {\sc T.~Claeys},   {\sc G.~Ruzza},  Airy kernel determinant solutions to the KdV equation and integro-differential Painlev\'e equations. Comm. Math. Phys. 386 (2021), no. 2, 1107–1153. 
	\bibitem{Cantero2005}
	{\sc M.~J. Cantero, L.~Moral, and L.~Vel{\'{a}}zquez}, {\em {Minimal
			representations of unitary operators and orthogonal polynomials on the unit
			circle}}, Linear Algebra Appl., 408 (2005), pp.~40--65.
	
	\bibitem{DKM}
	{\sc P.~Deift, T.~Kriecherbauer, and K.~T.-R. McLaughlin}, {\em New results on
		the equilibrium measure for logarithmic potentials in the presence of an
		external field.}, J. Approx. Theory, 95 (1998), pp.~388--475.
	
	\bibitem{Deift}
	{\sc P.~A. Deift}, {\em Orthogonal polynomials and random matrices: a
		{R}iemann-{H}ilbert approach}, vol.~3 of Courant Lecture Notes in
	Mathematics, New York University, Courant Institute of Mathematical Sciences,
	New York; American Mathematical Society, Providence, RI, 1999.
	
	\bibitem{Doyon_notes}
	{\sc B.~Doyon}, {\em {Lecture Notes On Generalised Hydrodynamics}}, SciPost
	Phys. Lect. Notes,  (2020), p.~18.
	
	\bibitem{Dumitriu2002}
	{\sc I.~Dumitriu and A.~Edelman}, {\em Matrix models for beta ensembles}, J.
	Math. Phys., 43 (2002), pp.~5830--5847.
	
	\bibitem{Duy2018}
	{\sc K.~T. Duy and T.~Shirai}, {\em {The mean spectral measures of random
			Jacobi matrices related to Gaussian beta ensembles}}, Electron. Commun.
	Probab., 20 (2015), pp.~1--15.
	
	\bibitem{Ercolani}
	{\sc N.~M. Ercolani and G.~Lozano}, {\em {A bi-Hamiltonian structure for the
			integrable, discrete non-linear Schr{\"o}dinger system}}, Physica D, 218
	(2006), pp.~105--121.
	
	\bibitem{FIKN}
	{\sc A.~S. Fokas, A.~R. Its, A.~A. Kapaev, and V.~Y. Novokshenov}, {\em
		Painlev{\'e} transcendents. The Riemann-Hilbert approach.}, American
	Mathematical Society, Providence, RI,, 2006.
	
	\bibitem{Forrester2010book}
	{\sc P.~J. Forrester}, {\em Log-gases and random matrices}, vol.~34 of London
	Mathematical Society Monographs Series, Princeton University Press,
	Princeton, NJ, 2010.
	
	\bibitem{forrester2021highlow}
	{\sc P.~J. Forrester}, {\em High-low temperature dualities for the classical
		{$\beta$}-ensembles}, Random Matrices Theory Appl., 11 (2022), pp.~Paper No.
	2250035, 25.
	
	\bibitem{Forrester2021}
	{\sc P.~J. Forrester and G.~Mazzuca}, {\em The classical {$\beta$}-ensembles
		with {$\beta$} proportional to {$1/N$}: from loop equations to {D}yson's
		disordered chain}, J. Math. Phys., 62 (2021), pp.~Paper No. 073505, 22.
	
	\bibitem{Gakhov1990}
	{\sc F.~D. Gakhov}, {\em Boundary value problems}, Dover Publications, Inc.,
	New York, 1990.
	\newblock Translated from the Russian, Reprint of the 1966 translation.
	
	\bibitem{Zelada2019}
	{\sc D.~Garc\'{\i}a-Zelada}, {\em A large deviation principle for empirical
		measures on {P}olish spaces: application to singular {G}ibbs measures on
		manifolds}, Ann. Inst. Henri Poincar\'{e} Probab. Stat., 55 (2019),
	pp.~1377--1401.
	
	\bibitem{GekhtmanNenciu}
	{\sc M.~Gekhtman and I.~Nenciu}, {\em {Multi-Hamiltonian Structurefor the
			Finite Defocusing Ablowitz--Ladik Equation}}, Comm. Pure Appl. Math., 62
	(2009), pp.~147--182.
	
	\bibitem{operatorbook}
	{\sc I.~Gohberg, S.~Goldberg, and N.~Krupnik}, {\em Traces and determinants of
		linear operators}, vol.~116 of Operator Theory: Advances and Applications,
	Birkh\"{a}user Verlag, Basel, 2000.
	
	\bibitem{Golinskii}
	{\sc L.~Golinskii}, {\em {S}chur flows and orthogonal polynomials on the unit
		circle}, Sb. Math., 197 (2006), pp.~1145 -- 1165.
	
	\bibitem{GGGM}
	{\sc T.~Grava, M.~Gisonni, G.~Gubbiotti, and G.~Mazzuca}, {\em Discrete
		integrable systems and random {L}ax matrices}, J. Stat. Phys., 190 (2023),
	pp.~Paper No. 10, 35.
	
	\bibitem{GKMM}
	{\sc T.~Grava, T.~Kriecherbauer, G.~Mazzuca, and K.~D. T.-R. McLaughlin}, {\em
		Correlation functions for a chain of short range oscillators.}, J. Stat.
	Phys., 183 (2021).
	
	\bibitem{Gross_Witten}
	{\sc D.~Gross and E.~Witten}, {\em Possible third-order phase transition in the
		large {N}-lattice gauge theory}, Physical Review D, 21 (1980), pp.~446 --
	453.
	
	\bibitem{guionnet2021large}
	{\sc A.~Guionnet and R.~Memin}, {\em Large deviations for {G}ibbs ensembles of
		the classical {T}oda chain}, Electron. J. Probab., 27 (2022), pp.~Paper No.
	46, 29.
	
	\bibitem{Hardy2020}
	{\sc A.~Hardy and G.~Lambert}, {\em C{LT} for circular beta-ensembles at high
		temperature}, J. Funct. Anal., 280 (2021), pp.~108869, 40.
	
	\bibitem{numpy}
	{\sc C.~R. Harris, K.~J. Millman, S.~J. van~der Walt, R.~Gommers, P.~Virtanen,
		D.~Cournapeau, E.~Wieser, J.~Taylor, S.~Berg, N.~J. Smith, R.~Kern, M.~Picus,
		S.~Hoyer, M.~H. van Kerkwijk, M.~Brett, A.~Haldane, J.~F. del R{\'{i}}o,
		M.~Wiebe, P.~Peterson, P.~G{\'{e}}rard-Marchant, K.~Sheppard, T.~Reddy,
		W.~Weckesser, H.~Abbasi, C.~Gohlke, and T.~E. Oliphant}, {\em Array
		programming with {NumPy}}, Nature, 585 (2020), pp.~357--362.
	
	\bibitem{matplotlib}
	{\sc J.~D. Hunter}, {\em Matplotlib: A 2d graphics environment}, Computing in
	Science \& Engineering, 9 (2007), pp.~90--95.
	
	\bibitem{KatoBook}
	{\sc T.~Kato}, {\em Perturbation theory for linear operators}, Classics in
	Mathematics, Springer-Verlag, Berlin, 1995.
	\newblock Reprint of the 1980 edition.
	
	\bibitem{Kevrekidis2001}
	{\sc P.~G. Kevrekidis, K.~{\O}. Rasmussen, and A.~R. Bishop}, {\em The
		{D}iscrete {N}onlinear {S}chr{\"{o}}dinger {E}quation: A {S}urvey of {R}ecent
		{R}esults}, Int. J. Mod. Phys. B, 15 (2001), pp.~2833--2900.
	
	\bibitem{Killip2004}
	{\sc R.~Killip and I.~Nenciu}, {\em {Matrix models for circular ensembles}},
	Int. Math. Res. Not., 2004 (2004), p.~2665.
	
	\bibitem{KN2007}
	\leavevmode\vrule height 2pt depth -1.6pt width 23pt, {\em C{MV}: the unitary
		analogue of {J}acobi matrices}, Comm. Pure Appl. Math., 60 (2007),
	pp.~1148--1188.
	
	\bibitem{Krumhansl1975}
	{\sc J.~A. Krumhansl and J.~R. Schrieffer}, {\em {Dynamics and statistical
			mechanics of a one-dimensional model Hamiltonian for structural phase
			transitions}}, Phys. Rev. B, 11 (1975), pp.~3535--3545.
	
	\bibitem{Lambert2021}
	{\sc G.~Lambert}, {\em {Poisson statistics for Gibbs measures at high
			temperature}}, Annales de l'Institut Henri Poincar{\'e}, Probabilit{\'e}s et
	Statistiques, 57 (2021), pp.~326 -- 350.
	
	\bibitem{Lang1999}
	{\sc S.~Lang}, {\em {Complex Analysis}}, vol.~103 of Graduate Texts in
	Mathematics, Springer New York, New York, NY, 1999.
	
	\bibitem{Lisovyy}
	{\sc O.~Lisovyy and A.~Naidiuk}, {\em Accessory parameters in confluent {H}eun
		equations and classical irregular conformal blocks}, Lett. Math. Phys., 111
	(2021), pp.~Paper No. 137, 28.
	
	\bibitem{Mazzuca2020}
	{\sc G.~Mazzuca}, {\em On the mean density of states of some matrices related
		to the beta ensembles and an application to the {T}oda lattice}, J. Math.
	Phys., 63 (2022), pp.~Paper No. 043501, 13.
	
	\bibitem{MM}
	{\sc K.~T.-R. McLaughlin and P.~D. Miller}, {\em The {$\overline{\partial}$}
		steepest descent method and the asymptotic behavior of polynomials orthogonal
		on the unit circle with fixed and exponentially varying nonanalytic weights},
	IMRP Int. Math. Res. Pap.,  (2006), pp.~Art. ID 48673, 1--77.
	
	\bibitem{Mehta2004book}
	{\sc M.~L. Mehta}, {\em Random matrices}, vol.~142 of Pure and Applied
	Mathematics (Amsterdam), Elsevier/Academic Press, Amsterdam, third~ed., 2004.
	
	\bibitem{SpohnMendl}
	{\sc C.~Mendl and H.~Sphon}, {\em { Low temperature dynamics of the
			one-dimensional discrete nonlinear Schr{\"o}dinger equation }}, J. Stat.
	Mech. Theory Exp., 8 (2015), pp.~P08028, 35 pp.
	
	\bibitem{Nakano2018}
	{\sc F.~Nakano and K.~D. Trinh}, {\em Gaussian beta ensembles at high
		temperature: eigenvalue fluctuations and bulk statistics}, J. Stat. Phys.,
	173 (2018), pp.~295--321.
	
	\bibitem{Nakano2020}
	\leavevmode\vrule height 2pt depth -1.6pt width 23pt, {\em Poisson statistics
		for beta ensembles on the real line at high temperature}, J. Stat. Phys., 179
	(2020), pp.~632--649.
	
	\bibitem{Nenciu2005}
	{\sc I.~Nenciu}, {\em {Lax pairs for the Ablowitz--Ladik system via orthogonal
			polynomials on the unit circle}}, Int. Math. Res. Not., 2005 (2005),
	pp.~647--686.
	
	\bibitem{Nijhoff2}
	{\sc F.~W. Nijhoff, V.~G. Papageorgiou, H.~W. Capel, and G.~R.~W. Quispel},
	{\em The lattice {G}elfand-{D}ikii hierarchy.}, Inverse Problems, 8 (1992),
	p.~5970621.
	
	%		\bibitem{Pakzad2019}
	%		{\sc C.~Pakzad}, {\em Poisson statistics at the edge of {G}aussian
		%			{$\beta$}-ensemble at high temperature}, ALEA Lat. Am. J. Probab. Math.
	%		Stat., 16 (2019), pp.~871--897.
	
	\bibitem{Peyrard1989}
	{\sc M.~Peyrard and A.~R. Bishop}, {\em {Statistical mechanics of a nonlinear
			model for DNA denaturation}}, Phys. Rev. Lett., 62 (1989), pp.~2755--2758.
	
	%		\bibitem{Nijhoff1}
	%		{\sc G.~Quispel, F.~Nijhoff, H.~Capel, and J.~Van Der~Linden}, {\em Linear
		%			integral equations and nonlinear difference-difference equations}, Physica A:
	%		Statistical Mechanics and its Applications, 125 (1984), pp.~344--380.
	
	\bibitem{Ronveaux}
	{\sc A.~Ronveaux}, {\em Heun's differential equations}, Oxford University
	Press, Oxford, 1995.
	
	\bibitem{SaffBook}
	{\sc E.~B. Saff and V.~Totik}, {\em Logarithmic potentials with external
		fields}, vol.~316 of Grundlehren der Mathematischen Wissenschaften
	[Fundamental Principles of Mathematical Sciences], Springer-Verlag, Berlin,
	1997.
	\newblock Appendix B by Thomas Bloom.
	
	\bibitem{Simon2005}
	{\sc B.~Simon}, {\em {Orthogonal Polynomials on the Unit Circle}}, vol.~54.1 of
	Colloquium Publications, American Mathematical Society, Providence, Rhode
	Island, 2005.
	
	\bibitem{Simon11}
	\leavevmode\vrule height 2pt depth -1.6pt width 23pt, {\em Szeg\H{o}'s theorem
		and its descendants}, M. B. Porter Lectures, Princeton University Press,
	Princeton, NJ, 2011.
	\newblock Spectral theory for $L^2$ perturbations of orthogonal polynomials.
	
	\bibitem{Spohn2019a}
	{\sc H.~Spohn}, {\em Generalized {G}ibbs ensembles of the classical {T}oda
		chain}, J. Stat. Phys., 180 (2020), pp.~4--22.
	
	\bibitem{Spohn_2021}
	{\sc H.~Spohn}, {\em Hydrodynamic equations for the {T}oda lattice}, arXiv
	preprint: 2101.06528,  (2021).
	
	\bibitem{Spohn_AL}
	{\sc H.~Spohn}, {\em Hydrodynamic equations for the {A}blowitz-{L}adik
		discretization of the nonlinear {S}chr\"{o}dinger equation}, J. Math. Phys.,
	63 (2022), pp.~Paper No. 033305, 21.
	
	\bibitem{Tertychniy2007}
	{\sc S.~I. Tertychniy}, {\em {General solution of overdamped Josephson junction
			equation in the case of phase-lock}}, Electron. J. Differ. Equations, 2007
	(2007), pp.~1--20.
	
	\bibitem{trinh2020beta}
	{\sc H.~D. Trinh and K.~D. Trinh}, {\em Beta {J}acobi ensembles and associated
		{J}acobi polynomials}, J. Stat. Phys., 185 (2021), pp.~Paper No. 4, 15.
	
	\bibitem{trinh2019beta}
	\leavevmode\vrule height 2pt depth -1.6pt width 23pt, {\em Beta {L}aguerre
		ensembles in global regime}, Osaka J. Math., 58 (2021), pp.~435--450.
	
	\bibitem{Trinh2019}
	{\sc K.~D. Trinh}, {\em Global spectrum fluctuations for {G}aussian beta
		ensembles: a {M}artingale approach}, J. Theoret. Probab., 32 (2019),
	pp.~1420--1437.
	
	\bibitem{Vanninksy} {\sc K. L. Vanninsky} An additional Gibbs’ State
	for the Cubic Schrödinger Equation on the Circle, 
	Comm on Pure and Applied Math., Vol. LIV, 0537–0582 (2001),  pp 537-582.
	\bibitem{ZaanenBook}
	{\sc A.~C. Zaanen}, {\em Riesz spaces. {II}}, vol.~30 of North-Holland
	Mathematical Library, North-Holland Publishing Co., Amsterdam, 1983.
	
	\bibitem{ZS}
	{\sc V.~Zkharov and A.~Shabat}, {\em Exact theory of two-dimensional
		self-focusing and one-dimensional self-modulation of waves in nonlinear
		media}, Soviet Physics JETP, 34 (1972), pp.~62--69.
	
\end{thebibliography}
\end{document}